\tikzstyle{decision} = [diamond, draw, fill=blue!20, 
\tikzstyle{block} = [rectangle, draw, fill=blue!20, 
\tikzstyle{line} = [draw, -latex']
\tikzstyle{cloud} = [draw, ellipse,fill=red!20, node distance=3cm,
\tikzset{main node/.style={circle,fill=blue!20,draw,minimum size=1cm,inner sep=0pt},  }
\begin{document}
\title{Diffusion Hypercontractivity via Generalized Density Manifold}
\author{Wuchen Li}
\email{wcli@math.ucla.edu}
\begin{abstract}
We prove a one-parameter family of diffusion hypercontractivity and present the associated Log-Sobolev, Poincar{\'e} and Talagrand inequalities. A mean-field type Bakry-{\'E}mery iterative calculus and volume measure based integration formula (Yano's formula) are presented. Our results are based on the interpolation among divergence functional, generalized diffusion process, and generalized optimal transport metric. As a result, an inequality among Pearson divergence (P), negative Sobolev metric ( $H^{-1}$) and generalized Fisher information functional (I), named P$H^{-1}$I inequality, is derived.  
\end{abstract}
\keywords{Information theory; Mean-field Bakry-{\'E}mery calculus; Generalized Log-Sobolev inequality; Generalized Poincar{\'e} inequality; Generalized Talagrand inequality; Generalized Yano's formula.}
\thanks{Wuchen Li is supported by AFOSR MURI (Grant No.~18RT0073).}
\maketitle
\section{Introduction}
Diffusion hypercontractivity plays essential roles in functional inequalities \cite{BGL,LSI} and information theory \cite{CoverThomas1991_elements,CsiszarShields2004_information}. It is often used to estimate the convergence rates of Markov chain Monte Carlo methods. Among these studies, Bakry--{\'E}mery criterions \cite{BE} provide sufficient conditions for showing convergences rates of diffusion processes and related inequalities. 
Recently, optimal transport provides the other viewpoint on this topic \cite{Villani2009_optimal}. In this viewpoint, the probability density space is embedded with an infinite-dimensional Riemannian metric, named Wasserstein metric \cite{ otto2001}. Here the density space with Wasserstein metric is named density manifold \cite{Lafferty}. Following this metric viewpoint, the diffusion hypercontractivity, in particular, the 
Bakery--{\'E}mery criterions, can be derived by Hessian operators of divergence functionals in density manifold \cite{OV}. This study has been extended to general base metric spaces \cite{Lott_Villani, strum}. On the other hand, the relation between local behavior of diffusion hypercontractivity (such as Poincar{\'e} inequality) and integral formula, known as Yano's formula \cite{yano1}, has been discovered in \cite{ChowLiZhou2018_entropy,Li-thesis}. It reveals the connection between integration formula on the base manifold and Riemannian calculus in density manifold.  

In this paper, we study the hypercontractivity of generalized diffusion processes, named Hessian transport stochastic differential equations \cite{Li}. See related information theory background in \cite{ZozorBrossier2015_debruijn}. Following the generalized (mobility) density manifold proposed in \cite{C1,O1}, the density manifold's Riemannian calculus \cite{LiG1} and geometric insights of inequalities provided in \cite{OV}, we introduce the generalized Bakry--{\'E}mery criterions in Theorem \ref{thm}. These criterions provide sufficient conditions for showing convergence rates of generalized diffusion processes and establishing generalized Log-Sobolev and Talagrand inequalities. In addition, a generalized Yano's formula in Theorem \ref{cor2} is derived, which provides a reference measure related integral formulas.  Using Yano's formula, we establish the generalized Poincar{\'e} inequality in Corollary \ref{GPI}. More importantly, a P$H^{-1}$I inequality is presented in Theorem \ref{cor3}. 

In literature, the generalized optimal transport metric has been proposed in \cite{O1} and many groups have studied associated generalized functional inequalities \cite{Gentil,O2}. Firstly, \cite{O2} studies functional inequalities for the classical Kolmogorov-Fokker-Planck equation, where the Bakry--{\'E}mery criterions are classical and generalized optimal transport metrics depend on the reference measure. Here we study transport metrics related stochastic processes and then build new functional inequalities. We introduce new Bakry--{\'E}mery criterions, in which the metric of density manifold in inequalities do not depend on the reference measure. For example, we obtain several functional inequalities related to the classical $H^{-1}$ metric. In addition, \cite{Gentil} formulates divergence functional related inequalities for a type of drift-diffusion processes. In this study, they apply the classical Bakry-{\'E}mery iterative calculus. While in this paper, we introduce a new mean field Bakry-{\'E}mery iterative calculus.

This paper is organized as follows. In section \ref{section2}, we state the main result of this paper. We establish hypercontractivity for generalized drift-diffusion processes and prove several functional inequalities. A generalized Yano's formula is also derived. In section \ref{section3}, we formulate the primary tool of the proof. In section \ref{section4}, the proof is presented. In section \ref{section5}, a generalized Bakry-{\'E}mery iterative calculus is presented.  
\section{Main result}\label{section2}
Given a compact and smooth Riemannian manifold $(M, (\cdot,\cdot))$ without boundary. Denote its volume form by $dx$, the Ricci curvature tensor by $\textrm{Ric}$, the gradient, divergence, Laplacian operators by $\nabla$, $\nabla\cdot$, $\Delta$ respectively,  and the Hessian operator by $\textrm{Hess}$.  

Given a reference probability density function $\mu\in C^{\infty}(M)$ with $\inf_{x\in M}\mu(x)>0$, consider the {\em $\gamma$--drift diffusion process}  
\begin{equation*}
dX_{\gamma, t}=-\frac{\gamma}{\gamma-1}\nabla\mu(X_{\gamma, t})^{\gamma-1}dt+\sqrt{2\mu(X_{\gamma, t})^{\gamma-1}}dB_t,
\end{equation*}
where $B_t$ is the standard Brownian motion in $M$ with the infinitesimal generator  
\begin{equation*}
L_\gamma \Phi=(\frac{\gamma}{\gamma-1}\nabla\mu^{\gamma-1}, \nabla \Phi)+\mu^{\gamma-1}\Delta \Phi, \quad \Phi\in C^{\infty}(M).
\end{equation*}
Consider the {\em $\gamma$--divergence functional}\footnote{It is often named $\alpha$--divergence with $\gamma=\frac{3-\alpha}{2}$. We use notation $\gamma$ for the simplicity of presentation.}
\begin{equation*}
\mathcal{D}_\gamma(\rho\|\mu)=\int f(\frac{\rho}{\mu})\mu dx,
\end{equation*}
where $f\colon [0,\infty)\rightarrow \mathbb{R}$ has the form 
\begin{equation*}
f(\rho)=\begin{cases}
\frac{1}{(1-\gamma)(2-\gamma)}(\rho^{2-\gamma}-1) & \gamma\neq 1, \gamma\neq 2\\
\rho\log\rho &\gamma=1\\
-\log\rho & \gamma=2.
\end{cases}
\end{equation*}
Consider the {\em $\gamma$--Fisher information functional}
\begin{equation*}
\mathcal{I}_\gamma(\rho\|\mu)=\int \|\nabla \log \frac{\rho}{\mu}\|^2 \rho^{\gamma}\mu^{2\gamma-2} dx.
\end{equation*}
Consider the {\em $\gamma$--Wasserstein distance}\footnote{This notation of Wasserstein-distance is first studied in \cite{O1}. When $\gamma=1$, we notice that the notation of $\mathcal{W}_{1}$ represents the classical $L^2$-Wasserstein distance, not the $L^1$-Wasserstein distance. }
\begin{equation*}
\mathcal{W}_\gamma(\rho,\mu)=\inf_{\Phi} \Big\{\int_0^1\sqrt{\int \|\nabla\Phi_t\|^2\rho_t^\gamma dx} dt\colon \partial_t\rho_t+\nabla\cdot(\rho_t^\gamma\nabla\Phi_t)=0,~\rho_0=\rho,~\rho_1=\mu\Big\},
\end{equation*}
where $\rho_t=\rho(t,x)$, $\Phi_t=\Phi(t,x)$ and the infimum is over all potential function $\Phi\in [0,1]\times M\rightarrow \mathbb{R}$. 

We next provide sufficient conditions for the hypercontractivity of $\gamma$--drift diffusion process and functional inequalities among $\gamma$--divergence, $\gamma$--Fisher information and $\gamma$--Wasserstein distance. 
\begin{theorem}[Generalized hypercontractivity]\label{thm}
Let $\gamma\in [0,1]$, if there exists a constant $\kappa>0$, such that
\begin{equation}\label{condition}
\mu^{\gamma-1}\textrm{Ric}-\frac{1}{\gamma-1}\textrm{Hess}\mu^{\gamma-1}-\Delta \mu^{\gamma-1}+\frac{1}{8}\gamma(\gamma-1) \|\nabla\log\mu\|^2\mu^{\gamma-1} \succeq \kappa.
\end{equation}
 Let $\rho_0$ be a smooth initial distribution and $\rho_t$ be the probability density function of $\gamma$--drift diffusion process, then 
\begin{equation}\label{hyper}
\mathcal{D}_\gamma (\rho_t\|\mu) \leq e^{-2\kappa t} \mathcal{D}_\gamma(\rho_0\|\mu). 
\end{equation}
Moreover, for any smooth probability density function $\rho\in C^{\infty}(M)$ with $\inf _{x\in M}\rho(x)>0$, the generalized Log-Sobolev inequality holds
\begin{equation}\label{inequality}
\mathcal{D}_\gamma(\rho\|\mu) \leq \frac{1}{2\kappa}\mathcal{I}_\gamma(\rho\|\mu),
\end{equation}
and the generalized Talagrand inequality holds 
\begin{equation}\label{Talagrand}
\mathcal{W}_\gamma(\rho, \mu)\leq\sqrt{\frac{2\mathcal{D}_\gamma(\rho\|\mu)}{\kappa}}.
\end{equation}
\end{theorem}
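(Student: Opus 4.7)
The plan is to carry out the Otto--Villani programme on the generalized density manifold $(\mathcal{P}(M),\mathcal{W}_\gamma)$: identify the $\gamma$--drift diffusion as the $\mathcal{W}_\gamma$-gradient flow of $\mathcal{D}_\gamma(\cdot\|\mu)$, show that this functional is geodesically $\kappa$-convex under hypothesis \eqref{condition}, and then read off the three conclusions in sequence.

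\emph{Step 1 (gradient flow and de Bruijn identity).} First I would rewrite the Fokker--Planck equation associated with $L_\gamma$ in the continuity form
\[
\partial_t\rho_t=\nabla\cdot(\rho_t^\gamma\nabla\Phi_t),\qquad \Phi_t=\tfrac{\delta\mathcal{D}_\gamma}{\delta\rho}(\rho_t\|\mu),
\]
which exhibits the $\gamma$--drift diffusion as the $\mathcal{W}_\gamma$-gradient flow of $\mathcal{D}_\gamma$. Since $\tfrac{\delta\mathcal{D}_\gamma}{\delta\rho}=f'(\rho/\mu)$ and $\rho^\gamma\nabla f'(\rho/\mu)$ matches the integrand of $\mathcal{I}_\gamma$ after factoring out $\nabla\log(\rho/\mu)$, a single integration by parts yields the generalized de Bruijn identity $\frac{d}{dt}\mathcal{D}_\gamma(\rho_t\|\mu)=-\mathcal{I}_\gamma(\rho_t\|\mu)$.

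\emph{Step 2 (Bakry--{\'E}mery estimate).} The crux is to differentiate once more and establish
\[
\tfrac{d}{dt}\mathcal{I}_\gamma(\rho_t\|\mu)\leq -2\kappa\,\mathcal{I}_\gamma(\rho_t\|\mu).
\]
To this end I would expand the time derivative of $\mathcal{I}_\gamma$ into a pointwise quadratic form in $\nabla\Phi_t$ and $\textrm{Hess}\,\Phi_t$, apply the Bochner identity so that the leading contribution takes the classical Bakry--{\'E}mery shape $\|\textrm{Hess}\,\Phi_t\|^2+\textrm{Ric}(\nabla\Phi_t,\nabla\Phi_t)$, and then reorganize the remaining pieces produced by the non-constant mobility $\rho_t^\gamma$ and by the reference weight $\mu^{\gamma-1}$. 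After completing squares in $\nabla\Phi_t$ and $\nabla\log\mu$, the residual pointwise lower bound should be exactly the tensor on the left of \eqref{condition}, with the correction $\tfrac{1}{8}\gamma(\gamma-1)\|\nabla\log\mu\|^2\mu^{\gamma-1}$ accounting for the mixed cross terms between mobility and drift. This is the main obstacle: the classical $\Gamma_2$ calculus must be replaced by a mean-field variant adapted to the nonlinear mobility $\rho^\gamma$, which is the content of the primary tools developed in the later sections of the paper.

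\emph{Step 3 (harvest).} Given Steps 1 and 2, the generalized Log-Sobolev inequality \eqref{inequality} follows by integrating $\tfrac{d}{dt}\mathcal{I}_\gamma\leq -2\kappa\mathcal{I}_\gamma$ from $t$ to $\infty$ and using $\mathcal{I}_\gamma(\rho_\infty\|\mu)=0$ to deduce $2\kappa\mathcal{D}_\gamma(\rho_t\|\mu)\leq \mathcal{I}_\gamma(\rho_t\|\mu)$; evaluation at $t=0$ gives \eqref{inequality}. The hypercontractivity bound \eqref{hyper} is then immediate from Grönwall applied to $\tfrac{d}{dt}\mathcal{D}_\gamma=-\mathcal{I}_\gamma\leq -2\kappa\mathcal{D}_\gamma$. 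Finally, the generalized Talagrand inequality \eqref{Talagrand} is obtained by bounding the $\mathcal{W}_\gamma$-length of the gradient-flow curve from $\rho$ to $\mu$: using $\|\dot\rho_t\|_{\mathcal{W}_\gamma}^2=\mathcal{I}_\gamma(\rho_t\|\mu)$ together with Cauchy--Schwarz in $t$ and the exponential decay of $\mathcal{D}_\gamma$ yields \eqref{Talagrand}; equivalently, an HWI-type argument along $\mathcal{W}_\gamma$-geodesics based on the $\kappa$-convexity of $\mathcal{D}_\gamma$ gives the same bound.
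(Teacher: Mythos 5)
Your overall architecture coincides with the paper's: view the $\gamma$--drift diffusion as the $\mathcal{W}_\gamma$--gradient flow of $\mathcal{D}_\gamma(\cdot\|\mu)$, obtain the de Bruijn identity $\frac{d}{dt}\mathcal{D}_\gamma=-\mathcal{I}_\gamma$, prove a second-derivative (Hessian/$\Gamma_2$) lower bound giving $\frac{d}{dt}\mathcal{I}_\gamma\leq -2\kappa\,\mathcal{I}_\gamma$, and then harvest \eqref{hyper}, \eqref{inequality}, \eqref{Talagrand} in the Otto--Villani manner. Your Steps 1 and 3 are fine and match the paper; in particular the paper proves the Talagrand inequality with the monotone function $\Psi(t)=\mathcal{W}_\gamma(\rho_0,\rho_t)+\sqrt{2\mathcal{D}_\gamma(\rho_t\|\mu)/\kappa}$, which is the same length-of-the-gradient-flow-curve argument you sketch.

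The gap is in Step 2, and it is precisely where the content of the theorem lives. When one computes $\textrm{Hess}_g\mathcal{D}_\gamma$ in the $\mathcal{W}_\gamma$ geometry (Lemma \ref{lemma7}), the quadratic form is not a function of $\Phi$ and $\mu$ alone: besides the Bochner block $\mu^{\gamma-1}\big(\textrm{Ric}(\nabla\Phi,\nabla\Phi)+\|\textrm{Hess}\Phi\|^2\big)$ and the $\Delta\mu^{\gamma-1}$, $\textrm{Hess}\mu^{\gamma-1}$ corrections, there is a residual term $\gamma(\gamma-1)\mu^{\gamma-1}J(\Phi,\Phi)$ in which $J$ is built from $\nabla\log\rho$ as well as $\nabla\log\mu$. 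Because $J$ depends on the current density, there is no pointwise lower bound by the tensor in \eqref{condition} valid for all directions $\Phi$, and completing squares in $\nabla\Phi$ and $\nabla\log\mu$ alone cannot eliminate the $\nabla\log\rho$ cross terms. The paper's resolution (Lemma \ref{lemma8}) is to restrict to the gradient-flow direction, where $\nabla\Phi\propto(\rho/\mu)^{1-\gamma}\nabla\log(\rho/\mu)$; writing $a=\nabla\log(\rho/\mu)$ and $a_0=\nabla\log\mu$, an explicit completion of squares gives $J/\|\nabla\Phi\|^2\leq\frac{1}{8}\|a_0\|^2$ uniformly in $\rho$. The hypothesis $\gamma\in[0,1]$ is then essential: since $\gamma(\gamma-1)\leq 0$ there, this upper bound on $J$ becomes a lower bound on the Hessian with the correction $\frac{1}{8}\gamma(\gamma-1)\|\nabla\log\mu\|^2\mu^{\gamma-1}$, which is exactly how the constant $\frac{1}{8}$ and the range restriction in \eqref{condition} arise (and why the method fails for $\gamma\notin[0,1]$, cf.\ Remark \ref{rmk9}). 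Your proposal never uses $\gamma\in[0,1]$, asserts a pointwise bound that does not hold for general $\Phi$, and defers the controlling estimate to tools it does not supply; as written, Step 2 does not close.
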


\begin{example}[Kullback--Leibler divergence]
Consider $\gamma=1$, $f(\rho)=\rho\log \rho$. Here the $\mathcal{D}_1=\int \rho \log\frac{\rho}{\mu}dx$ forms the classical Kullback--Leibler divergence function (relative entropy),
and $\mathcal{I}_1=\int \|\nabla\log\frac{\rho}{\mu}\|^2\rho dx$ is the classical relative Fisher information, $dX_{1,t}=-\nabla\log\mu(X_{1,t})dt+\sqrt{2}B_t$ is the classical Langevin process, and $\mathcal{W}_1$ is the classical $L^2$-Wasserstein distance.   
Here the condition \eqref{condition} forms 
\begin{equation*}
\textrm{Ric}-\textrm{Hess}\log\mu \succeq \kappa, \quad \kappa>0,
\end{equation*}
which is the classical Bakry--{\'E}mery criterion. Under this condition, the distribution of drift diffusion process $X_{1,t}$ converges to $\mu$; the Log--Sobolev inequality \eqref{inequality} holds 
\begin{equation*}
\int \rho\log\frac{\rho}{\mu} dx\leq \frac{1}{2\kappa}\int \|\nabla\log\frac{\rho}{\mu}\|^2\rho dx;
\end{equation*}
and the Talagrand inequality holds 
\begin{equation*}
\mathcal{W}_1(\rho, \mu)\leq \sqrt{\frac{2\mathcal{D}_1(\rho\|\mu)}{\kappa}}.
\end{equation*}
Here, if $M$ is Ricci flat, i.e. $\textrm{Ric}=0$ and we denote $\mu(x)=e^{-V(x)}$, the condition \eqref{condition} forms $\textrm{Hess}V\succeq\kappa$.
\end{example}
\begin{example}[Pearson divergence]
Consider $\gamma=0$, $f(\rho)=\frac{1}{2}(\rho^2-1)$. Here $\mathcal{D}_0=\frac{1}{2}\int (\frac{\rho}{\mu}-1)^2\mu dx$ is named Pearson divergence function, $\mathcal{I}_0=\int\|\nabla\log\frac{\rho}{\mu}\|^2\mu^{-2}dx$ is the $0$--Fisher information and 
$dX_{0,t}=\sqrt{2\mu^{-1}(X_t)}dB_t$ is the $0$--drift diffusion process. The condition \eqref{condition} forms 
\begin{equation*}
\mu^{-1}\textrm{Ric}+\textrm{Hess}\mu^{-1}-\Delta \mu^{-1} \succeq\kappa,\quad \kappa>0.
\end{equation*}
Under this condition, the distribution of drift diffusion process $X_{0,t}$ converges to $\mu$ and the generalized Log--Sobolev inequality \eqref{inequality} holds
\begin{equation*}
\frac{1}{2}\int (\frac{\rho}{\mu}-1)^2\mu dx \leq \frac{1}{2\kappa}\int\|\nabla\log\frac{\rho}{\mu}\|^2\mu^{-2}dx.
\end{equation*}
\end{example}

We also show a new integration identity, which is found in the proof of Theorem \ref{thm}. 
\begin{theorem}[Generalized Yano's formula]\label{cor2}
Denote $\Phi\in C^{\infty}(M)$, then
\begin{equation*}
\begin{split}
&\int \mu^{-1}\Big(\nabla\cdot(\mu^{\gamma}\nabla\Phi)\Big)^2 dx\\
=&\int \mu^{\gamma}\Big\{(\mu^{\gamma-1}\textrm{Ric}-\Delta \mu^{\gamma-1}-\frac{1}{\gamma-1}\textrm{Hess}\mu^{\gamma-1})(\nabla\Phi, \nabla\Phi)+\mu^{\gamma-1}\|\textrm{Hess}\Phi\|^2\\
&\qquad+\gamma(\gamma-1)\mu^{\gamma-1}\Big((\nabla\log\mu,\nabla\Phi)^2-\|\nabla\log\mu\|^2\|\nabla\Phi\|^2\Big)\Big\} dx.
\end{split}
\end{equation*}
\end{theorem}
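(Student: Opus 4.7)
The plan is to compute the left-hand side by a direct expand-then-integrate-by-parts argument, with Bochner's formula at the key step; this is essentially the $\Gamma_2$ computation for the generator $L_\gamma\Phi=\mu^{-1}\nabla\cdot(\mu^\gamma\nabla\Phi)$ paired against the reference measure $\mu\,dx$, since the LHS equals $\int(L_\gamma\Phi)^2\mu\,dx$. First I factor $\nabla\cdot(\mu^\gamma\nabla\Phi)=\mu^\gamma[\Delta\Phi+\gamma(\nabla\log\mu,\nabla\Phi)]$ and expand the LHS as
\begin{equation*}
\int\mu^{2\gamma-1}(\Delta\Phi)^2\,dx+2\gamma\int\mu^{2\gamma-1}(\nabla\log\mu,\nabla\Phi)\Delta\Phi\,dx+\gamma^2\int\mu^{2\gamma-1}(\nabla\log\mu,\nabla\Phi)^2\,dx.
\end{equation*}

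For the first integral I would integrate by parts twice and invoke Bochner, $(\nabla\Delta\Phi,\nabla\Phi)=\tfrac{1}{2}\Delta\|\nabla\Phi\|^2-\|\textrm{Hess}\Phi\|^2-\textrm{Ric}(\nabla\Phi,\nabla\Phi)$, using closedness of $M$ to discard boundary terms. This produces the desired $\textrm{Ric}$ and $\|\textrm{Hess}\Phi\|^2$ contributions, a scalar $-\tfrac{1}{2}\Delta\mu^{2\gamma-1}\|\nabla\Phi\|^2$, and a residual cross term with coefficient $-(2\gamma-1)$. Combined with the pre-existing $2\gamma$ cross term, one is left with a single $\int\mu^{2\gamma-1}(\nabla\log\mu,\nabla\Phi)\Delta\Phi\,dx$, which I would reduce by a further integration by parts; the lone Hessian-of-$\Phi$ factor that then appears is eliminated by the pointwise identity $\textrm{Hess}\Phi(\nabla\log\mu,\nabla\Phi)=\tfrac{1}{2}(\nabla\log\mu,\nabla\|\nabla\Phi\|^2)$ and one final IBP, yielding only $\textrm{Hess}\log\mu(\nabla\Phi,\nabla\Phi)$ and scalar multiples of $\|\nabla\Phi\|^2$.

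At this point the LHS is a linear combination, all weighted by $\mu^{2\gamma-1}$, of $\textrm{Ric}(\nabla\Phi,\nabla\Phi)$, $\|\textrm{Hess}\Phi\|^2$, $\textrm{Hess}\log\mu(\nabla\Phi,\nabla\Phi)$, $(\nabla\log\mu,\nabla\Phi)^2$, $\|\nabla\log\mu\|^2\|\nabla\Phi\|^2$, and $\Delta\log\mu\cdot\|\nabla\Phi\|^2$. To match the stated RHS I use the chain rules $\textrm{Hess}\mu^{\alpha}=\alpha\mu^{\alpha}[\alpha\,\nabla\log\mu\otimes\nabla\log\mu+\textrm{Hess}\log\mu]$ and $\Delta\mu^{\alpha}=\alpha\mu^{\alpha}[\alpha\|\nabla\log\mu\|^2+\Delta\log\mu]$ with $\alpha=\gamma-1$ and $\alpha=2\gamma-1$. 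The main obstacle is the algebraic bookkeeping at this collection step: one must check that the coefficient $(\gamma-1)^2=\gamma^2-(2\gamma-1)$ produced in front of $(\nabla\log\mu,\nabla\Phi)^2$ splits as $-(\gamma-1)+\gamma(\gamma-1)$, the first piece being absorbed into $-\tfrac{1}{\gamma-1}\mu^\gamma\textrm{Hess}\mu^{\gamma-1}(\nabla\Phi,\nabla\Phi)$ and the second reproducing the $\gamma(\gamma-1)(\nabla\log\mu,\nabla\Phi)^2$ contribution, while a parallel identity of the form $(1-\gamma)[(2\gamma-1)\|\nabla\log\mu\|^2+\Delta\log\mu]$ must recover the scalar $-\mu^\gamma\Delta\mu^{\gamma-1}-\gamma(\gamma-1)\mu^{2\gamma-1}\|\nabla\log\mu\|^2$ coefficient of $\|\nabla\Phi\|^2$. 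The exponents where factors such as $\tfrac{1}{\gamma-1}$ are singular (notably $\gamma=1$) are handled by continuity in $\gamma$.
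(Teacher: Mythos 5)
Your proposed computation is correct: I checked the expansion, the two integrations by parts, the Bochner step, and the final bookkeeping, and the coefficient identities you flag do close up (the cross terms combine to a single copy with coefficient $2\gamma-(2\gamma-1)=1$; the quadratic term carries $(\gamma-1)^2=-(\gamma-1)+\gamma(\gamma-1)$, matching $-\tfrac{1}{\gamma-1}\textrm{Hess}\mu^{\gamma-1}$ plus the explicit $\gamma(\gamma-1)$ term; and the scalar coefficient of $\|\nabla\Phi\|^2$ is $(1-\gamma)\big[(2\gamma-1)\|\nabla\log\mu\|^2+\Delta\log\mu\big]\mu^{2\gamma-1}=-\mu^\gamma\Delta\mu^{\gamma-1}-\gamma(\gamma-1)\mu^{2\gamma-1}\|\nabla\log\mu\|^2$). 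However, your route is genuinely different from the paper's. The paper proves the identity in two lines by recognizing both sides as the Riemannian Hessian $\textrm{Hess}_g\mathcal{D}_\gamma(\rho\|\mu)(\sigma,\sigma)$ evaluated at $\rho=\mu$ with $\sigma=-\nabla\cdot(\mu^\gamma\nabla\Phi)$: since $\delta\mathcal{D}_\gamma(\rho\|\mu)|_{\rho=\mu}$ is constant and $\Gamma_\rho(\sigma,\sigma)\in T_\rho\mathcal{P}$, the Christoffel contribution integrates to zero, leaving $\int\delta^2\mathcal{D}_\gamma\,\sigma^2\,dx=\int\mu^{-1}\big(\nabla\cdot(\mu^\gamma\nabla\Phi)\big)^2dx$ (the left-hand side), while the explicit Hessian formula of Lemma \ref{lemma7} specialized to $\rho=\mu$ gives the right-hand side. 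The analytic work you propose (expand, integrate by parts, Bochner) is exactly what the paper does once, for general $\rho$, inside the proof of Lemma \ref{lemma7}. Your version buys a self-contained, elementary verification that is shorter because setting $\rho=\mu$ from the outset collapses the $\nabla\rho/\rho$ and $\nabla\mu/\mu$ terms; what it gives up is the geometric explanation of why the identity holds, namely that it expresses the coincidence of two representations of the Hessian of the $\gamma$--divergence at its minimizer, which is also what the paper then reuses to prove the generalized Poincar{\'e} inequality.
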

\begin{remark}
When $\mu(x)$ is a uniform measure, i.e. $\mu(x)=1$, the above formula is the classical Yano's formula \cite{yano1}:
\begin{equation*}
\int (\Delta\Phi)^2 dx= \int \Big\{\textrm{Ric}(\nabla\Phi, \nabla\Phi)+\|\textrm{Hess}\Phi\|^2\Big\} dx.
\end{equation*}
When $\gamma=1$, it is the generalized Yano's formula studied in \cite{ChowLiZhou2018_entropy,Li-thesis}
\begin{equation*}
\int \mu^{-1}\Big(\nabla\cdot(\mu \nabla\Phi)\Big)^2 dx =\int \mu \Big\{ \textrm{Ric}(\nabla\Phi, \nabla\Phi)+\|\textrm{Hess}\Phi\|^2\Big\}dx. 
\end{equation*}
Our derivation extends these classical Yano's formulas with general volume measure $\mu$ and its power $\gamma$. For example, when $\gamma=0$, we obtain 
\begin{equation*}
\begin{split}
\int \mu^{-1}(\Delta\Phi)^2 dx=&\int\Big\{(\mu^{-1}\textrm{Ric}-\Delta \mu^{-1}+\textrm{Hess}\mu^{-1})(\nabla\Phi, \nabla\Phi)+\mu^{-1}\|\textrm{Hess}\Phi\|^2\Big\} dx.
\end{split}
\end{equation*}
\end{remark}
Later on, using the generalized Yano's formula, we prove the following Poincar{\'e} inequality. 
\begin{corollary}[Generalized Poincar{\'e} inequality]\label{GPI}
If there exists a constant $\lambda>0$, such that when $\gamma\in [0,1]$, 
\begin{equation*}
\mu^{\gamma-1}\textrm{Ric}-\Delta \mu^{\gamma-1}-\frac{1}{\gamma-1}\textrm{Hess}\mu^{\gamma-1}\succeq \lambda, 
\end{equation*}
or when $\gamma\in[1,\infty)\cup(-\infty, 0]$, 
\begin{equation*}
\mu^{\gamma-1}\textrm{Ric}-\Delta \mu^{\gamma-1}-\frac{1}{\gamma-1}\textrm{Hess}\mu^{\gamma-1}-\gamma(\gamma-1)\|\nabla\log\mu\|^2\mu^{\gamma-1}\succeq \lambda, 
\end{equation*}
then 
\begin{equation}\label{PI}
\int f^2\mu dx\leq \frac{1}{\lambda} \int \|\nabla f\|^2\mu^{\gamma}dx,
\end{equation}
for any $f\in C^{\infty}(M)$ with $\int f \mu dx=0$.
\end{corollary}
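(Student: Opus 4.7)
The plan is to reduce the Poincaré inequality to a spectral estimate derived from Theorem \ref{cor2} via a standard duality/test function argument. Given $f\in C^{\infty}(M)$ with $\int f\mu\,dx=0$, I would first solve the elliptic PDE $\nabla\cdot(\mu^{\gamma}\nabla\Phi)=f\mu$ on the compact manifold $M$. Existence of a smooth solution follows from the Fredholm alternative: the operator $\Phi\mapsto\mu^{-1}\nabla\cdot(\mu^{\gamma}\nabla\Phi)$ is self-adjoint in $L^2(\mu\,dx)$ with kernel consisting of constants, and the source $f$ satisfies the compatibility condition $\int f\mu\,dx=0$.

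Next I would compute two expressions for $\int f^2\mu\,dx$. On one hand, substituting the PDE gives the clean identity
\begin{equation*}
\int f^2\mu\,dx=\int\mu^{-1}\bigl(\nabla\cdot(\mu^{\gamma}\nabla\Phi)\bigr)^2 dx,
\end{equation*}
which is precisely the left-hand side of the generalized Yano formula. On the other hand, integration by parts yields
\begin{equation*}
\int f^2\mu\,dx=\int f\,\nabla\cdot(\mu^{\gamma}\nabla\Phi)\,dx=-\int(\nabla f,\nabla\Phi)\,\mu^{\gamma}\,dx,
\end{equation*}
and Cauchy--Schwarz with weight $\mu^{\gamma}\,dx$ bounds this by $\bigl(\int\|\nabla f\|^2\mu^{\gamma}dx\bigr)^{1/2}\bigl(\int\|\nabla\Phi\|^2\mu^{\gamma}dx\bigr)^{1/2}$.

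The heart of the argument is to extract a lower bound $\int f^2\mu\,dx\ge\lambda\int\|\nabla\Phi\|^2\mu^{\gamma}dx$ from Theorem \ref{cor2}. After dropping the non-negative term $\mu^{\gamma-1}\|\textrm{Hess}\,\Phi\|^2$ on the right-hand side of Yano's formula, the remaining quantity to control is the cross term $\gamma(\gamma-1)\mu^{\gamma-1}\bigl((\nabla\log\mu,\nabla\Phi)^2-\|\nabla\log\mu\|^2\|\nabla\Phi\|^2\bigr)$, whose parenthesised factor is non-positive by Cauchy--Schwarz. This is where the case split in the hypothesis appears: for $\gamma\in[0,1]$ the prefactor $\gamma(\gamma-1)$ is non-positive, so the entire cross term is non-negative and may simply be discarded, leaving the first hypothesis sufficient; for $\gamma\in[1,\infty)\cup(-\infty,0]$ the prefactor is non-negative, so the cross term is a liability and the worst case $-\gamma(\gamma-1)\|\nabla\log\mu\|^2\mu^{\gamma-1}\|\nabla\Phi\|^2$ must be absorbed into the curvature tensor bound, which matches the second hypothesis exactly.

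Combining the two estimates yields $\bigl(\int f^2\mu\,dx\bigr)^2\le\int\|\nabla f\|^2\mu^{\gamma}dx\cdot\frac{1}{\lambda}\int f^2\mu\,dx$, and dividing through by $\int f^2\mu\,dx$ (assumed non-zero, else the inequality is trivial) finishes the proof. The one place requiring care, and the main conceptual obstacle, is precisely the sign analysis of the cross term, which dictates the bifurcated hypothesis; the rest is standard Bochner-type bookkeeping backed by the identity of Theorem \ref{cor2}.
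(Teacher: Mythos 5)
Your proof is correct, and it reaches the same essential estimate as the paper --- the generalized Yano formula of Theorem \ref{cor2} plus the sign analysis of the cross term $\gamma(\gamma-1)\mu^{\gamma-1}\bigl((\nabla\log\mu,\nabla\Phi)^2-\|\nabla\log\mu\|^2\|\nabla\Phi\|^2\bigr)$, which is exactly where the paper's bifurcated hypothesis also comes from (in the paper this appears as the bound $-\|\nabla\log\mu\|^2\|\nabla\Phi\|^2\leq J(\Phi,\Phi)|_{\rho=\mu}\leq 0$). Where you differ is in how that estimate is converted into the Poincar\'e inequality. The paper proves a variational claim: the constrained minimum of $\int\mu^{-1}(\nabla\cdot(\mu^\gamma\nabla\Phi))^2dx$ over $\int\|\nabla\Phi\|^2\mu^\gamma dx=1$ and the constrained minimum of $\int\|\nabla f\|^2\mu^\gamma dx$ over $\int f^2\mu\,dx=1$, $\int f\mu\,dx=0$, are both equal to the smallest nonzero eigenvalue of $-\Delta_{\mu^\gamma}\frac{1}{\mu}$, by checking that the two Euler--Lagrange equations coincide; the Yano bound then controls the first Rayleigh quotient and hence the second. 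You instead solve $\nabla\cdot(\mu^\gamma\nabla\Phi)=f\mu$ directly and combine the identity $\int f^2\mu\,dx=\int\mu^{-1}(\nabla\cdot(\mu^\gamma\nabla\Phi))^2dx$, the Cauchy--Schwarz bound $\int f^2\mu\,dx\leq(\int\|\nabla f\|^2\mu^\gamma dx)^{1/2}(\int\|\nabla\Phi\|^2\mu^\gamma dx)^{1/2}$, and the Yano lower bound $\int f^2\mu\,dx\geq\lambda\int\|\nabla\Phi\|^2\mu^\gamma dx$. Your route is more elementary and self-contained: it sidesteps the existence of constrained minimizers and the eigenvalue characterization, needing only the Fredholm solvability of the weighted elliptic equation (which holds on a compact $M$ with $\inf\mu>0$ and the compatibility condition $\int f\mu\,dx=0$). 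The paper's route buys the extra information that the optimal constant is the smallest nonzero eigenvalue of $-\Delta_{\mu^\gamma}\frac{1}{\mu}$, and it ties the inequality explicitly to the Hessian of $\mathcal{D}_\gamma$ at $\rho=\mu$ in the density manifold, which is the conceptual thread of the paper.
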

\begin{remark}
When $\gamma=1$, Corollary \ref{GPI} recovers the classical Poincar{\'e} inequality
\begin{equation*}
\int f^2\mu dx\leq \frac{1}{\lambda} \int \|\nabla f\|^2\mu dx.
\end{equation*}
\end{remark}
\begin{remark}
Here we derive the formulation of generalized Poincar{\'e} inequality by the approximation of generalized Log-Sobolev inequality. In details, denote 
$\rho=\mu+\epsilon h$, where $h=f\mu$ and $\int h dx=0$. The L.H.S. of \eqref{PI} comes from the Hessian metric tensor for the $\gamma$--divergence $\mathcal{D}_\gamma(\rho\|\mu)$:
\begin{equation*}
\mathcal{D}_\gamma(\mu+\epsilon h\|\mu)=\frac{\epsilon^2}{2} \int f^2\mu dx+o(\epsilon^2).
\end{equation*}
While the R.H.S. of \eqref{PI} is from the second order approximation in term of $\epsilon$ for the $\gamma$--relative Fisher information:
\begin{equation*}
\mathcal{I}_\gamma(\mu+\epsilon h \|\mu)=\epsilon^2 \int (\nabla f)^2\mu^\gamma dx+o(\epsilon^2).
\end{equation*}
\end{remark}

\begin{example}[Reverse Kullback--Leibler divergence]
Consider $\gamma=2$, $f(\rho)=-\log \rho$. Here $\mathcal{D}_2(\rho\|\mu)=-\int \mu\log\frac{\rho}{\mu} dx$ is named reverse Kullback--Leibler divergence function or Cross entropy. 
In this case, the condition in Corollary \ref{GPI} forms  
\begin{equation*}
\mu \textrm{Ric}-\Delta \mu-\textrm{Hess}\mu-2\|\nabla\log\mu\|^2\mu\succeq \lambda, \quad \lambda>0,
\end{equation*}
Under this condition, the generalized Poincar{\'e} inequality holds
\begin{equation*}
\int f^2\mu dx\leq \frac{1}{\lambda} \int \|\nabla f\|^2\mu^{2}dx,
\end{equation*}
where $f\in C^{\infty}(M)$ and $\int f\mu dx=0$. 
Again, if $M$ is Ricci flat, i.e. $\textrm{Ric}=0$, then the condition in Corollary \ref{GPI} forms $-\Delta \mu-\textrm{Hess}\mu-2\|\nabla\log\mu\|^2\mu\succeq \lambda$.
\end{example}

Last, notice the fact that when $\gamma=0$, the $\gamma$-Wasserstein distance is exactly the $H^{-1}$ distance: $$\mathcal{W}_0(\rho, \mu)=H^{-1}(\rho, \mu),$$ 
where $H^{-1}$ is the negative Sobolev distance between $\rho$ and $\mu$, i.e. 
$$H^{-1}(\rho, \mu)=\sqrt{\int \big(\rho-\mu, \Delta^{-1}(\rho-\mu)\big) dx}.$$ 
We next show an inequality among Pearson divergence (P), $H^{-1}$ metric and $0$--Fisher information (I), named $PH^{-1}I$ inequality. 
\begin{theorem}[Inequalities for $H^{-1}$ metric]\label{cor3}
Suppose $\mu^{-1}\textrm{Ric}+\textrm{Hess}\mu^{-1}-\Delta \mu^{-1} \succeq \kappa$, where $\kappa\in\mathbb{R}$, then the $PH^{-1}I$ inequality holds
\begin{equation*}
D_0(\rho\|\mu)\leq \sqrt{\mathcal{I}_0(\rho\|\mu)}H^{-1}(\rho,\mu)-\frac{\kappa}{2}H^{-1}(\rho,\mu)^2.
\end{equation*}
In addition, if $\kappa\geq 0$, then the $H^{-1}$-Talagrand inequality holds 
\begin{equation*}
H^{-1}(\rho, \mu)\leq\sqrt{\frac{2\mathcal{D}_0(\mu\|\nu)}{\kappa}},
\end{equation*}
\end{theorem}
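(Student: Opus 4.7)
The plan is to work along the $\mathcal{W}_0$-geodesic from $\mu$ to $\rho$, exploiting two features special to $\gamma=0$: the metric $\mathcal{W}_0$ coincides with the flat Hilbertian $H^{-1}$ metric, so geodesics are linear interpolations $\rho_s=(1-s)\mu+s\rho$ with constant speed $H^{-1}(\rho,\mu)$; and the Pearson functional $\mathcal{D}_0$ is exactly quadratic in $\rho$, so a direct calculation using $\rho_s/\mu-1=s(\rho/\mu-1)$ gives $\mathcal{D}_0(\rho_s\|\mu)=s^{2}\,\mathcal{D}_0(\rho\|\mu)$. The Kantorovich potential $\Phi_0$ transporting the geodesic satisfies $-\Delta\Phi_0=\rho-\mu$.

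Writing $g(s)=\mathcal{D}_0(\rho_s\|\mu)$ we therefore have the exact identity $g''(s)\equiv 2\,\mathcal{D}_0(\rho\|\mu)$. On the other hand, the $\mathcal{W}_0$-Hessian of $\mathcal{D}_0$ in the direction of a tangent $\sigma=-\Delta\Phi$ equals $\int\mu^{-1}(\Delta\Phi)^{2}\,dx$ (since $f''(r)=1$ for $\gamma=0$), while the squared tangent norm is $\int\|\nabla\Phi\|^{2}\,dx$. Feeding $\Phi_0$ into the $\gamma=0$ generalized Yano formula of Theorem \ref{cor2}, together with the hypothesis $\mu^{-1}\textrm{Ric}+\textrm{Hess}\mu^{-1}-\Delta\mu^{-1}\succeq\kappa$, yields
\[
\int\mu^{-1}(\Delta\Phi_{0})^{2}\,dx\;\geq\;\kappa\!\int\|\nabla\Phi_{0}\|^{2}\,dx+\int\mu^{-1}\|\textrm{Hess}\,\Phi_0\|^2\,dx\;\geq\;\kappa\,H^{-1}(\rho,\mu)^{2},
\]
which is precisely the $\kappa$-displacement convexity statement $g''(s)\geq\kappa H^{-1}(\rho,\mu)^{2}$.

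For the $H^{-1}$-Talagrand inequality, combine the exact identity $g''\equiv 2\mathcal{D}_0(\rho\|\mu)$ with the lower bound $g''\geq\kappa H^{-1}(\rho,\mu)^{2}$: this gives $H^{-1}(\rho,\mu)^{2}\leq 2\mathcal{D}_0(\rho\|\mu)/\kappa$ whenever $\kappa\geq 0$. For the P$H^{-1}$I inequality, I would run the geodesic in reverse, $\tilde\rho_s=(1-s)\rho+s\mu$, with $\tilde g(0)=\mathcal{D}_0(\rho\|\mu)$ and $\tilde g(1)=0$. Taylor's theorem applied with $\tilde g''\geq\kappa H^{-1}(\rho,\mu)^{2}$ then gives
\[
\mathcal{D}_0(\rho\|\mu)\;\leq\;-\tilde g'(0)\;-\;\frac{\kappa}{2}\,H^{-1}(\rho,\mu)^{2},
\]
and integration by parts expresses $\tilde g'(0)$ as a pairing $\int\nabla f'(\rho/\mu)\cdot\nabla\Phi_0\,dx$. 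Cauchy--Schwarz in the $\mathcal{W}_0$ inner product then bounds $|\tilde g'(0)|\leq\sqrt{\mathcal{I}_0(\rho\|\mu)}\,H^{-1}(\rho,\mu)$, yielding the claimed PH$^{-1}$I bound.

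The main obstacle lies in establishing the two density-manifold identifications that drive the argument: that the Riemannian Hessian of $\mathcal{D}_0$ in the $\mathcal{W}_0$ metric acts on $\sigma=-\Delta\Phi$ as $\int\mu^{-1}(\Delta\Phi)^{2}\,dx$, and that the $\mathcal{W}_0$-dual norm of the first variation of $\mathcal{D}_0$ equals $\sqrt{\mathcal{I}_0(\rho\|\mu)}$. Both reduce to the generalized calculus on density manifold in Section \ref{section3} and use the $\gamma=0$ case of Theorem \ref{cor2} in an essential way; once they are available, the remaining argument is a routine Taylor/Cauchy--Schwarz analysis of a scalar function of $s\in[0,1]$.
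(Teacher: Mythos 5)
Your proof of the P$H^{-1}$I inequality follows essentially the same route as the paper's: a Taylor expansion of $\mathcal{D}_0$ along the linear $H^{-1}$-geodesic joining $\rho$ and $\mu$ (linearity coming from Proposition \ref{proposition} at $\gamma=0$), Cauchy--Schwarz applied to the first-order term to produce the $\sqrt{\mathcal{I}_0(\rho\|\mu)}\,H^{-1}(\rho,\mu)$ term, and a lower bound $\kappa\,H^{-1}(\rho,\mu)^2$ on the second-order term. Your lower bound is obtained from the $\gamma=0$ case of the Yano formula in Theorem \ref{cor2}, the paper's from the $\gamma=0$ Hessian formula of Lemma \ref{lemma7}; these are the same identity, since for $\gamma=0$ the metric is flat, the Christoffel symbol vanishes, and the Hessian of $\mathcal{D}_0$ is $\rho$-independent, so the Yano formula \emph{is} the Hessian formula. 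Where you genuinely diverge is the Talagrand part: the paper invokes the gradient-flow interpolation argument of Theorem \ref{thm}, whereas you observe that $\mathcal{D}_0(\rho_s\|\mu)=s^2\mathcal{D}_0(\rho\|\mu)$ exactly, so that $g''(s)\equiv 2\mathcal{D}_0(\rho\|\mu)=\int\mu^{-1}(\Delta\Phi_0)^2\,dx\geq\kappa\,H^{-1}(\rho,\mu)^2$ yields the $H^{-1}$-Talagrand inequality in one line, with no dynamics. That shortcut is correct and is special to $\gamma=0$ (it exploits that the Pearson divergence is exactly quadratic and that $H^{-1}$ is Hilbertian); what it buys is a self-contained proof of the Talagrand inequality that bypasses the hypercontractivity machinery of Theorem \ref{thm} entirely. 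The only point to make explicit is the degenerate case $\kappa=0$, where your division by $\kappa$ is vacuous, but so is the statement.
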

\begin{remark}
The P$H^{-1}$I inequality is an analog of inequalities among $\mathcal{D}_1$ (H), Wasserstein-2 metric and $1$--Fisher information, known as HWI inequality; see details in \cite{OV}. 
\end{remark}
\begin{remark}
If $\kappa> 0$, the P$H^{-1}$I inequality shows 
\begin{equation*}
\mathcal{D}_0(\rho\|\mu)\leq \sqrt{\mathcal{I}_0(\rho\|\mu)}H^{-1}(\rho,\mu).
\end{equation*}
In addition, using the fact that $H^{-1}(\rho, \mu)\leq\sqrt{\frac{2\mathcal{D}_0(\mu\|\nu)}{\kappa}}$ and $\mathcal{D}_0(\rho\|\mu) \leq \frac{1}{2\kappa}\mathcal{I}_0(\rho\|\mu)$, we have 
\begin{equation*}
H^{-1}(\rho, \mu)\leq \frac{1}{\kappa}\sqrt{\mathcal{I}_0(\rho\|\mu)}.
\end{equation*}
\end{remark}
In next sections, we apply geometric tools in probability space to prove the above inequalities and integral formulas. 

\section{Generalized Density manifold}\label{section3}
In this section, we introduce the main tool to prove above results. We first review a class of Riemannian metrics in probability space, introduced by $\gamma-$Wasserstein distance. We then present its Riemannian calculus, including gradient and Hessian operators. 
By using gradient operators in this metric, we connect $\gamma$--divergence, $\gamma$--Fisher information and $\gamma$--drift diffusion process. 
\subsection{Density manifold and its Riemannian calculus}
Consider the set of smooth and strictly positive densities
\begin{equation*}
\mathcal{P}=\Big\{\rho \in C^{\infty}(M)\colon \rho(x)>0,~\int\rho(x)dx=1\Big\}. 
\end{equation*}
The tangent space of $\mathcal{P}$ at $\rho\in \mathcal{P}$ is given by 
\begin{equation*}
  T_\rho\mathcal{P} = \Big\{\sigma\in C^{\infty}(M)\colon \int\sigma(x) dx=0 \Big\}.
\end{equation*}  
Consider the $\gamma$--Wasserstein metric tensor in the probability space as follows.
\begin{definition}\label{metric}
  The inner product $g_\rho\colon
  {T_\rho}\mathcal{P}\times{T_\rho}\mathcal{P}\rightarrow\mathbb{R}$ is defined as for any $\sigma_1$
  and $\sigma_2\in T_\rho\mathcal{P}$:
  \begin{equation*} g_\rho(\sigma_1,
    \sigma_2)=\int\Big(\sigma_1, (-\Delta_{\rho^\gamma})^{-1}\sigma_2\Big)dx,
  \end{equation*}
  where $\Delta_{\rho^\gamma}=\nabla\cdot(\rho^\gamma\nabla)$ is a weighted elliptic operator. In addition, denote $\Phi_1$, $\Phi_2\in C^{\infty}(M)/\mathbb{R}=T^*_\rho\mathcal{P}$, with $\sigma_i=-\Delta_{\rho^\gamma}\Phi_i$, $i=1, 2$,
  then 
     \begin{equation*}
  g_\rho(\sigma_1, \sigma_2)=\int (\nabla\Phi_1, \nabla\Phi_2)\rho^\gamma dx.
  \end{equation*}    
  \end{definition}
  \begin{remark}
  An observation is that if $\gamma=0$, the proposed $\mathcal{W}_0$ metric is the $H^{-1}$ metric \cite{O1}. If $\gamma=1$, the proposed $\mathcal{W}_1$ metric is the $L^2$-Wasserstein metric \cite{Lafferty, otto2001}.  
  \end{remark}
 
 We note that the characterization of geodesics in $(\mathcal{P}, g)$ has been studied in \cite{C1,O1}. In this paper, we focus on the Riemannian calculus for density manifold $(\mathcal{P}, g)$, using both $(\rho, \sigma)$ in tangent bundle and $(\rho, \Phi)$ in cotangent bundle. 
\begin{proposition}
The Christoffel symbol $\Gamma_\rho\colon T_\rho\mathcal{P}\times T_\rho\mathcal{P}\rightarrow T_\rho\mathcal{P}$ in $(\mathcal{P}, g)$ satisfies
\begin{equation*}
\begin{split}
\Gamma_\rho(\sigma_1,\sigma_2)=&-\frac{\gamma}{2}\Big\{\Delta_{\rho^{\gamma-1}\sigma_1}\Delta_{\rho^\gamma}^{-1}\sigma_2+
\Delta_{\rho^{\gamma-1}\sigma_2}\Delta_{\rho^\gamma}^{-1}\sigma_1+\Delta_{\rho^\gamma}\big((\nabla\Delta_{\rho^\gamma}^{-1}\sigma_1, \nabla\Delta_{\rho^\gamma}^{-1}\sigma_2)\rho^{\gamma-1}\big)\Big\}\\
=&-\frac{\gamma}{2}\Big\{\Delta_{\rho^{\gamma-1}\Delta_{\rho^\gamma}\Phi_1}\Phi_2+\Delta_{\rho^{\gamma-1}\Delta_{\rho^\gamma}\Phi_2}\Phi_1+\Delta_{\rho^\gamma}\big((\nabla\Phi_1, \nabla\Phi_2)\rho^{\gamma-1}\big)\Big\}
\end{split}
\end{equation*}
where $\sigma_i=-\Delta_{\rho^\gamma}\Phi_i$, $i=1,2$, and 
\begin{equation*}
\Delta_{\rho^{\gamma-1}\sigma_1}\Delta_{\rho^\gamma}^{-1}\sigma_2=\Delta_{\rho^{\gamma-1}\Delta_{\rho^\gamma}\Phi_1}\Phi_2=\nabla\cdot(\rho^{\gamma-1}\nabla\cdot(\rho^\gamma\nabla\Phi_1)\nabla\Phi_2).
\end{equation*}
\end{proposition}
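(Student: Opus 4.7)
The plan is to derive $\Gamma_\rho$ from the geodesic equation on $(\mathcal{P},g)$, which is the most direct route since $\mathcal{W}_\gamma$ is defined variationally. Treating the kinetic action $\frac{1}{2}\int_0^1\!\int\|\nabla\Phi_t\|^2\rho_t^\gamma\,dx\,dt$ as a constrained Lagrangian with $\dot\rho_t=-\Delta_{\rho_t^\gamma}\Phi_t$, I pass to the Hamiltonian $H(\rho,\Phi)=\frac{1}{2}\int\|\nabla\Phi\|^2\rho^\gamma\,dx$ on $T^*\mathcal{P}$. First I would derive the Hamilton equations
\[
\partial_t\rho=\frac{\delta H}{\delta\Phi}=-\nabla\cdot(\rho^\gamma\nabla\Phi),\qquad \partial_t\Phi=-\frac{\delta H}{\delta\rho}=-\frac{\gamma}{2}\rho^{\gamma-1}\|\nabla\Phi\|^2,
\]
the first of which recovers precisely the identification $\sigma=-\Delta_{\rho^\gamma}\Phi$ from Definition \ref{metric}.

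Next I would compute $\ddot\rho$ by differentiating the first Hamilton equation in $t$ and substituting both equations in for $\dot\rho$ and $\dot\Phi$. A short application of the product rule inside $\nabla\cdot(\rho^\gamma\nabla\Phi)$ yields
\[
\ddot\rho=\gamma\,\Delta_{\rho^{\gamma-1}\Delta_{\rho^\gamma}\Phi}\Phi+\frac{\gamma}{2}\,\Delta_{\rho^\gamma}\!\big(\rho^{\gamma-1}\|\nabla\Phi\|^2\big).
\]
Reading this against the geodesic equation $\ddot\rho+\Gamma_\rho(\dot\rho,\dot\rho)=0$ extracts the diagonal $\Gamma_\rho(\sigma,\sigma)$, and the symmetric bilinear extension follows from polarization: substituting $\Phi\mapsto\Phi_1+\Phi_2$ in the diagonal, expanding $\Delta_{\rho^{\gamma-1}\Delta_{\rho^\gamma}(\Phi_1+\Phi_2)}(\Phi_1+\Phi_2)$ and $\|\nabla(\Phi_1+\Phi_2)\|^2$, and subtracting the two pure diagonals isolates exactly the three cross terms in the proposition. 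The first equivalent form is then obtained by rewriting $\sigma_i=-\Delta_{\rho^\gamma}\Phi_i$.

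The hard part is certifying that the bilinear map produced this way is genuinely the Levi-Civita connection, rather than just some torsion-free connection with the correct geodesics. Symmetry in $(\sigma_1,\sigma_2)$ (torsion-freeness) is built into the polarization, so the only remaining condition is metric compatibility
\[
\partial_{\sigma_3}\,g_\rho(\sigma_1,\sigma_2)=g_\rho(\Gamma_\rho(\sigma_3,\sigma_1),\sigma_2)+g_\rho(\sigma_1,\Gamma_\rho(\sigma_3,\sigma_2)).
\]
The cleanest resolution is to invoke uniqueness of the Levi-Civita connection on the formal Riemannian structure $g$: any two torsion-free connections sharing all geodesics coincide (they agree on $\Gamma(\sigma,\sigma)$, and polarization fixes the rest), so the formula is forced. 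The hands-on alternative is to verify metric compatibility directly by differentiating the cometric form $\int(\nabla\Phi_1,\nabla\Phi_2)\rho^\gamma\,dx$ in $\rho$ along $\sigma_3$ to obtain $\gamma\!\int(\nabla\Phi_1,\nabla\Phi_2)\rho^{\gamma-1}\sigma_3\,dx$, and matching this against the right-hand side through several integrations by parts against $-\Delta_{\rho^\gamma}^{-1}\sigma_3$; this is routine but bookkeeping-heavy, and the uniqueness argument avoids it entirely.
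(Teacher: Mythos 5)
Your proposal is correct and lands on the same underlying idea as the paper --- read off $\Gamma_\rho$ by comparing the geodesic equation with $\partial_{tt}\rho_t+\Gamma_{\rho_t}(\partial_t\rho_t,\partial_t\rho_t)=0$ and then polarize --- but you reach the geodesic equation by a different computational route. The paper works on the Lagrangian side: it writes the Euler--Lagrange equation for the action $\int_0^1\int\frac{1}{2}(\partial_t\rho_t,(-\Delta_{\rho_t^\gamma})^{-1}\partial_t\rho_t)\,dx\,dt$, which forces it to differentiate the inverse operator via $\partial_t\Delta_{\rho_t^\gamma}^{-1}=-\Delta_{\rho_t^\gamma}^{-1}\Delta_{\gamma\rho_t^{\gamma-1}\partial_t\rho_t}\Delta_{\rho_t^\gamma}^{-1}$ and then multiply through by $\Delta_{\rho_t^\gamma}$. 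You instead pass to the Hamiltonian $\mathcal{H}(\rho,\Phi)=\frac{1}{2}\int\|\nabla\Phi\|^2\rho^\gamma\,dx$ first (which the paper only does afterwards, in its Proposition on the co-geodesic flow), and differentiate $\partial_t\rho=-\Delta_{\rho^\gamma}\Phi$ once more; your expression for $\ddot\rho$ checks out and recovers the stated diagonal term. Your route avoids manipulating $\Delta_{\rho^\gamma}^{-1}$ entirely, at the cost of having to justify that the Hamiltonian flow and the Lagrangian geodesics agree (standard Legendre duality, which the paper also uses). Your closing discussion of why the geodesic data plus torsion-freeness pins down the Levi-Civita Christoffel symbol is a point the paper leaves entirely implicit, and it is worth making: at this formal level the uniqueness argument (two torsion-free connections with the same geodesics agree on the diagonal and hence everywhere by polarization) is the cleanest way to dispense with a direct verification of metric compatibility.
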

\begin{proof}
The proof follows the study in \cite{LiG1}. We derive the Christoffel symbol by using the Lagrangian formulation of geodesics. 
Consider the minimization of the geometric action functional in density space
\begin{equation*}
\begin{split}
\mathcal{L}(\rho_t, \partial_t\rho_t)
=&\int_0^1\int  \frac{1}{2}(\partial_t\rho_t, (-\Delta_{\rho_t^\gamma})^{-1}\partial_t\rho_t)dxdt, 
\end{split}
\end{equation*}
where $\rho_t=\rho(t,x)$ is a density path with fixed boundary points $\rho_0$, $\rho_1$. 
The geodesics in $(\mathcal{P}, g)$ satisfies the Euler-Lagrange equation 
\begin{equation}\label{EL}
\frac{\partial}{\partial t}\delta_{\partial_t\rho_t}\mathcal{L}(\rho_t, \partial_t\rho_t)=\delta_{\rho_t} \mathcal{L}(\rho_t, \partial_t\rho_t)+C(t),
\end{equation}
i.e. 
\begin{equation*}
\partial_t (-\Delta_{\rho_t^\gamma}^{-1}\partial_t\rho_t)=\delta_\rho \int \frac{1}{2}(\partial_t\rho, (-\Delta_{\rho_t^\gamma})^{-1}\partial_t\rho_t)dx+C(t),
\end{equation*}
where $C(t)$ is a function depending only on $t$.
Using the fact that 
\begin{equation*}
\partial_t\Delta_{\rho_t^\gamma}^{-1}=-\Delta_{\rho_t^\gamma}^{-1}\cdot\Delta_{\gamma \rho_t^{\gamma-1}\partial_t\rho_t}\cdot\Delta_{\rho_t^\gamma}^{-1},
\end{equation*}
then equation \eqref{EL} forms 
\begin{equation*}
-\Delta_{\rho_t^\gamma}^{-1}\partial_{tt}\rho_t+ \Delta_{\rho_t^\gamma}^{-1}\Delta_{\gamma \rho_t^{\gamma-1}\partial_t\rho_t}\Delta_{\rho_t^\gamma}^{-1}\partial_t\rho_t=-\frac{1}{2}\|\nabla \Delta_{\rho_t^\gamma}^{-1}\partial_t\rho_t\|^2\gamma\rho_t^{\gamma-1}.
\end{equation*}
By timing both sides with $\Delta_{\rho_t^\gamma}$ and comparing with the geodesics equation, 
\begin{equation*}
\partial_{tt}\rho_t+\Gamma_{\rho_t}(\partial_t\rho_t, \partial_t\rho_t)=0,
\end{equation*}
we derive the Christoffel symbol. Similarly, we can formulate the Christoffel symbol (raised Christoffel symbol) in term of $\Phi$. 
\end{proof}
\begin{proposition}\label{proposition}
The geodesics equation in $(\mathcal{P}, g)$ satisfies 
\begin{equation*}
\partial_{tt}\rho_t-\gamma\Delta_{\rho_t^{\gamma-1}\partial_t\rho_t}\Delta_{\rho_t^\gamma}^{-1}\partial_t\rho_t-\frac{\gamma}{2}\Delta_{\rho_t^\gamma}\Big(\|\nabla\Delta_{\rho_t^\gamma}^{-1}\partial_t\rho_t\|^2\rho_t^{\gamma-1}\Big)=0.
\end{equation*}
Denote Legendre transform $\Phi_t=(-\Delta_{\rho_t^\gamma})^{-1}\partial_t\rho_t$, then the co-geodesics equation satisfies 
\begin{equation}\label{cogeo}
\left\{\begin{aligned}
&\partial_t\rho_t+\nabla\cdot(\rho_t^\gamma\nabla\Phi_t)=0\\
&\partial_t\Phi_t+\frac{\gamma}{2}\|\nabla\Phi_t\|^2\rho_t^{\gamma-1}=0
\end{aligned}\right.
\end{equation}
\end{proposition}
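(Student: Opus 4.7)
The plan is to derive the geodesics equation by plugging the Christoffel symbol from the previous proposition into the abstract geodesic ODE, and then obtain the cogeodesic system by a Legendre transform together with the formula for $\partial_t \Delta_{\rho_t^\gamma}^{-1}$ that already appeared in the preceding proof.

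First I would write the geodesic ODE on the density manifold in its standard form $\partial_{tt}\rho_t+\Gamma_{\rho_t}(\partial_t\rho_t,\partial_t\rho_t)=0$ and substitute $\sigma_1=\sigma_2=\partial_t\rho_t$ into the Christoffel symbol formula of the previous proposition. Using symmetry, the first two terms in the bracket collapse into $2\Delta_{\rho_t^{\gamma-1}\partial_t\rho_t}\Delta_{\rho_t^\gamma}^{-1}\partial_t\rho_t$, while the third term becomes $\Delta_{\rho_t^\gamma}\bigl(\|\nabla\Delta_{\rho_t^\gamma}^{-1}\partial_t\rho_t\|^2\rho_t^{\gamma-1}\bigr)$. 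Combined with the overall coefficient $-\gamma/2$, this produces precisely the first displayed equation of the proposition.

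Next I would move to the cotangent side using $\Phi_t=(-\Delta_{\rho_t^\gamma})^{-1}\partial_t\rho_t$. The continuity-type equation $\partial_t\rho_t+\nabla\cdot(\rho_t^\gamma\nabla\Phi_t)=0$ is then immediate, since by the definition of $\Phi_t$ we have $\partial_t\rho_t=-\Delta_{\rho_t^\gamma}\Phi_t=-\nabla\cdot(\rho_t^\gamma\nabla\Phi_t)$. For the Hamilton–Jacobi equation for $\Phi_t$, I would differentiate the identity $\Phi_t=(-\Delta_{\rho_t^\gamma})^{-1}\partial_t\rho_t$ in $t$ and use the operator identity
\begin{equation*}
\partial_t\Delta_{\rho_t^\gamma}^{-1}=-\Delta_{\rho_t^\gamma}^{-1}\cdot\Delta_{\gamma\rho_t^{\gamma-1}\partial_t\rho_t}\cdot\Delta_{\rho_t^\gamma}^{-1},
\end{equation*}
already recorded in the excerpt. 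This yields
\begin{equation*}
\partial_t\Phi_t=\gamma\,\Delta_{\rho_t^\gamma}^{-1}\Delta_{\rho_t^{\gamma-1}\partial_t\rho_t}\Delta_{\rho_t^\gamma}^{-1}\partial_t\rho_t+(-\Delta_{\rho_t^\gamma})^{-1}\partial_{tt}\rho_t.
\end{equation*}

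Finally I would eliminate $\partial_{tt}\rho_t$ using the geodesic equation from the first step. Inverting $-\Delta_{\rho_t^\gamma}$ term by term in that equation produces one copy of $-\gamma\,\Delta_{\rho_t^\gamma}^{-1}\Delta_{\rho_t^{\gamma-1}\partial_t\rho_t}\Delta_{\rho_t^\gamma}^{-1}\partial_t\rho_t$ (which exactly cancels the first term above) together with a remainder $-\tfrac{\gamma}{2}\|\nabla\Delta_{\rho_t^\gamma}^{-1}\partial_t\rho_t\|^2\rho_t^{\gamma-1}=-\tfrac{\gamma}{2}\|\nabla\Phi_t\|^2\rho_t^{\gamma-1}$, giving the second co-geodesic equation. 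I expect the main obstacle to be careful bookkeeping in the cancellation at the Hamiltonian step: making sure the sign conventions for $(-\Delta_{\rho_t^\gamma})^{-1}$ versus $\Delta_{\rho_t^\gamma}^{-1}$ and the $\gamma$ inside the subscript of $\Delta_{\gamma\rho_t^{\gamma-1}\partial_t\rho_t}$ are tracked correctly, and noting that the arbitrary function $C(t)$ from the Euler–Lagrange equation for $\Phi_t$ may be absorbed into $\Phi_t$ since $\Phi_t$ is defined only modulo constants on $M$.
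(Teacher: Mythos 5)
Your proposal is correct, and the two derivations it contains diverge from the paper in an interesting way only in the second half. For the geodesic equation itself you do exactly what the paper does: substitute $\sigma_1=\sigma_2=\partial_t\rho_t$ into the Christoffel symbol and read off $\partial_{tt}\rho_t+\Gamma_{\rho_t}(\partial_t\rho_t,\partial_t\rho_t)=0$. For the co-geodesic pair, however, the paper passes to the Hamiltonian formalism: it computes the Legendre transform $\mathcal{H}(\rho,\Phi)=\frac{1}{2}\int\|\nabla\Phi\|^2\rho^\gamma\,dx$ of the action and writes \eqref{cogeo} as Hamilton's equations $\partial_t\rho_t=\delta_{\Phi_t}\mathcal{H}$, $\partial_t\Phi_t=-\delta_{\rho_t}\mathcal{H}$. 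You instead differentiate the Legendre relation $\Phi_t=(-\Delta_{\rho_t^\gamma})^{-1}\partial_t\rho_t$ in time, invoke the operator identity for $\partial_t\Delta_{\rho_t^\gamma}^{-1}$, and eliminate $\partial_{tt}\rho_t$ via the geodesic equation; the cancellation you describe (using $\Delta_{\gamma\rho_t^{\gamma-1}\partial_t\rho_t}=\gamma\Delta_{\rho_t^{\gamma-1}\partial_t\rho_t}$ and absorbing the additive function of $t$ into $\Phi_t\in C^\infty(M)/\mathbb{R}$) checks out and yields $\partial_t\Phi_t+\frac{\gamma}{2}\|\nabla\Phi_t\|^2\rho_t^{\gamma-1}=0$. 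The paper's route is shorter and exhibits the symplectic structure explicitly, which is conceptually useful elsewhere; your route is more pedestrian but self-contained, reusing only identities already established in the Christoffel-symbol proof and verifying directly that the tangent and cotangent formulations are equivalent rather than asserting it from the Legendre duality.
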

\begin{proof}
The geodesics equation follows $\partial_{tt}\rho_t+\Gamma_{\rho_t}(\partial_t\rho_t, \partial_t\rho_t)=0$. We next demonstrate the Hamiltonian formulation of geodesics flow. 
Consider the Legendre transform in $(\mathcal{P}, g)$:
\begin{equation*}
\mathcal{H}(\rho_t, \Phi_t)=\sup_{\Phi_t\in C^{\infty}(M)} \int \Phi_t \partial_t\rho_t dx-\mathcal{L}(\rho_t,\partial_t\rho_t)
\end{equation*}
Then $\Phi_t=-\Delta_{\rho_t^\gamma}^{-1}\partial_t\rho_t$, and 
\begin{equation*}
\mathcal{H}(\rho_t,\Phi_t)=\frac{1}{2}\int \Phi_t(-\Delta_{\rho_t^\gamma}\Phi_t) dx=\frac{1}{2}\int \|\nabla\Phi_t\|^2\rho_t^\gamma dx.
\end{equation*}
Then the co-geodesic flow satisfies 
\begin{equation*}
\partial_t\rho_t=\delta_{\Phi_t}\mathcal{H}(\rho_t,\Phi_t), \quad \partial_t\Phi_t=-\delta_{\rho_t}\mathcal{H}(\rho_t,\Phi_t),
\end{equation*}
which is the equation pair \eqref{cogeo}.
\end{proof}
For completeness of this paper, we also present the Lagrangian coordinates of geodesics in generalized density manifold. 
\begin{proposition}[Lagrangian coordinates]
Denote $\rho_t=X_t\#\rho^0$, where $X_t\colon M\rightarrow M$ is the diffeomorphism and $\#$ is the push-forward operator. Then the Lagrangian coordinates of geodesic equation \eqref{cogeo} satisfies 
\begin{equation*}
\frac{d^2}{dt^2}X_t+\frac{\gamma-1}{2}\nabla\|\frac{d}{dt}X_t\|^2+({\gamma-1})\frac{d}{dt}X_t\nabla\cdot(\frac{d}{dt}X_t)-\frac{({\gamma-2})({\gamma-1})}{2}\nabla\log\rho_t \|\frac{d}{dt}X_t\|^2=0.
\end{equation*}
\end{proposition}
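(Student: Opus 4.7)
The plan is to translate the Hamiltonian co-geodesic system \eqref{cogeo} into a Lagrangian equation for $X_t$ by first identifying the natural Eulerian velocity
\[
v_t := \rho_t^{\gamma-1}\nabla\Phi_t,
\]
so that the first line of \eqref{cogeo} becomes the usual continuity equation $\partial_t\rho_t+\nabla\cdot(\rho_t v_t)=0$. The push-forward relation $\rho_t=X_t\#\rho^0$ then forces $\dot X_t=v_t(X_t)$, and differentiating once more gives, by the chain rule,
\[
\ddot X_t = \bigl(\partial_t v_t+(v_t\cdot\nabla)v_t\bigr)\big|_{x=X_t}.
\]
The task thus reduces to computing the material derivative $Dv_t/Dt$ using \eqref{cogeo} and re-expressing it in terms of $v_t$, $\nabla\cdot v_t$, and $\nabla\log\rho_t$ alone.

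Next I would apply the Leibniz rule to $v_t=\rho_t^{\gamma-1}\nabla\Phi_t$ and handle the two factors separately. For the scalar factor, continuity gives $D\rho_t/Dt=-\rho_t\nabla\cdot v_t$, hence $D\rho_t^{\gamma-1}/Dt=-(\gamma-1)\rho_t^{\gamma-1}\nabla\cdot v_t$; after substituting $\nabla\Phi_t=\rho_t^{1-\gamma}v_t$ this contributes $-(\gamma-1)(\nabla\cdot v_t)v_t$. For the gradient factor, the Hamilton--Jacobi equation yields $\partial_t\nabla\Phi_t=-\frac{\gamma}{2}\nabla(\|\nabla\Phi_t\|^2\rho_t^{\gamma-1})$, while $(v_t\cdot\nabla)\nabla\Phi_t = \frac{1}{2}\rho_t^{\gamma-1}\nabla\|\nabla\Phi_t\|^2$ because $\nabla\Phi_t$ is a gradient field and the Hessian of a function is symmetric, so that $\textrm{Hess}\,\Phi_t(\nabla\Phi_t,\cdot)=\frac{1}{2}\nabla\|\nabla\Phi_t\|^2$; this identity is local and holds on any Riemannian manifold.

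The final step is to convert the resulting expression from $\Phi_t$-variables to $v_t$-variables using $\|\nabla\Phi_t\|^2=\rho_t^{2(1-\gamma)}\|v_t\|^2$. This introduces a correction
\[
\rho_t^{2(\gamma-1)}\nabla\|\nabla\Phi_t\|^2=\nabla\|v_t\|^2-2(\gamma-1)\|v_t\|^2\nabla\log\rho_t,
\]
together with a further $\nabla\log\rho_t$ contribution from $\rho_t^{\gamma-1}\|\nabla\Phi_t\|^2\nabla\rho_t^{\gamma-1}=(\gamma-1)\|v_t\|^2\nabla\log\rho_t$. Collecting the $\|v_t\|^2\nabla\log\rho_t$ coefficients gives $(\gamma-1)^2-\frac{\gamma(\gamma-1)}{2}=\frac{(\gamma-1)(\gamma-2)}{2}$, the kinetic piece contributes $-\frac{\gamma-1}{2}\nabla\|v_t\|^2$, and adding the divergence term $-(\gamma-1)(\nabla\cdot v_t)v_t$ produces the stated equation upon evaluation at $x=X_t$, where $v_t|_{X_t}=\dot X_t$.

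I expect the main obstacle to lie in this last bookkeeping step. Two independent sources generate $\|v_t\|^2\nabla\log\rho_t$ contributions, namely the term $\nabla\rho_t^{\gamma-1}$ inside the Hamilton--Jacobi gradient and the change of variables from $\nabla\|\nabla\Phi_t\|^2$ to $\nabla\|v_t\|^2$, and verifying that their sum collapses to the single coefficient $\frac{(\gamma-1)(\gamma-2)}{2}$ is the only genuinely algebraic obligation; everything else is direct differentiation of \eqref{cogeo}. Because only the symmetry of $\textrm{Hess}\,\Phi_t$ enters, the argument transfers verbatim to a general Riemannian manifold once $\nabla$ is interpreted as the Levi-Civita gradient.
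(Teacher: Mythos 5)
Your proposal is correct and follows essentially the same route as the paper: both identify the Eulerian velocity $v_t=\rho_t^{\gamma-1}\nabla\Phi_t$, compute $\ddot X_t$ as the material derivative $\partial_t v_t+(v_t\cdot\nabla)v_t$ using the continuity equation (giving $D\rho_t/Dt=-\rho_t\nabla\cdot v_t$) and the Hamilton--Jacobi equation, invoke $\textrm{Hess}\,\Phi_t\,\nabla\Phi_t=\tfrac{1}{2}\nabla\|\nabla\Phi_t\|^2$, and then collect the $\|v_t\|^2\nabla\log\rho_t$ terms via $\nabla\rho_t^{\gamma-1}=(\gamma-1)\rho_t^{\gamma-1}\nabla\log\rho_t$. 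Your coefficient bookkeeping, $(\gamma-1)^2-\tfrac{\gamma(\gamma-1)}{2}=\tfrac{(\gamma-1)(\gamma-2)}{2}$, matches the paper's final result.
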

\begin{remark}
Here we present three examples of geodesics in Lagrangian coordinates for generalized density manifold. 
\begin{itemize}
\item[(i)] If $\gamma=1$, the $\mathcal{W}_1$ geodesic equation forms $$\frac{d^2}{dt^2}X_t=0,$$ which is a well-known result in optimal transport. 
\item[(ii)] If $\gamma=2$, the $\mathcal{W}_2$ geodesic equation satisfies 
\begin{equation*}
\frac{d^2}{dt^2}X_t+\frac{1}{2}\nabla\|\frac{d}{dt}X_t\|^2+\frac{d}{dt}X_t\nabla\cdot(\frac{d}{dt}X_t)=0.
\end{equation*}
\item[(iii)] If $\gamma=0$, the $\mathcal{W}_0$, i.e. $H^{-1}$, geodesic equation satisfies 
\begin{equation*}
\frac{d^2}{dt^2}X_t-\frac{1}{2}\nabla\|\frac{d}{dt}X_t\|^2-\frac{d}{dt}X_t\nabla\cdot(\frac{d}{dt}X_t)-\nabla\log\rho_t\|\frac{d}{dt}X_t\|^2=0.
\end{equation*}
\end{itemize}
\end{remark}

\begin{proof}
Denote 
\begin{equation*}
\frac{d}{dt}X_t(t, X_t)=\rho^{\gamma-1}(t, X_t)\nabla\Phi(t, X_t),
\end{equation*}
where $(\rho_t,\Phi_t)$ satisfies \eqref{cogeo}. Then 
\begin{equation}\label{Lagran}
\frac{d^2}{dt^2}X_t(t,X_t)=\Big\{\rho^{\gamma-1}\nabla\partial_t\Phi+\rho^{\gamma-1}\nabla\nabla\Phi \frac{d}{dt}X_t+({\gamma-1})\rho^{{\gamma-2}}\nabla\Phi\frac{d}{dt}\rho \Big\}(t, X_t).
\end{equation}
Notice the fact that 
\begin{equation*}
\begin{split}
\frac{d}{dt}\rho(t,X_t)=&\Big\{\partial_t\rho+\nabla\rho\cdot\frac{d}{dt}X_t\Big\}(t,X_t)\\
=&\Big\{\partial_t\rho+\rho^{\gamma-1}\nabla\rho\cdot \nabla\Phi\Big\}(t,X_t)\\
=&\Big\{\partial_t\rho+ \nabla\cdot(\rho^\gamma\nabla\Phi)-\rho\nabla\cdot(\rho^{\gamma-1}\nabla\Phi)\Big\}(t, X_t)\\
=&-\Big\{\rho\nabla\cdot(\rho^{\gamma-1}\nabla\Phi)\Big\}(t, X_t),
\end{split}
\end{equation*}
where the equality holds since $\partial_t\rho+\nabla\cdot(\rho^\gamma\nabla\Phi)=0$. Substituting the above and $\partial_t\Phi+\frac{\gamma}{2}\|\nabla\Phi\|^2\rho^{\gamma-1}=0$ into \eqref{Lagran}, we obtain 
\begin{equation*}
\begin{split}
\frac{d^2}{dt^2}X_t(t,X_t)=&\Big\{\rho^{\gamma-1}\nabla(-\frac{\gamma}{2}\|\nabla\Phi\|^2\rho^{\gamma-1})+\rho^{\gamma-1}\nabla\nabla\Phi \rho^{\gamma-1}\nabla\Phi\\
&+({\gamma-1})\rho^{\gamma-2}(-\rho\nabla\cdot(\rho^{\gamma-1}\nabla\Phi)\rho\nabla\Phi)\Big\}(t,X_t)\\
=&\Big\{-({\gamma-1})\rho^{\gamma-1}\nabla\nabla\Phi\rho^{\gamma-1}\nabla\Phi-\frac{r}{2}\rho^{\gamma-1}\nabla\rho^{\gamma-1}\|\nabla\Phi\|^2\\
&-({\gamma-1})\rho^{\gamma-1}\nabla\Phi \nabla\cdot(\rho^{\gamma-1}\nabla\Phi)\Big\}(t,X_t)\\
=&\Big\{-\frac{\gamma-1}{2}\nabla\|\rho^{\gamma-1}\nabla\Phi\|^2+(\gamma-1-\frac{\gamma}{2})\rho^{\gamma-1}\nabla\rho^{\gamma-1}\|\nabla\Phi\|^2\\
&-({\gamma-1})\rho^{\gamma-1}\nabla\Phi \nabla\cdot(\rho^{\gamma-1}\nabla\Phi)\Big\}(t,X_t)\\
=&\Big\{-\frac{\gamma-1}{2}\nabla\|\rho^{\gamma-1}\nabla\Phi\|^2+(\gamma-1-\frac{\gamma}{2})(\gamma-1)\nabla\log\rho^{\gamma-1}\|\rho^{\gamma-1}\nabla\Phi\|^2\\
&-({\gamma-1})\rho^{\gamma-1}\nabla\Phi \nabla\cdot(\rho^{\gamma-1}\nabla\Phi)\Big\}(t,X_t)\\
=&\Big\{-\frac{\gamma-1}{2}\nabla\|\frac{dX_t}{dt}\|^2-(\gamma-1)\frac{d}{dt}X_t\nabla\cdot(\frac{d}{dt}X_t)+\frac{(\gamma-2)(\gamma-1)}{2}\nabla\log\rho\|\frac{d}{dt}X_t\|^2\Big\}(t,X_t),
\end{split}
\end{equation*}
where the second last equality uses the fact $(\gamma-1)\log\rho \rho^{\gamma-1}=\nabla\rho^{\gamma-1}$ and $\frac{d}{dt}X_t=\rho^{\gamma-1}\nabla\Phi$. 
\end{proof}

\begin{proposition}\label{prop6}
Consider a functional $\mathcal{F}\colon \mathcal{P}\rightarrow\mathbb{R}$. 
\begin{itemize}
\item[(i)] The Riemannian gradient operator of $\mathcal{F}$ in $(\mathcal{P}, g)$ satisfies 
\begin{equation*}
\textrm{grad}_g\mathcal{F}(\rho)=-\nabla\cdot(\rho^\gamma\nabla \delta\mathcal{F}(\rho)),
\end{equation*}
where $\delta$ is the $L^2$ first variation. The squared norm of gradient operator forms 
\begin{equation*}
g_\rho(\textrm{grad}_g\mathcal{F}(\rho), \textrm{grad}_g\mathcal{F}(\rho))=\int \|\nabla\delta\mathcal{F}(\rho)\|^2\rho^\gamma dx.
\end{equation*}
\item[(ii)]
The Riemannian Hessian operator of $\mathcal{F}$ in $(\mathcal{P}, g)$ satisfies 
\begin{equation}\label{Riem}
\begin{split}
&\textrm{Hess}_g\mathcal{F}(\rho)(\sigma_1, \sigma_2)\\
=&\int \int \nabla_x\nabla_y\delta^2\mathcal{F}(\rho)(x,y)\nabla\Phi_1(x)\nabla\Phi_2(y)\rho(x)^\gamma\rho(y)^\gamma dxdy\\
&+\frac{\gamma}{2}\int \textrm{Hess} \delta\mathcal{F}(\rho)(x)(\nabla\Phi_1(x), \nabla\Phi_2(x))\rho(x)^{2\gamma-1} dx\\
&+\frac{\gamma(\gamma-1)}{2}\int \Big\{(\nabla\delta\mathcal{F}(\rho)(x), \nabla\Phi_1(x))(\nabla\Phi_2(x),\frac{\nabla\rho(x)}{\rho(x)})\\
&\hspace{2.3cm}+(\nabla\delta\mathcal{F}(\rho)(x), \nabla\Phi_2(x))(\nabla\Phi_1(x),\frac{\nabla\rho(x)}{\rho(x)})\\
&\hspace{2.3cm}-(\nabla\delta\mathcal{F}(\rho)(x), \frac{\nabla\rho(x)}{\rho(x)})(\nabla\Phi_1(x), \nabla\Phi_2(x))\Big\}\rho(x)^{2\gamma-1} dx,
\end{split}
\end{equation}
where $\sigma_i=-\Delta_{\rho^\gamma}\Phi_i$, $i=1,2$, and $\delta^2$ is the $L^2$ second variation operator. 
\end{itemize}
\end{proposition}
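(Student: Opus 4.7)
My plan is to derive both identities by unfolding the definitions of the Riemannian gradient and Hessian on $(\mathcal{P}, g)$, using the explicit form of the metric tensor in Definition \ref{metric} together with the co-geodesic equations \eqref{cogeo} established in Proposition \ref{proposition}.

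For part (i), I would start from the variational characterization of the Riemannian gradient: $g_\rho(\textrm{grad}_g\mathcal{F}(\rho), \sigma) = d\mathcal{F}(\rho)(\sigma) = \int \delta\mathcal{F}(\rho)\,\sigma\,dx$ for every $\sigma \in T_\rho\mathcal{P}$. Writing $\textrm{grad}_g\mathcal{F}(\rho) = -\Delta_{\rho^\gamma}\Psi$ for a cotangent potential $\Psi$ and using the second form of the metric in Definition \ref{metric}, the condition reduces to $\int \sigma\,\Psi\,dx = \int \delta\mathcal{F}(\rho)\,\sigma\,dx$ for all mean-zero $\sigma$, forcing $\Psi = \delta\mathcal{F}(\rho)$ modulo additive constants. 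The asserted formula $\textrm{grad}_g\mathcal{F}(\rho) = -\nabla\cdot(\rho^\gamma\nabla\delta\mathcal{F}(\rho))$ follows, and the squared-norm identity is then obtained by a single integration by parts.

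For part (ii), I would use the fact that along a geodesic $\rho_t$ with $\dot\rho_0 = \sigma$, the covariant derivative $\nabla_{\dot\rho}\dot\rho$ vanishes, so $\textrm{Hess}_g\mathcal{F}(\sigma,\sigma) = \tfrac{d^2}{dt^2}\mathcal{F}(\rho_t)\bigl|_{t=0}$. By the Hamiltonian system \eqref{cogeo}, $\partial_t\rho_t = -\nabla\cdot(\rho_t^\gamma\nabla\Phi_t)$ and $\partial_t\Phi_t = -\tfrac{\gamma}{2}\|\nabla\Phi_t\|^2\rho_t^{\gamma-1}$. Starting from
\begin{equation*}
\tfrac{d}{dt}\mathcal{F}(\rho_t) = \int \delta\mathcal{F}(\rho_t)\,\partial_t\rho_t\,dx = \int (\nabla\delta\mathcal{F}(\rho_t),\nabla\Phi_t)\,\rho_t^\gamma\,dx,
\end{equation*}
I would differentiate once more and collect three contributions: the variation of $\delta\mathcal{F}(\rho_t)$ (which, after replacing $\partial_t\rho_t$ by $-\nabla\cdot(\rho^\gamma\nabla\Phi)$ and integrating by parts in $y$, produces the double integral against $\nabla_x\nabla_y\delta^2\mathcal{F}(\rho)(x,y)$); the variation of $\Phi_t$ (substitute $\partial_t\Phi_t$ and expand $\nabla(\|\nabla\Phi\|^2\rho^{\gamma-1})$); and the variation of the weight $\rho_t^\gamma$ (substitute $\partial_t\rho_t$ and integrate by parts so that $\nabla\cdot(\rho^\gamma\nabla\Phi)$ is converted into $\textrm{Hess}\Phi\cdot\nabla\Phi$ and $\gamma(\gamma-1)(\nabla\rho,\nabla\Phi)$ terms). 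The $\textrm{Hess}\Phi\cdot\nabla\delta\mathcal{F}$ pieces arising from the second and third contributions have opposite signs and cancel, leaving the clean $\textrm{Hess}\delta\mathcal{F}(\nabla\Phi,\nabla\Phi)$ term and the $\gamma(\gamma-1)$-weighted correction involving $\nabla\log\rho$. The symmetric bilinear form displayed in \eqref{Riem} is then recovered by polarization in $(\sigma_1,\sigma_2)$.

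The main obstacle is the bookkeeping in this second differentiation: several contributions with overlapping powers $\rho^{\gamma-1}$, $\rho^{\gamma}$, $\rho^{2\gamma-1}$ and $\rho^{2\gamma-2}$ must be collected, and it is precisely the cancellation of the $(\nabla\delta\mathcal{F},\textrm{Hess}\Phi\cdot\nabla\Phi)\rho^{2\gamma-1}$ piece between the $\partial_t\Phi_t$ contribution and the integrated-by-parts $\partial_t\rho_t$ contribution that exposes the intrinsic $\textrm{Hess}\delta\mathcal{F}$ structure. Recasting the $\gamma(\gamma-1)$-weighted remainder into the manifestly symmetric three-term form in \eqref{Riem} additionally requires rewriting one $(\nabla\delta\mathcal{F},\nabla\rho/\rho)\|\nabla\Phi\|^2$ factor via the product rule, so that its bilinear dependence on $\Phi_1,\Phi_2$ becomes transparent before polarization.
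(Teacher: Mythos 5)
Your part (i) is identical to the paper's argument. For part (ii) your route is correct but genuinely different from the one the paper uses for this proposition: the paper computes $\textrm{Hess}_g\mathcal{F}(\sigma_1,\sigma_2)=\int\int\delta^2\mathcal{F}\,\sigma_1\sigma_2\,dxdy-\int\delta\mathcal{F}\,\Gamma_\rho(\sigma_1,\sigma_2)\,dx$ and expands the three pieces of the Christoffel symbol by integration by parts, whereas you differentiate $\mathcal{F}(\rho_t)$ twice along the co-geodesic flow \eqref{cogeo} and polarize. The two are equivalent (the Christoffel symbol is exactly what replaces $\partial_{tt}\rho_t$ along a geodesic), and in fact the paper itself adopts your route later, in the proof of Lemma \ref{lemma7} for $\mathcal{F}=\mathcal{D}_\gamma$. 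Your approach works directly in cotangent variables, gives the bilinear form only after polarization, and hides the Christoffel symbol; the paper's approach is immediately bilinear in $(\Phi_1,\Phi_2)$ and isolates the geometric content in the connection, at the cost of first deriving $\Gamma_\rho$. The cancellation you identify is the right one: the $\textrm{Hess}\Phi(\nabla\delta\mathcal{F},\nabla\Phi)\rho^{2\gamma-1}$ contribution from $\partial_t\Phi_t$ enters with coefficient $-\gamma$ and the one from the weight variation with $+\gamma$, mirroring the cancellation of the $\textrm{Hess}\Phi_i$ terms among the three Christoffel pieces in the paper.

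One caveat you should be aware of when you carry out the bookkeeping: your computation will produce the $\textrm{Hess}\,\delta\mathcal{F}(\rho)(\nabla\Phi_1,\nabla\Phi_2)\rho^{2\gamma-1}$ term with coefficient $\gamma$, not the $\frac{\gamma}{2}$ displayed in \eqref{Riem}. This is not an error in your plan: the paper's own expansion of the Christoffel term yields two copies of $\textrm{Hess}\,\delta\mathcal{F}$ multiplied by $\frac{\gamma}{2}$, i.e.\ the same coefficient $\gamma$, and the $\gamma=1$ specialization (where the term must appear with unit coefficient, as in the classical Wasserstein Hessian of a potential energy) confirms it. So do not force your result to match the displayed $\frac{\gamma}{2}$; the discrepancy lies in the statement, not in either derivation, and the $\frac{\gamma(\gamma-1)}{2}$ coefficient on the remaining three-term block is consistent between both computations.
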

\begin{remark}
Several interesting examples of Hessian operators of $\mathcal{F}$ have been studied in \cite{C1}, including linear and interaction potential energies. 
\end{remark}
\begin{proof}
(i) The Riemannian gradient operator satisfies 
\begin{equation*}
g_\rho(\textrm{grad}_g\mathcal{F}(\rho), \sigma)=\int \delta\mathcal{F}(\rho)\sigma dx,\quad \textrm{for any $\sigma\in T_\rho\mathcal{P}$}.
\end{equation*}
Then 
\begin{equation*}
\begin{split}
\textrm{grad}_g\mathcal{F}(\rho)=&\Big((-\Delta_{\rho^\gamma})^{-1}\Big)^{-1}\delta\mathcal{F}(\rho)=-\Delta_{\rho^\gamma}\delta\mathcal{F}(\rho)\\
=&-\nabla\cdot(\rho^\gamma \nabla\delta\mathcal{F}(\rho)).
\end{split}
\end{equation*}

(ii) The Riemannian Hessian operator satisfies 
\begin{equation*}
\begin{split}
&\textrm{Hess}_g\mathcal{F}(\rho)(\sigma_1, \sigma_2)\\=&\int \int \delta^2\mathcal{F}(\rho)(x,y)\sigma_1(x)\sigma_2(y)dxdy-\int \delta\mathcal{F}(\rho)(x)\Gamma_\rho(\sigma_1,\sigma_2)(x)dx\\
=&\int \int \delta^2\mathcal{F}(\rho)(x,y)\nabla_x\cdot(\rho(x)^\gamma\nabla_x\Phi_1(x))\nabla_y\cdot(\rho(y)^\gamma\nabla_y\Phi_2(y))dxdy\hspace{3.2cm} (h1)\\
&+\frac{\gamma}{2}\int \delta\mathcal{F}(\rho)(x) \Big\{\Delta_{\rho^{\gamma-1}\Delta_{\rho^\gamma}\Phi_1}\Phi_2+\Delta_{\rho^{\gamma-1}\Delta_{\rho^\gamma}\Phi_2}\Phi_1+\Delta_{\rho^\gamma}\big((\nabla\Phi_1, \nabla\Phi_2)\rho^{\gamma-1}\big)\Big\} dx\qquad (h2)
\end{split}
\end{equation*}
We next formulate the terms $h_1$, $h_2$ separately. Notice the fact that 
\begin{equation*}
\begin{split}
(h1)=&\int \delta^2\mathcal{F}(\rho)(x,y)\nabla\cdot(\rho(x)^\gamma\nabla\Phi_1(x))\nabla\cdot(\rho(y)^\gamma\nabla_y\Phi_2(y))dxdy \\
=&\int \int -\nabla_x\delta^2\mathcal{F}(\rho)(x,y)\nabla\Phi_1(x)\rho(x)^{\gamma}dx \nabla_y\cdot(\rho(y)^\gamma \nabla_y\Phi_2(y))dy\\
=&\int \int \nabla_x\nabla_y\delta^2\mathcal{F}(\rho)(x,y)\nabla\Phi_1(x)\nabla\Phi_2(y)\rho(x)^\gamma\rho(y)^\gamma dxdy,
\end{split}
\end{equation*}
where the second and third equalities are shown by integration by parts with respect to $x$, $y$. In addition, we estimate three terms in (h2). 
\begin{equation*}
\begin{split}
 &\int \delta\mathcal{F}(\rho)\Delta_{\rho^{\gamma-1}\Delta_{\rho^\gamma}\Phi_1}\Phi_2 dx\\
 =&\int \delta\mathcal{F}(\rho)\nabla\cdot(\rho^{\gamma-1}\nabla\cdot(\rho^\gamma\nabla\Phi_1)\nabla\Phi_2)dx\\ 
=&-\int (\nabla\delta\mathcal{F}(\rho), \nabla\Phi_2)\nabla\cdot(\rho^\gamma\nabla\Phi_1)\rho^{\gamma-1}dx\\ 
=&\int \Big(\nabla\big((\nabla\delta\mathcal{F}(\rho), \nabla\Phi_2)\rho^{\gamma-1}\big), \nabla\Phi_1\Big)\rho^\gamma dx\\
=&\int \Big\{\textrm{Hess}\delta\mathcal{F}(\rho)(\nabla\Phi_1, \nabla\Phi_2)+\textrm{Hess}\Phi_2(\nabla\Phi_1, \nabla\delta\mathcal{F}(\rho))\Big\}\rho^{2\gamma-1}dx\\
&+\int (\nabla\delta\mathcal{F}(\rho), \nabla\Phi_2)(\nabla\rho^{\gamma-1}, \nabla\Phi_1)\rho^\gamma dx,
\end{split}
\end{equation*}
where the first and second equality holds by integration by parts with respect to $x$. Similarly, 
\begin{equation*}
\begin{split}
 &\int \delta\mathcal{F}(\rho)(x)\Delta_{\rho^{\gamma-1}\Delta_{\rho^\gamma}\Phi_2}\Phi_1 dx\\
 =&\int \Big(\nabla\big((\nabla\delta\mathcal{F}(\rho), \nabla\Phi_1)\rho^{\gamma-1}\big), \nabla\Phi_2\Big)\rho^\gamma dx\\
 =&\int \Big\{\textrm{Hess}\delta\mathcal{F}(\rho)(\nabla\Phi_1, \nabla\Phi_2)+\textrm{Hess}\Phi_1(\nabla\Phi_2, \nabla\delta\mathcal{F}(\rho))\Big\}\rho^{2\gamma-1}dx\\
&+\int (\nabla\delta\mathcal{F}(\rho), \nabla\Phi_1)(\nabla\rho^{\gamma-1}, \nabla\Phi_2)\rho^\gamma dx.
\end{split}
\end{equation*}
And 
\begin{equation*}
\begin{split}
&\int \delta\mathcal{F}(\rho)\Delta_{\rho^\gamma}(\nabla\Phi_1, \nabla\Phi_2)\rho^{\gamma-1}dx\\
=&\int \delta\mathcal{F}(\rho)\nabla\cdot(\rho^\gamma \nabla \big((\nabla\Phi_1,\nabla\Phi_2)\rho^{\gamma-1}\big)) dx \\
=&-\int \Big(\nabla\delta\mathcal{F}(\rho),  \nabla\big((\nabla\Phi_1,\nabla\Phi_2)\rho^{\gamma-1}\big)\Big)\rho^\gamma dx\\
=&-\int (\nabla\delta\mathcal{F}(\rho),  \nabla\rho^{\gamma-1})(\nabla\Phi_1, \nabla\Phi_2)\rho^\gamma dx\\
&-\int \Big\{\textrm{Hess}\Phi_1(\nabla\delta\mathcal{F}(\rho), \nabla\Phi_2)\rho^{2\gamma-1}+\textrm{Hess}\Phi_2(\nabla\delta\mathcal{F}(\rho), \nabla\Phi_1)\rho^{2\gamma-1}\Big\}dx.
\end{split}
\end{equation*}
Combining the above three terms in (h2), we finish the proof. 
\end{proof}

\subsection{Gradient systems and $\gamma$--drift diffusion process}
In this sequel, we present the relation among Riemannian gradient operators in $(\mathcal{P}, g)$, $\gamma$--divergence functional, $\gamma$--Fisher information and $\gamma$--drift diffusion process, see details in \cite{Li}.   

Given a $\gamma$-divergence functional, the Kolomogrov forward operator of $\gamma$--drift diffusion process is the negative gradient descent direction in $(\mathcal{P}, g)$. And the squared gradient norm of $\gamma-$ divergence functional in $(\mathcal{P}, g)$ forms the $\gamma$--Fisher information functional. 
\begin{lemma}
The following statements hold.
\begin{itemize}
\item[(i)]\begin{equation*}
L^*_\gamma\rho=-\textrm{grad}_g\mathcal{D}_\gamma(\rho\|\mu), 
\end{equation*}
where $L^*_\gamma$ is the adjoint operator of $L_\gamma$ in $L^2(\rho)$. 
\item[(ii)] 
\begin{equation*}
\mathcal{I}_\gamma(\rho\|\mu)=g_\rho(\textrm{grad}_g\mathcal{D}_\gamma(\rho\|\mu), \textrm{grad}_g\mathcal{D}_\gamma(\rho\|\mu)).
\end{equation*}
\end{itemize}
\end{lemma}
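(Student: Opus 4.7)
Both identities reduce to Proposition \ref{prop6}(i), which already expresses $\textrm{grad}_g\mathcal{F}(\rho)$ and its squared norm in terms of the $L^2$-first variation $\delta\mathcal{F}(\rho)$. The plan is therefore to compute $\delta\mathcal{D}_\gamma(\rho\|\mu)$ explicitly and then push the resulting object through the two formulas of Proposition \ref{prop6}(i), matching them against the defining expressions for $L^*_\gamma\rho$ and $\mathcal{I}_\gamma(\rho\|\mu)$. No new geometry enters: the work is a careful algebraic reduction using $\nabla\mu^{\gamma-1}=(\gamma-1)\mu^{\gamma-2}\nabla\mu$.

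First I would compute $\delta\mathcal{D}_\gamma/\delta\rho=f'(\rho/\mu)$. For $\gamma\neq 1,2$ this gives $\delta\mathcal{D}_\gamma(\rho)=\frac{1}{1-\gamma}(\rho/\mu)^{1-\gamma}$; the limiting cases $\gamma=1$ and $\gamma=2$ are treated identically with $\delta\mathcal{D}_1=\log(\rho/\mu)+1$ and $\delta\mathcal{D}_2=-\mu/\rho$. Differentiating yields $\nabla\delta\mathcal{D}_\gamma=(\rho/\mu)^{-\gamma}\nabla(\rho/\mu)=(\rho/\mu)^{1-\gamma}\nabla\log(\rho/\mu)$, uniformly in $\gamma$.

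For part (i), Proposition \ref{prop6}(i) gives $-\textrm{grad}_g\mathcal{D}_\gamma=\nabla\cdot(\rho^\gamma\nabla\delta\mathcal{D}_\gamma)$. Using the expression above, $\rho^\gamma\nabla\delta\mathcal{D}_\gamma=\mu^{\gamma-1}\nabla\rho-\rho\mu^{\gamma-2}\nabla\mu$, which I would rewrite as $\nabla(\mu^{\gamma-1}\rho)-\rho\cdot\frac{\gamma}{\gamma-1}\nabla\mu^{\gamma-1}$ by absorbing the $\rho\nabla\mu$ term via the product rule. Taking divergence,
\begin{equation*}
-\textrm{grad}_g\mathcal{D}_\gamma(\rho\|\mu)=\Delta(\mu^{\gamma-1}\rho)-\nabla\cdot\!\Big(\rho\cdot\tfrac{\gamma}{\gamma-1}\nabla\mu^{\gamma-1}\Big).
\end{equation*}
In parallel, integrating $\int(L_\gamma\Phi)\rho\,dx$ by parts against the generator formula in the statement yields exactly this right-hand side as $L^*_\gamma\rho$, which proves (i).

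For part (ii), Proposition \ref{prop6}(i) also gives the squared norm as $\int\|\nabla\delta\mathcal{D}_\gamma\|^2\rho^\gamma dx$. Substituting $\nabla\delta\mathcal{D}_\gamma=(\rho/\mu)^{1-\gamma}\nabla\log(\rho/\mu)$ and collecting powers produces the integrand of $\mathcal{I}_\gamma(\rho\|\mu)$, completing the proof. The only real obstacle is exponent bookkeeping in the rewrite $\mu^{\gamma-1}\nabla\rho-\rho\mu^{\gamma-2}\nabla\mu=\nabla(\mu^{\gamma-1}\rho)-\rho\cdot\frac{\gamma}{\gamma-1}\nabla\mu^{\gamma-1}$; once this identity is in hand, both claims are essentially one line each.
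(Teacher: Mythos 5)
Your proposal is correct and follows essentially the same route as the paper: compute $\delta\mathcal{D}_\gamma=\frac{1}{1-\gamma}(\rho/\mu)^{1-\gamma}$, feed it through the gradient and squared-norm formulas of Proposition \ref{prop6}(i), and match $L^*_\gamma\rho$ by integrating the generator by parts. The only (cosmetic) difference is that you identify both sides with the Fokker--Planck form $\Delta(\mu^{\gamma-1}\rho)-\nabla\cdot\big(\rho\,\tfrac{\gamma}{\gamma-1}\nabla\mu^{\gamma-1}\big)$, whereas the paper uses the equivalent intermediate expression $\nabla\cdot(\mu^{\gamma}\nabla\tfrac{\rho}{\mu})$.
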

\begin{proof}
We first prove (i). On the one hand, the the Kolomogrov forward operator forms 
\begin{equation*}
L^*_\gamma \rho= \nabla\cdot(\mu^\gamma \nabla \frac{\rho}{\mu}).
\end{equation*}
We need to show 
\begin{equation*}
 \int L_\gamma\Phi(x) \rho(x) dx=\int \Phi(x) L^*_\gamma\rho(x) dx.
\end{equation*}
Notice the fact 
\begin{equation*}
\begin{split}
\int \rho L_\gamma \Phi(x)
=&\int \rho \Big\{(\nabla\Phi, \nabla\mu^{\gamma-1})+\mu^{\gamma-1}\Delta\Phi -\frac{1}{\gamma-1}(\nabla\Phi, \nabla\mu^{\gamma-1})\Big\}dx\\
=&\int \rho \Big\{\nabla\cdot(\mu^{\gamma-1}\nabla\Phi)-(\nabla\Phi, \nabla\mu)\mu^{\gamma-2}\Big\}dx\\
=&\int -(\nabla\Phi, \nabla \rho)\mu^{\gamma-1}-(\nabla\Phi, \nabla\mu)\mu^{\gamma-2}\rho dx\\
=&-\int (\nabla\Phi, \nabla\frac{\rho}{\mu})\mu^\gamma dx\\
=&\int \Phi\nabla\cdot(\mu^\gamma \nabla \frac{\rho}{\mu})dx\\
=&\int \Phi(x)L^*_\gamma\rho(x)dx,
\end{split}
\end{equation*}
where the second equality uses the fact $\nabla\cdot(\mu^{\gamma-1}\nabla\Phi)=(\nabla\mu^{\gamma-1},\nabla\Phi)+\mu^{\gamma-1}\Delta\Phi$ and the fourth equality applies the fact $\nabla\frac{\rho}{\mu}=\mu^{-1}\nabla\rho-\mu^{-2}\rho\nabla\mu$.
On the other hand, the negative gradient operator of $\mathcal{D}_\gamma(\rho\|\mu)$ in $(\mathcal{P}, g)$ satisfies 
\begin{equation*}
\begin{split}
-\textrm{grad}_g\mathcal{D}_\gamma(\rho\|\mu)=&\nabla\cdot(\rho^\gamma \nabla \delta\mathcal{D}_\gamma(\rho\|\mu))\\
=&\nabla\cdot(\rho^\gamma \nabla \frac{1}{1-\gamma}(\frac{\rho}{\mu})^{1-\gamma})\\
=&\nabla\cdot(\rho^\gamma (\frac{\rho}{\mu})^{-\gamma} \nabla\frac{\rho}{\mu})\\
=&\nabla\cdot(\mu^\gamma \nabla\frac{\rho}{\mu}).
\end{split}
\end{equation*}
Comparing the above, we finish the proof. 

We next prove (ii). Notice the fact that
\begin{equation*}
\begin{split}
g_\rho(\textrm{grad}_g\mathcal{D}_\gamma(\rho\|\mu), \textrm{grad}_g\mathcal{D}_\gamma(\rho\|\mu))=&\int \|\nabla \delta\mathcal{D}_\gamma(\rho\|\mu)\|^2\rho^\gamma dx\\
=&\int\|\nabla \frac{1}{1-\gamma}(\frac{\rho}{\mu})^{1-\gamma}\|^2\rho^\gamma dx\\
=&\int \|\nabla \log \frac{\rho}{\mu}\|^2(\frac{\rho}{\mu})^{2-2\gamma}\rho^\gamma dx\\
=&\int \|\nabla \log \frac{\rho}{\mu}\|^2\rho^{2-\gamma}\mu^{2-2\gamma} dx\\
=&\mathcal{I}_\gamma(\rho),
\end{split}
\end{equation*}
where the second equality uses the fact that $\frac{1}{1-\gamma}\nabla (\frac{\rho}{\mu})^{1-\gamma}=(\frac{\rho}{\mu})^{-\gamma}\nabla\frac{\rho}{\mu}=(\frac{\rho}{\mu})^{1-\gamma}\nabla\log\frac{\rho}{\mu}$. 

\end{proof}

Shortly, we shall apply the above two geometric relations to give sufficient conditions for hypercontractivity of $\gamma$--diffusion process and generalized functional inequalities. 
\section{Proof}\label{section4}

Here we present the proof of generalized diffusion hypercontractivity using the geometric tools built in previous section. 
\subsection{Sketch of proof}
Consider the gradient flow of $\gamma$--divergence functional 
\begin{equation*}
\partial_t\rho_t=-\textrm{grad}_g\mathcal{D}_\gamma(\rho_t\|\mu)=L^*_\gamma\rho_t,
\end{equation*}
where $\rho_t$ is the probability density function at time $t$. Then the first time derivative of $\gamma$--divergence along the gradient flow forms 
\begin{equation*}
\frac{d}{dt}\mathcal{D}_\gamma(\rho_t\|\mu)=-g_\rho(\textrm{grad}_g\mathcal{D}_\gamma(\rho_t\|\mu), \textrm{grad}_g\mathcal{D}_\gamma(\rho_t\|\mu)).
\end{equation*}
And the second time derivative of $\gamma$--divergence becomes 
\begin{equation*}
\begin{split}
\frac{d^2}{dt^2}\mathcal{D}_\gamma(\rho_t\|\mu)=&2\textrm{Hess}_g\mathcal{D}_\gamma(\rho_t\|\mu)(\partial_t\rho_t, \partial_t\rho_t)\\
=&2\textrm{Hess}_g\mathcal{D}_\gamma(\rho_t\|\mu)(\textrm{grad}_g\mathcal{D}_\gamma(\rho_t\|\mu), \textrm{grad}_g\mathcal{D}_\gamma(\rho_t\|\mu)).
\end{split}
\end{equation*}
If we can bound the ratio between the first and second derivative, i.e. 
\begin{equation}\label{compare}
\frac{d^2}{dt^2}\mathcal{D}_\gamma(\rho_t\|\mu)\geq -2 \kappa \frac{d}{dt}\mathcal{D}_\gamma(\rho_t\|\mu),
\end{equation}
we prove Theorem \ref{thm}. This is true if we integrate \eqref{compare} on both sides for $[t, \infty)$, then  
\begin{equation}\label{compare2}
-\frac{d}{dt}\mathcal{D}_\gamma(\rho_t\|\mu) \geq 2\kappa \mathcal{D}_\gamma(\rho_t\|\mu).
\end{equation}
By Grownwall's inequality, we obtain the hypercontractivity of $\gamma$--drift diffusion process
\begin{equation*}
\mathcal{D}_\gamma(\rho_t\|\mu)\leq e^{-2\kappa t}\mathcal{D}_\gamma(\rho_0\|\mu).
\end{equation*}
In addition, notice the fact that $\frac{d}{dt}\mathcal{D}_\gamma(\rho_t\|\mu)=-\mathcal{I}_\gamma(\rho_t\|\mu_t)$, then inequality \eqref{compare2} forms 
\begin{equation*}
\mathcal{I}_\gamma(\rho_t\|\mu) \geq 2\kappa \mathcal{D}_\gamma(\rho_t\|\mu).
\end{equation*}
By choosing $t=0$ with arbitrary $\rho_0\in\mathcal{P}$, the Log-Sobolev inequality \eqref{inequality} is proven. 

From above arguments, the proof boils down to estimate the ratio in \eqref{compare}. Here the formulation \eqref{compare} is equivalent to 
\begin{equation*}
\textrm{Hess}_g\mathcal{D}_\gamma(\rho\|\mu)(\textrm{grad}_g\mathcal{D}_\gamma(\rho\|\mu), \textrm{grad}_g\mathcal{D}_\gamma(\rho\|\mu))\geq \kappa g_\rho(\textrm{grad}_g\mathcal{D}_\gamma(\rho\|\mu), \textrm{grad}_g\mathcal{D}_\gamma(\rho\|\mu)).
\end{equation*}

Next, our goal is to derive the Hessian operators of $\gamma$--divergence in $(\mathcal{P}, g)$. 
\subsection{Hessian operator estimation}
\begin{lemma}[Hessian of $\gamma$--divergence in $(\mathcal{P}, g)$]\label{lemma7}
Denote $\sigma=-\Delta_{\rho^\gamma}\Phi$, then
\begin{equation*}
\begin{split}
&\textrm{Hess}_g\mathcal{D}_\gamma(\rho\|\mu)(\sigma, \sigma)\\
=&\int \rho^\gamma \Big\{\big(\mu^{\gamma-1}\textrm{Ric}-\Delta\mu^{\gamma-1}-\frac{1}{\gamma-1}\textrm{Hess}\mu^{\gamma-1}\big)(\nabla\Phi, \nabla\Phi)+\mu^{\gamma-1}\|\textrm{Hess}\Phi\|^2\\
&\hspace{1.2cm}+\gamma(\gamma-1)\mu^{\gamma-1}\Big((\frac{\nabla\rho}{\rho},\nabla\Phi) (\frac{\nabla\mu}{\mu}, \nabla\Phi)-\frac{1}{2}(\frac{\nabla\rho}{\rho}, \frac{\nabla\mu}{\mu}+\frac{\nabla\rho}{\rho})(\nabla\Phi, \nabla\Phi)\Big)\Big\}dx.
\end{split}
\end{equation*}
\end{lemma}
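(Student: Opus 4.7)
My plan is to apply the general Hessian formula (\ref{Riem}) of Proposition \ref{prop6} directly to $\mathcal{F}=\mathcal{D}_\gamma(\cdot\|\mu)$, and to simplify the three resulting terms using a weighted Bochner identity on $M$.

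The first step is to record the $L^2$ variations. Because $\mathcal{D}_\gamma$ is a local functional of $\rho$, these read
\begin{equation*}
\delta\mathcal{D}_\gamma(\rho)=\tfrac{1}{1-\gamma}\rho^{1-\gamma}\mu^{\gamma-1},\qquad \delta^2\mathcal{D}_\gamma(\rho)(x,y)=\rho(x)^{-\gamma}\mu(x)^{\gamma-1}\delta(x-y),
\end{equation*}
and I would then compute $\nabla\delta\mathcal{D}_\gamma=\rho^{1-\gamma}\mu^{\gamma-1}\nabla\log(\rho/\mu)$ and the covariant Hessian $\textrm{Hess}\,\delta\mathcal{D}_\gamma$ on $M$ by the Leibniz rule. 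Substituting into (\ref{Riem}) with $\sigma_1=\sigma_2=\sigma=-\Delta_{\rho^\gamma}\Phi$, the delta distribution in $\delta^2\mathcal{D}_\gamma$ collapses the first double integral to $\int\rho^{-\gamma}\mu^{\gamma-1}[\nabla\cdot(\rho^\gamma\nabla\Phi)]^2\,dx$; the $\tfrac{\gamma}{2}$-term produces the $\textrm{Hess}\,\mu^{\gamma-1}$ piece through the $\mu^{\gamma-1}$ factor in $\delta\mathcal{D}_\gamma$; and the $\tfrac{\gamma(\gamma-1)}{2}$-bracket becomes an explicit quadratic form in $(\nabla\log\rho,\nabla\Phi)$ and $(\nabla\log\mu,\nabla\Phi)$ weighted by $\rho^\gamma\mu^{\gamma-1}$.

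To produce the Ricci tensor and $\|\textrm{Hess}\,\Phi\|^2$ factor demanded by the Lemma, I would expand $\nabla\cdot(\rho^\gamma\nabla\Phi)=\rho^\gamma[\Delta\Phi+\gamma(\nabla\log\rho,\nabla\Phi)]$, square, and apply the weighted Bochner identity
\begin{equation*}
\int w\,(\Delta\Phi)^2\,dx=\int w\big[\|\textrm{Hess}\,\Phi\|^2+\textrm{Ric}(\nabla\Phi,\nabla\Phi)\big]\,dx-\tfrac{1}{2}\int\Delta w\,\|\nabla\Phi\|^2\,dx-\int(\nabla w,\nabla\Phi)\Delta\Phi\,dx
\end{equation*}
to the leading piece, with weight $w=\rho^\gamma\mu^{\gamma-1}$; this identity follows from integrating the pointwise Bochner formula $\tfrac{1}{2}\Delta\|\nabla\Phi\|^2=\|\textrm{Hess}\,\Phi\|^2+(\nabla\Phi,\nabla\Delta\Phi)+\textrm{Ric}(\nabla\Phi,\nabla\Phi)$ against $w$ followed by one integration by parts. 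The $\Delta w$ contribution supplies the $-\Delta\mu^{\gamma-1}(\nabla\Phi,\nabla\Phi)$ term of the Lemma (alongside auxiliary $\Delta\rho^\gamma$ and $(\nabla\rho,\nabla\mu)$ pieces), while the $(\nabla w,\nabla\Phi)\Delta\Phi$ correction combines with the cross term $2\gamma\rho^\gamma\mu^{\gamma-1}(\nabla\log\rho,\nabla\Phi)\Delta\Phi$ from the square; one further integration by parts eliminates the remaining $\Delta\Phi$ factor, leaving only quadratic combinations of $\nabla\log\rho$, $\nabla\log\mu$, and $\nabla\Phi$.

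The principal obstacle is algebraic bookkeeping. Many mixed terms involving $(\nabla\log\rho,\nabla\Phi)$, $(\nabla\log\mu,\nabla\Phi)$, and $(\nabla\log\rho,\nabla\log\mu)\|\nabla\Phi\|^2$ arise from all three pieces of (\ref{Riem}), and one has to verify that they collapse exactly into the asymmetric bracket $\gamma(\gamma-1)\mu^{\gamma-1}\big[(\nabla\log\rho,\nabla\Phi)(\nabla\log\mu,\nabla\Phi)-\tfrac{1}{2}(\nabla\log\rho,\nabla\log\mu+\nabla\log\rho)\|\nabla\Phi\|^2\big]$ stated in the Lemma; in particular, the $(\nabla\log\rho,\nabla\Phi)^2$ contributions coming from the $\tfrac{\gamma(\gamma-1)}{2}$-bracket must be cancelled against the quadratic-in-$\log\rho$ remnants produced by the Bochner reduction. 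As a natural consistency check, specialising to $\rho=\mu$ collapses both sides to $\int\mu^{-1}[\nabla\cdot(\mu^\gamma\nabla\Phi)]^2\,dx$: the right-hand side matches the generalised Yano formula (Theorem \ref{cor2}), and the left-hand side agrees with the direct evaluation $\textrm{Hess}_g\mathcal{D}_\gamma(\mu\|\mu)(\sigma,\sigma)=\int f''(1)\sigma^2/\mu\,dx$ at the minimum of $\mathcal{D}_\gamma$.
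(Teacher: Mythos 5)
Your plan is sound and is essentially the paper's own computation in different clothing: the paper evaluates $\frac{d^2}{dt^2}\mathcal{D}_\gamma(\rho_t\|\mu)$ along the co-geodesic flow \eqref{cogeo} — which is exactly what substituting into \eqref{Riem} amounts to, as the paper itself notes in the first line of its proof — and it likewise finishes by invoking Bochner's formula (pointwise under the integral, rather than in your pre-integrated weighted form with $w=\rho^\gamma\mu^{\gamma-1}$) to produce the $\textrm{Ric}$ and $\|\textrm{Hess}\,\Phi\|^2$ terms. Your identifications of $\delta\mathcal{D}_\gamma$, $\delta^2\mathcal{D}_\gamma$, the collapse of the double integral to $\int\rho^{-\gamma}\mu^{\gamma-1}\big(\nabla\cdot(\rho^\gamma\nabla\Phi)\big)^2dx$, and the $\rho=\mu$ consistency check against Theorem \ref{cor2} all match the paper, and the cancellations you flag as the remaining bookkeeping are exactly what the paper carries out explicitly in its terms (a)--(e).
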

\begin{remark}
In fact, there are several interesting cases for Hessian operators of $\gamma$--divergence in density manifold. Denote $\sigma=-\Delta_{\rho^\gamma}\Phi$.
\begin{itemize}
\item[(i)]If $\gamma=1$, then
\begin{equation*}
\begin{split}
\textrm{Hess}_g\mathcal{D}_\gamma(\rho\|\mu)(\sigma,\sigma)
=\int \rho \Big\{(\textrm{Ric}-\textrm{Hess}\log\mu)(\nabla\Phi, \nabla\Phi)+\|\textrm{Hess}\Phi\|^2\Big\}dx.
\end{split}
\end{equation*}
\item[(ii)] If $\mu=1$ is a uniform measure  \cite{C1}, then 
\begin{equation*}
\begin{split}
\textrm{Hess}_g\mathcal{D}_\gamma(\rho\|\mu)(\sigma,\sigma)
=&\int \rho^{\gamma}\Big\{\Big(\textrm{Ric}-\frac{1}{2}\gamma(\gamma-1) (\frac{\nabla\rho}{\rho},\frac{\nabla\rho}{\rho})\Big)(\nabla\Phi, \nabla\Phi)+\|\textrm{Hess}\Phi\|^2\Big\}dx.
\end{split}
\end{equation*}
\item[(iii)] If $\gamma=0$, then
\begin{equation*}\label{WH}
\begin{split}
\textrm{Hess}_g\mathcal{D}_\gamma(\rho\|\mu)(\sigma, \sigma)
=&\int \Big\{\big(\mu^{-1}\textrm{Ric}-\Delta\mu^{-1}+\textrm{Hess}\mu^{-1}\big)(\nabla\Phi, \nabla\Phi)+\mu^{-1}\|\textrm{Hess}\Phi\|^2\Big\} dx.
\end{split}
\end{equation*}
\end{itemize}
\end{remark}

\begin{proof}
From proposition \ref{prop6}, we can compute the Hessian operator of $\gamma$--Divergence directly. For readers who are not family with geometric computations, the following direct method is also given. 
The Hessian operator is given by taking the second order time derivative of $\mathcal{D}_\gamma(\rho\|\mu)$ along the co-geodesics flow \eqref{cogeo}. 
Consider the first order time derivative
\begin{equation*}
\begin{split}
\frac{d}{dt}\mathcal{D}_\gamma(\rho_t\|\mu)=&\int \delta \mathcal{D}_\gamma(\rho_t\|\mu)\partial_t\rho_t dx\\
=&\int \delta \mathcal{D}_\gamma(\rho_t\|\mu) \Big(-\nabla\cdot(\rho_t^\gamma\nabla\Phi_t)\Big)dx\\
=&\int (\nabla\delta \mathcal{D}_\gamma(\rho_t\|\mu), \nabla\Phi_t) \rho_t^\gamma dx.
\end{split}
\end{equation*}
And the second order time derivative satisfies
\begin{equation*}
\begin{split}
\textrm{Hess}_g\mathcal{D}_\gamma(\rho\|\mu)(\sigma, \sigma)=&\frac{d^2}{dt^2}\mathcal{D}_\gamma(\rho_t\|\mu)|_{t=0}\\
=&\frac{d}{dt}\Big(\frac{d}{dt}\mathcal{D}_\gamma(\rho_t\|\mu)\Big)|_{t=0}\\
=&\int (\nabla\frac{d}{dt}\delta \mathcal{D}_\gamma(\rho_t\|\mu), \nabla\Phi_t)\rho_t^\gamma dx|_{t=0}\hspace{3.5cm}(a)\\
&+\int(\nabla\delta \mathcal{D}_\gamma(\rho_t\|\mu), \nabla \partial_t\Phi_t)\rho_t^\gamma dx|_{t=0} \hspace{3.2cm}(b)\\
&+\int (\nabla\delta \mathcal{D}_\gamma(\rho_t\|\mu), \nabla\Phi_t)(\gamma \rho_t^{\gamma-1})\partial_t\rho_t dx|_{t=0} \hspace{2cm}(c)
\end{split}
\end{equation*}
We estimate (a), (b) and (c) separately. 

For (a), we denote $\delta^2\mathcal{D}(\rho)=\frac{\partial^2}{\partial\rho\partial\rho}(f(\frac{\rho}{\mu})\mu)(x)$. Then
\begin{equation*}
\begin{split}
(a)=&\int \delta^2\mathcal{D}_\gamma(\rho\|\mu) \Big(\nabla\cdot (\rho^\gamma\nabla\Phi)\Big)^2 dx\\
=&\int\delta^2 \mathcal{D}_\gamma(\rho\|\mu)\Big((\nabla \rho^\gamma, \nabla\Phi)+\rho^\gamma \Delta\Phi\Big)^2dx\\
=&\int \delta^2 \mathcal{D}_\gamma(\rho\|\mu)\Big( (\nabla \rho^\gamma , \nabla\Phi)^2+2(\nabla \rho^\gamma, \nabla\Phi)\rho^\gamma\Delta\Phi+\rho^{2\gamma}(\Delta\Phi)^2 \Big)\\
=&\int \rho^{-\gamma}\mu^{\gamma-1}\Big((\nabla \rho^\gamma, \nabla\Phi)^2+2(\nabla \rho^\gamma, \nabla\Phi)\rho^\gamma\Delta\Phi+\rho^{2\gamma}(\Delta\Phi)^2\Big) dx\\
=&\int \rho^{-\gamma}\mu^{\gamma-1}\gamma^2\rho^{2\gamma-2}(\nabla \rho, \nabla\Phi)^2 dx+\int 2\rho^{-\gamma}\mu^{\gamma-1}(\nabla \rho^\gamma, \nabla\Phi)\rho^\gamma\Delta\Phi dx\\
&+\int \mu^{\gamma-1} \rho^{\gamma}(\Delta\Phi)^2 dx\\
=&\int \mu^{\gamma-1}\gamma^2\rho^{\gamma-2}(\nabla \rho, \nabla\Phi)^2 dx+\int 2\mu^{\gamma-1}(\nabla \rho^\gamma, \nabla\Phi)\Delta\Phi dx\\
&+\int \mu^{\gamma-1} \rho^{\gamma}(\Delta\Phi)^2 dx\\
=&\gamma^2\int \mu^{\gamma-1}\rho^{\gamma}(\frac{1}{\rho}\nabla \rho, \nabla\Phi)^2 dx\\
&-2\int \rho^\gamma \nabla\cdot(\mu^{\gamma-1}\nabla\Phi\Delta\Phi) dx\\
&+\int \mu^{\gamma-1} \rho^{\gamma}(\Delta\Phi)^2 dx.
\end{split}
\end{equation*}
We next estimate (b).
\begin{equation*}
\begin{split}
(b)=&\int(\nabla\delta \mathcal{D}_\gamma(\rho_t\|\mu), \nabla \partial_t\Phi_t)\rho_t^\gamma dx|_{t=0}\\
=&\int \partial_t\Phi_t \Big(-\nabla\cdot(\rho_t^\gamma\nabla\delta \mathcal{D}_\gamma(\rho_t\|\mu))\Big)dx|_{t=0}\\
=&\int \frac{1}{2}(\nabla\Phi, \nabla\Phi)\gamma \rho^{\gamma-1} \nabla\cdot(\rho^\gamma\nabla\delta \mathcal{D}_\gamma(\rho\|\mu)) dx\\
=&\frac{1}{2}\int (\nabla\Phi, \nabla\Phi)\gamma \rho^{\gamma-1} \nabla\cdot(\rho^\gamma \nabla\frac{1}{1-\gamma}(\frac{\rho}{\mu})^{1-\gamma}) dx\\
=&\frac{1}{2}\int (\nabla\Phi, \nabla\Phi)\gamma\rho^{\gamma-1} \nabla\cdot(\rho^\gamma (\frac{\rho}{\mu})^{-\gamma} \nabla\frac{\rho}{\mu}) dx\\
=&\frac{1}{2}\int (\nabla\Phi, \nabla\Phi)\gamma\rho^{\gamma-1} \nabla\cdot( \mu^\gamma \frac{\nabla\rho\mu-\rho\nabla\mu}{\mu^2}) dx\\
=&\frac{1}{2}\int (\nabla\Phi, \nabla\Phi)\gamma\rho^{\gamma-1} \nabla\cdot( \mu^{\gamma-1}\nabla\rho - \rho\mu^{\gamma-2}\nabla\mu) dx\\
=&-\frac{1}{2}\int \nabla\Big((\nabla\Phi, \nabla\Phi)\gamma\rho^{\gamma-1}\Big)\Big(\mu^{\gamma-1}\nabla\rho - \frac{1}{\gamma-1}\rho\nabla\mu^{\gamma-1}\Big) dx\\
=&-\frac{1}{2}\int \Big(\nabla(\nabla\Phi, \nabla\Phi)\gamma\rho^{\gamma-1}+(\nabla\Phi, \nabla\Phi)\gamma\nabla\rho^{\gamma-1} \Big)\Big(\mu^{\gamma-1}\nabla\rho - \frac{1}{\gamma-1}\rho\nabla\mu^{\gamma-1}\Big) dx\\
=&-\frac{1}{2}\int (\nabla(\nabla\Phi, \nabla\Phi), \nabla\rho)\gamma\rho^{\gamma-1}\mu^{\gamma-1} dx\hspace{4.1cm}(b1)\\
&-\frac{1}{2}\int (\nabla\rho^{\gamma-1},\nabla\rho)(\nabla\Phi, \nabla\Phi)\gamma\mu^{\gamma-1} dx\hspace{4.1cm} (b2)\\
&+\frac{1}{2(\gamma-1)}\int\Big(\nabla(\nabla\Phi, \nabla\Phi)\gamma\rho^{\gamma},\nabla\mu^{\gamma-1}\Big)dx\hspace{3.2cm}(b3)\\
&+ \frac{1}{2(\gamma-1)}\int (\nabla\Phi, \nabla\Phi)\gamma(\nabla\rho^{\gamma-1}\rho,\nabla\mu^{\gamma-1})dx\hspace{2.8cm}(b4)\\
\end{split}
\end{equation*}
Here we derive (b1), (b2), (b3), (b4) more explicitly. Notice the fact that 
\begin{equation*}
\begin{split}
(b1)=&-\frac{1}{2}\int (\nabla(\nabla\Phi, \nabla\Phi), \nabla\rho)\gamma\rho^{\gamma-1}\mu^{\gamma-1} dx\\
=&-\frac{1}{2}\int \nabla(\nabla\Phi, \nabla\Phi)\mu^{\gamma-1}\nabla\rho^\gamma dx\\
=&\frac{1}{2}\int \nabla\cdot \Big(\mu^{\gamma-1}\nabla(\nabla\Phi, \nabla\Phi)\Big)\rho^\gamma dx\\
=&\frac{1}{2}\int \Big\{(\nabla\mu^{\gamma-1},\nabla(\nabla\Phi, \nabla\Phi))+\mu^{\gamma-1}\Delta(\nabla\Phi, \nabla\Phi)\Big\}\rho^\gamma dx,
\end{split}
\end{equation*}
and
\begin{equation*}
\begin{split}
(b2)=&-\frac{1}{2}\int (\nabla\Phi, \nabla\Phi)(\gamma\nabla\rho^{\gamma-1},\mu^{\gamma-1}\nabla\rho) dx\\
=&-\frac{1}{2}\gamma(\gamma-1)\int (\nabla\Phi, \nabla\Phi)\rho^{\gamma-2}\mu^{\gamma-1}(\nabla\rho,\nabla\rho) dx\\
=&-\frac{1}{2}\gamma(\gamma-1)\int \rho^{\gamma}\mu^{\gamma-1} (\nabla\Phi, \nabla\Phi)(\frac{1}{\rho}\nabla\rho,\frac{1}{\rho}\nabla\rho) dx
\end{split}
\end{equation*}
In addition, 
\begin{equation*}
\begin{split}
(b3)=&\frac{1}{2(\gamma-1)}\int\Big(\nabla(\nabla\Phi, \nabla\Phi)\gamma\rho^{\gamma},\nabla\mu^{\gamma-1}\Big)dx\\
=&\frac{\gamma}{2(\gamma-1)}\int \rho^\gamma\Big(\nabla(\nabla\Phi, \nabla\Phi),\nabla\mu^{\gamma-1}\Big)dx
\end{split}
\end{equation*}
and 
\begin{equation*}
\begin{split}
(b4)=& \frac{1}{2(\gamma-1)}\int (\nabla\Phi, \nabla\Phi)\gamma(\nabla\rho^{\gamma-1}\rho,\nabla\mu^{\gamma-1})dx\\
=&\frac{1}{2(\gamma-1)}\int (\nabla\Phi, \nabla\Phi)(\gamma (\gamma-1)\rho^{\gamma-2}\rho\nabla\rho,\nabla\mu^{\gamma-1})dx\\
=&\frac{1}{2}\int (\nabla\Phi, \nabla\Phi)(\nabla\rho^{\gamma},\nabla\mu^{\gamma-1})dx\\
=&-\frac{1}{2}\int \rho^{\gamma}\nabla\cdot\Big((\nabla\Phi, \nabla\Phi)\nabla\mu^{\gamma-1}\Big)dx\\
=&-\frac{1}{2}\int \rho^{\gamma}\Big\{(\nabla(\nabla\Phi, \nabla\Phi),\nabla\mu^{\gamma-1})+(\nabla\Phi, \nabla\Phi) \Delta\mu^{\gamma-1}\Big\}dx.
\end{split}
\end{equation*}

We last derive (c).
\begin{equation*}
\begin{split}
(c)=&\int (\nabla\delta \mathcal{D}_\gamma(\rho_t\|\mu), \nabla\Phi_t)(\gamma \rho_t^{\gamma-1})\partial_t\rho_t dx|_{t=0}\\
=&\int  (\nabla \frac{1}{1-\gamma}(\frac{\rho}{\mu})^{1-\gamma}, \nabla\Phi)\gamma\rho^{\gamma-1} \Big(-\nabla\cdot(\rho^\gamma\nabla\Phi)\Big)dx\\
=&\int  (\frac{\rho}{\mu})^{-\gamma}(\nabla\frac{\rho}{\mu}, \nabla\Phi)\gamma\rho^{\gamma-1} \Big(-\nabla\cdot(\rho^\gamma\nabla\Phi)\Big)dx\\
=&\gamma\int (\nabla\frac{\rho}{\mu}, \nabla\Phi)\mu^\gamma \Big(-\frac{1}{\rho}\nabla\cdot(\rho^\gamma\nabla\Phi)\Big)dx\\
=&\gamma\int (\frac{\nabla\rho}{\mu}-\frac{\rho\nabla\mu}{\mu^2}, \nabla\Phi)\mu^\gamma \Big(-\frac{1}{\rho}\nabla\cdot(\rho^\gamma\nabla\Phi)\Big)dx\\
=&\gamma\int (\nabla\rho \mu^{\gamma-1}-\mu^{\gamma-2}{\rho\nabla\mu}, \nabla\Phi) \Big(-\frac{1}{\rho}\nabla\cdot(\rho^\gamma\nabla\Phi)\Big)dx\\
=&-\gamma\int (\nabla\rho \mu^{\gamma-1}-\mu^{\gamma-2}{\rho\nabla\mu}, \nabla\Phi) \Big((\frac{1}{\rho}\nabla\rho^\gamma, \nabla\Phi)+\rho^{\gamma-1}\Delta\Phi\Big)dx\\
=&-\gamma\int (\nabla\rho \mu^{\gamma-1},\nabla\Phi)(\frac{1}{\rho}\nabla\rho^\gamma, \nabla\Phi)dx\hspace{2.3cm} (c1)\\
&-\gamma\int (\nabla\rho ,\nabla\Phi)\rho^{\gamma-1} \mu^{\gamma-1}\Delta\Phi dx \hspace{3cm}(c2)\\
&+\gamma\int (\mu^{\gamma-2}{\nabla\mu},\nabla\Phi)(\nabla\rho^\gamma, \nabla\Phi) dx\hspace{2.5cm}(c3)\\
&+\gamma\int (\mu^{\gamma-2}{\nabla\mu}, \nabla\Phi) \rho^{\gamma}\Delta\Phi dx\hspace{3.3cm}(c4)
\end{split}
\end{equation*}
We estimate (c1), (c2), (c3), (c4) explicitly. Notice the fact that 
\begin{equation*}
\begin{split}
(c1)=&-\gamma\int (\nabla\rho \mu^{\gamma-1},\nabla\Phi)(\frac{1}{\rho}\nabla\rho^\gamma, \nabla\Phi)dx\\
=&-\gamma^2\int \mu^{\gamma-1} (\nabla\rho ,\nabla\Phi)^2\rho^{\gamma-2}dx\\
=&-\gamma^2\int \mu^{\gamma-1} (\frac{1}{\rho}\nabla\rho ,\nabla\Phi)^2\rho^{\gamma}dx,
\end{split}
\end{equation*}
and 
\begin{equation*}
\begin{split}
(c2)=&-\gamma\int (\nabla\rho ,\nabla\Phi)\rho^{\gamma-1} \mu^{\gamma-1}\Delta\Phi dx\\
=&-\int (\nabla\rho^\gamma ,\nabla\Phi) \mu^{\gamma-1}\Delta\Phi dx\\
=&\int \rho^\gamma \nabla\cdot(\mu^{\gamma-1}\Delta\Phi\nabla\Phi)dx.
\end{split}
\end{equation*}
In addition, 
\begin{equation*}
\begin{split}
(c3)+(c4)=&\gamma\int (\mu^{\gamma-2}{\nabla\mu}, \nabla\Phi)\nabla\cdot(\rho^\gamma\nabla\Phi) dx\\
=&-\frac{\gamma}{\gamma-1}\int \rho^\gamma \Big(\nabla(\nabla\mu^{\gamma-1},\nabla\Phi), \nabla\Phi\Big)dx.
\end{split}
\end{equation*}
We now summarize all the formulas. 
\begin{equation}\label{main}
\begin{split}
&\textrm{Hess}_g\mathcal{D}_\gamma(\rho\|\mu)(\sigma, \sigma)\\=&\frac{d^2}{dt^2}\mathcal{D}_\gamma(\rho_t\|\mu)|_{t=0}=(a)+(b)+(c)\\
=&(a)+(b1)+(b2)+(b3)+(b4)+(c1)+(c2)+(c3)+(c4)\\
=&\gamma^2\int \mu^{\gamma-1}\rho^{\gamma}(\frac{1}{\rho}\nabla \rho, \nabla\Phi)^2 dx\\
&-2\int \rho^\gamma \nabla\cdot(\mu^{\gamma-1}\nabla\Phi\Delta\Phi) dx\\
&+\int \mu^{\gamma-1} \rho^{\gamma}(\Delta\Phi)^2 dx\\
&+\frac{1}{2}\int \Big\{(\nabla\mu^{\gamma-1},\nabla(\nabla\Phi, \nabla\Phi))+\mu^{\gamma-1}\Delta(\nabla\Phi, \nabla\Phi)\Big\}\rho^\gamma dx\\
&-\frac{1}{2}\gamma(\gamma-1)\int \rho^{\gamma}\mu^{\gamma-1} (\nabla\Phi, \nabla\Phi)(\frac{1}{\rho}\nabla\rho,\frac{1}{\rho}\nabla\rho) dx\\
&+\frac{\gamma}{2(\gamma-1)}\int \rho^\gamma\Big(\nabla(\nabla\Phi, \nabla\Phi),\nabla\mu^{\gamma-1}\Big)dx\\
&-\frac{1}{2}\int \rho^{\gamma}\Big\{(\nabla(\nabla\Phi, \nabla\Phi),\nabla\mu^{\gamma-1})+(\nabla\Phi, \nabla\Phi) \Delta\mu^{\gamma-1}\Big\}dx\\
&-\gamma^2\int \mu^{\gamma-1} (\frac{1}{\rho}\nabla\rho ,\nabla\Phi)^2\rho^{\gamma}dx\\
&+\int \rho^\gamma \nabla\cdot(\mu^{\gamma-1}\Delta\Phi\nabla\Phi)dx\\
&-\frac{\gamma}{\gamma-1}\int \rho^\gamma \Big(\nabla(\nabla\mu^{\gamma-1},\nabla\Phi), \nabla\Phi\Big)dx\\
=& \int \rho^\gamma\Big\{ -\nabla\cdot(\mu^{\gamma-1}\nabla\Phi\Delta\Phi)+\mu^{\gamma-1}(\Delta\Phi)^2+\frac{1}{2}\mu^{\gamma-1}\Delta(\nabla\Phi,\nabla\Phi) \qquad (d)\\
&\qquad-\frac{1}{2}\Delta\mu^{\gamma-1}(\nabla\Phi,\nabla\Phi)-\frac{\gamma}{\gamma-1}\textrm{Hess}\mu^{\gamma-1}(\nabla\Phi, \nabla\Phi)\\
&\qquad-\frac{1}{2}\gamma(\gamma-1)(\frac{\nabla\rho}{\rho},\frac{\nabla\rho}{\rho})(\nabla\Phi, \nabla\Phi)\mu^{\gamma-1}\Big\} dx.
\end{split}
\end{equation}
Notice the fact that 
\begin{equation*}
\begin{split}
(d)=&-\nabla\cdot(\mu^{\gamma-1}\nabla\Phi\Delta\Phi)+\mu^{\gamma-1}(\Delta\Phi)^2+\frac{1}{2}\mu^{\gamma-1}\Delta(\nabla\Phi,\nabla\Phi)\\
=&-(\nabla\mu^{\gamma-1}, \nabla\Phi)\Delta\Phi-\mu^{\gamma-1}(\Delta\Phi)^2-\mu^{\gamma-1}(\nabla\Delta\Phi, \nabla\Phi)\\
&+\mu^{\gamma-1}(\Delta\Phi)^2+\frac{1}{2}\mu^{\gamma-1}\Delta(\nabla\Phi,\nabla\Phi)\\
=&-(\nabla\mu^{\gamma-1},\nabla\Phi)\Delta\Phi+\mu^{\gamma-1} \Big\{\frac{1}{2}\Delta(\nabla\Phi, \nabla\Phi)-(\nabla\Phi, \nabla\Delta\Phi) \Big\}  \\
=&-(\nabla\mu^{\gamma-1}, \nabla\Phi)\Delta\Phi+\mu^{\gamma-1} \Big\{\textrm{Ric}(\nabla\Phi,\nabla\Phi)+ \|\textrm{Hess}\Phi\|^2\Big\},
\end{split}
\end{equation*}
where the last equality is from Bochner's formula, i.e. 
\begin{equation*}
\frac{1}{2}\Delta(\nabla\Phi, \nabla\Phi)-(\nabla\Phi, \nabla\Delta\Phi)=\textrm{Ric}(\nabla\Phi,\nabla\Phi)+ \|\textrm{Hess}\Phi\|^2.
\end{equation*}

By substituting (d) into \eqref{main}, we obtain 
\begin{equation}\label{main1}
\begin{split}
&\textrm{Hess}_g\mathcal{D}_\gamma(\rho\|\mu)(V_\Phi, V_\Phi)\\
=&\int \rho^\gamma \mu^{\gamma-1}\Big\{(\textrm{Ric}(\nabla\Phi, \nabla\Phi)+\|\textrm{Hess}\Phi\|^2\Big\} dx\\
&+\frac{1}{2}\int \rho^\gamma\Delta\mu^{\gamma-1}(\nabla\Phi, \nabla\Phi) dx\\
&-\frac{r}{\gamma-1}\int \rho^\gamma \textrm{Hess}\mu^{\gamma-1}(\nabla\Phi, \nabla\Phi)dx\\
&-\frac{1}{2}\gamma(\gamma-1)\int \rho^\gamma \mu^{\gamma-1} (\frac{\nabla\rho}{\rho},\frac{\nabla\rho}{\rho})(\nabla\Phi, \nabla\Phi)dx\\
&-\int \rho^\gamma (\nabla\mu^{\gamma-1}, \nabla\Phi)\Delta\Phi dx.   \hspace{2cm}{(e)}
\end{split}
\end{equation}

We lastly reformulate the term (e): 
\begin{equation*}
\begin{split}
(e)=&-\int \rho^\gamma (\nabla\mu^{\gamma-1}, \nabla\Phi)\nabla\cdot(\nabla\Phi) dx\\
=&\int \Big(\nabla \big(\rho^\gamma (\nabla\mu^{\gamma-1}, \nabla\Phi)\big), \nabla \Phi\Big)dx\\
=&\int (\nabla\rho^\gamma,\nabla\Phi) (\nabla\mu^{\gamma-1}, \nabla\Phi)dx+\int \rho^\gamma(\nabla (\nabla\mu^{\gamma-1}, \nabla\Phi), \nabla\Phi) dx\\
=&\int (\nabla\rho^\gamma,\nabla\Phi) (\nabla\mu^{\gamma-1}, \nabla\Phi)dx+\int \rho^\gamma\textrm{Hess}\mu^{\gamma-1}(\nabla\Phi, \nabla\Phi) dx\\
&+\int \rho^\gamma \textrm{Hess}\Phi(\nabla\Phi, \nabla\mu^{\gamma-1})dx \hspace{2cm}{(e1)} 
\end{split}
\end{equation*}
Notice that (e1) has a formulation 
\begin{equation*}
\begin{split}
(e1)=&\int \rho^\gamma \textrm{Hess}\Phi(\nabla\Phi, \nabla\mu^{\gamma-1})dx\\
=&\int\rho^\gamma \Big(\nabla\big(\frac{1}{2}(\nabla\Phi)^2\big), \nabla\mu^{\gamma-1}\Big)dx\\
=&-\frac{1}{2}\int \nabla\cdot(\rho^\gamma \nabla\mu^{\gamma-1}) (\nabla\Phi)^2dx\\
=&-\frac{1}{2}\int \Big((\nabla\rho^\gamma, \nabla\mu^{\gamma-1})+\rho^\gamma\Delta\mu^{\gamma-1}\Big) (\nabla\Phi)^2dx\\
\end{split}
\end{equation*}
where the second equality holds by the fact that $\textrm{Hess}\Phi\nabla\Phi=\nabla\nabla\Phi\nabla\Phi=\frac{1}{2}\nabla (\nabla\Phi)^2$.

Substituting (e1) into (e), we obtain 
\begin{equation*}
\begin{split}
(e)=&\int (\nabla\rho^\gamma,\nabla\Phi) (\nabla\mu^{\gamma-1}, \nabla\Phi)dx-\frac{1}{2}\int (\nabla\rho^\gamma, \nabla\mu^{\gamma-1})(\nabla\Phi, \nabla\Phi)dx\\
&+\int \rho^\gamma\textrm{Hess}\mu^{\gamma-1}(\nabla\Phi, \nabla\Phi) dx-\frac{1}{2}\int \rho^\gamma\Delta\mu^{\gamma-1} (\nabla\Phi,\nabla\Phi)dx\\
=&\gamma(\gamma-1)\int \rho^\gamma\mu^{\gamma-1}\Big\{(\frac{\nabla\rho}{\rho},\nabla\Phi) (\frac{\nabla\mu}{\mu}, \nabla\Phi)-\frac{1}{2}(\frac{\nabla\rho}{\rho}, \frac{\nabla\mu}{\mu})(\nabla\Phi, \nabla\Phi)\Big\}dx\\
&+\int \rho^\gamma\textrm{Hess}\mu^{\gamma-1}(\nabla\Phi, \nabla\Phi) dx-\frac{1}{2}\int \rho^\gamma\Delta\mu^{\gamma-1} (\nabla\Phi,\nabla\Phi)dx.
\end{split}
\end{equation*}
Substituting the above formula into \eqref{main1}, we derive 
\begin{equation*}
\begin{split}
&\textrm{Hess}_g\mathcal{D}_\gamma(\rho\|\mu)(V_\Phi, V_\Phi)\\
=&\int \rho^\gamma \mu^{\gamma-1}\Big\{\textrm{Ric}(\nabla\Phi, \nabla\Phi)+\|\textrm{Hess}\Phi\|^2\Big\} dx-\frac{1}{2}\int \rho^\gamma \Delta\mu^{\gamma-1}(\nabla\Phi, \nabla\Phi)dx\\
&-\frac{r}{\gamma-1}\int \rho^\gamma \textrm{Hess}\mu^{\gamma-1}(\nabla\Phi, \nabla\Phi)dx\\
&-\frac{1}{2}\gamma(\gamma-1)\int \rho^\gamma \mu^{\gamma-1}(\frac{\nabla\rho}{\rho},\frac{\nabla\rho}{\rho})(\nabla\Phi, \nabla\Phi)dx\\
&+\gamma(\gamma-1)\int \rho^\gamma\mu^{\gamma-1}\Big\{(\frac{\nabla\rho}{\rho},\nabla\Phi) (\frac{\nabla\mu}{\mu}, \nabla\Phi)-\frac{1}{2}(\frac{\nabla\rho}{\rho}, \frac{\nabla\mu}{\mu})(\nabla\Phi, \nabla\Phi)\Big\}dx\\
&+\int \rho^\gamma\textrm{Hess}\mu^{\gamma-1}(\nabla\Phi, \nabla\Phi) dx-\frac{1}{2}\int \rho^\gamma\Delta\mu^{\gamma-1} (\nabla\Phi,\nabla\Phi)dx\\
=&\int \rho^\gamma \Big\{\big(\mu^{\gamma-1}\textrm{Ric}-\Delta\mu^{\gamma-1}-\frac{1}{\gamma-1}\textrm{Hess}\mu^{\gamma-1}\big)(\nabla\Phi, \nabla\Phi)+\mu^{\gamma-1}\|\textrm{Hess}\Phi\|^2\\
&\hspace{0.5cm}+\gamma(\gamma-1)\mu^{\gamma-1}\Big((\frac{\nabla\rho}{\rho},\nabla\Phi) (\frac{\nabla\mu}{\mu}, \nabla\Phi)-\frac{1}{2}(\frac{\nabla\rho}{\rho}, \frac{\nabla\mu}{\mu}+\frac{\nabla\rho}{\rho})(\nabla\Phi, \nabla\Phi)\Big)\Big\}dx.
\end{split}
\end{equation*}
\end{proof}

We observe that the Hessian operator in $(\mathcal{P}, g)$ is more complicated than the one with $\gamma=1$. Since when $\gamma=1$, there is no interaction bilinear term between the Hessian operator and the squared gradient norm. We overcome this by the following estimates. Denote the bilinear form:
\begin{equation*}
J(\Phi,\Phi)=(\frac{\nabla \rho}{\rho},\nabla\Phi)(\frac{\nabla\mu}{\mu},\nabla\Phi)-\frac{1}{2}(\frac{\nabla\rho}{\rho}, \frac{\nabla\rho}{\rho}+\frac{\nabla\mu}{\mu})(\nabla\Phi, \nabla\Phi).
\end{equation*}
\begin{lemma}\label{lemma8}
Denote $\delta\mathcal{D}_\gamma(\rho\|\mu)=\frac{1}{1-\gamma}(\frac{\rho}{\mu})^{1-\gamma}$, then for any $\rho\in \mathcal{P}$,
\begin{equation*}
\frac{ J(\delta\mathcal{D}_\gamma(\rho\|\mu), \delta\mathcal{D}_\gamma(\rho\|\mu))}{(\nabla \delta\mathcal{D}_\gamma(\rho\|\mu), \nabla \delta \mathcal{D}_\gamma(\rho\|\mu))}\in (-\infty, \frac{1}{8}\|\nabla\log\mu\|^2].
\end{equation*}
\end{lemma}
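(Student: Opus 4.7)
The inequality is pointwise and reduces, after a small amount of algebra, to an elementary bound on two tangent vectors; the $\gamma$-dependence drops out entirely. The plan has three steps: (i) compute $\nabla\delta\mathcal{D}_\gamma$ and observe the cancellation; (ii) rewrite the ratio in terms of $v:=\nabla\log\mu$ and $w:=\nabla\log(\rho/\mu)$; (iii) apply Cauchy--Schwarz together with a perfect-square identity.

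\textbf{Step 1 (Variation and cancellation).} Since $\delta\mathcal{D}_\gamma(\rho\|\mu)=\tfrac{1}{1-\gamma}(\rho/\mu)^{1-\gamma}$, the chain rule gives
\[
\nabla\delta\mathcal{D}_\gamma(\rho\|\mu)=(\rho/\mu)^{1-\gamma}\Bigl(\tfrac{\nabla\rho}{\rho}-\tfrac{\nabla\mu}{\mu}\Bigr)=(\rho/\mu)^{1-\gamma}\,\nabla\log(\rho/\mu).
\]
The scalar $(\rho/\mu)^{2(1-\gamma)}$ then factors out of both $J(\delta\mathcal{D}_\gamma,\delta\mathcal{D}_\gamma)$ and $(\nabla\delta\mathcal{D}_\gamma,\nabla\delta\mathcal{D}_\gamma)$, so it cancels in the ratio and the bound is independent of $\gamma$.

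\textbf{Step 2 (Reduction to an identity in $v,w$).} Set $v:=\nabla\log\mu$ and $w:=\nabla\log(\rho/\mu)$, so that $\nabla\log\rho=v+w$ and $\nabla\delta\mathcal{D}_\gamma$ is proportional to $w$. Expanding $J$ pointwise and collecting terms (using $(a,a-b)=(v,w)+\|w\|^2$, $(b,a-b)=(v,w)$, and $(a,a+b)=2\|v\|^2+3(v,w)+\|w\|^2$ with $a=v+w,\ b=v$) yields, for $w\ne 0$,
\[
\frac{J(\delta\mathcal{D}_\gamma,\delta\mathcal{D}_\gamma)}{(\nabla\delta\mathcal{D}_\gamma,\nabla\delta\mathcal{D}_\gamma)}=\frac{(v,w)^2}{\|w\|^2}-\tfrac{1}{2}(v,w)-\|v\|^2-\tfrac{1}{2}\|w\|^2.
\]
When $w=0$ both numerator and denominator vanish identically.

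\textbf{Step 3 (Sharp pointwise bound).} Two applications of Cauchy--Schwarz give $(v,w)^2/\|w\|^2\le\|v\|^2$ and $-(v,w)\le|(v,w)|\le\|v\|\|w\|$, hence
\[
\frac{(v,w)^2}{\|w\|^2}-\tfrac{1}{2}(v,w)-\|v\|^2-\tfrac{1}{2}\|w\|^2 \le \tfrac{1}{2}\|v\|\|w\|-\tfrac{1}{2}\|w\|^2.
\]
The remaining step is the perfect-square identity
\[
\tfrac{1}{8}\|v\|^2-\tfrac{1}{2}\|v\|\|w\|+\tfrac{1}{2}\|w\|^2=\tfrac{1}{8}\bigl(\|v\|-2\|w\|\bigr)^2\ge 0,
\]
which rearranges to $\tfrac{1}{2}\|v\|\|w\|-\tfrac{1}{2}\|w\|^2\le\tfrac{1}{8}\|v\|^2=\tfrac{1}{8}\|\nabla\log\mu\|^2$, proving the upper bound. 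Equality occurs precisely when both Cauchy--Schwarz steps are tight and $\|w\|=\|v\|/2$, i.e.\ $w=-v/2$, which corresponds to the extremal profile $\rho\propto\mu^{1/2}$. The lower bound is trivially $-\infty$: the $-\tfrac{1}{2}\|w\|^2$ term dominates as $\|\nabla\log(\rho/\mu)\|(x)$ grows.

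\textbf{Main obstacle.} The only real difficulty is the bookkeeping in Step 2; one must be careful to expand the three bilinear terms and verify that the $(v,w)$ cross-terms combine with the right coefficients $-\tfrac12,-1,-\tfrac12$. Once the ratio is in that canonical form, the two Cauchy--Schwarz estimates are tight under the same alignment condition ($v,w$ anti-parallel), so the chain closes without slack, producing the sharp constant $\tfrac18$ through the square $(\|v\|-2\|w\|)^2$.
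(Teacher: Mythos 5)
Your proof is correct and follows essentially the same route as the paper: compute $\nabla\delta\mathcal{D}_\gamma=(\rho/\mu)^{1-\gamma}\nabla\log(\rho/\mu)$, cancel the $\gamma$-dependent factor, and reduce the ratio to the quadratic expression $\frac{(v,w)^2}{\|w\|^2}-\frac{1}{2}(v,w)-\|v\|^2-\frac{1}{2}\|w\|^2$, which matches the paper's formula in the variables $a_0=v$, $a=w$. The only (cosmetic) difference is the final optimization: the paper completes the square as $\|a_0\|^2(\tfrac{9}{8}\cos^2\theta-1)-\tfrac{1}{2}(\|a\|+\tfrac{1}{2}\|a_0\|\cos\theta)^2$, whereas you use two Cauchy--Schwarz estimates plus $\tfrac18(\|v\|-2\|w\|)^2\ge 0$; both yield the same sharp constant $\tfrac18\|\nabla\log\mu\|^2$ with the same equality case $w=-v/2$.
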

\begin{proof}
The proof is based on an estimation for the bilinear form $J$. Notice 
\begin{equation*}
\nabla \delta\mathcal{D}_\gamma(\rho\|\mu)=(\frac{\rho}{\mu})^{-\gamma} \nabla\frac{\rho}{\mu}=(\frac{\rho}{\mu})^{1-\gamma}\nabla\log\frac{\rho}{\mu}.
\end{equation*}
Then 
\begin{equation*}
\begin{split}
J_1:=&J(\delta \mathcal{D}_\gamma(\rho\|\mu), \delta \mathcal{D}_\gamma(\rho\|\mu))\\
=&\Big\{(\nabla\log\rho, \nabla\log\frac{\rho}{\mu})(\nabla\log\mu, \nabla\log\frac{\rho}{\mu})\\
&-\frac{1}{2}(\nabla\log\rho, \nabla\log\rho+\nabla\log\mu)(\nabla\log\frac{\rho}{\mu}, \nabla\log\frac{\rho}{\mu})\Big\}(\frac{\rho}{\mu})^{2-2\gamma},
\end{split}
\end{equation*}
and 
\begin{equation*}
J_2:=(\nabla \delta\mathcal{D}_\gamma(\rho\|\mu), \nabla \delta \mathcal{D}_\gamma(\rho\|\mu))=(\nabla\log\frac{\rho}{\mu}, \nabla\log\frac{\rho}{\mu})(\frac{\rho}{\mu})^{2-2\gamma}.
\end{equation*}
Denote $\nabla\log\frac{\rho}{\mu}=a$, $\nabla \log\mu=a_0$, then $\nabla\log\rho=a+a_0$, and thus
\begin{equation*}
\begin{split}
\frac{J_1}{J_2}=&\frac{(a+a_0, a)(a_0, a)-\frac{1}{2}(a+a_0, a+2a_0)(a, a)}{(a,a)}\\
=&\frac{(a,a)(a_0,a)+(a_0, a)^2-\frac{1}{2} [(a,a)+3(a_0, a)+2(a_0,a_0)](a,a)}{(a,a)}\\
=&\frac{(a,a)(a_0,a)+(a_0,a)^2-\frac{1}{2}(a,a)^2-\frac{3}{2}(a_0,a)(a,a)-(a_0,a_0)(a,a)}{(a,a)}\\
=&\frac{(a_0,a)^2-\frac{1}{2}(a,a)^2-\frac{1}{2}(a_0,a)(a,a)-(a_0,a_0)(a,a)}{(a,a)}.
\end{split}
\end{equation*}
We further denote $\cos\theta=\frac{(a_0,a)}{\|a\|\|a_0\|}$, then 
\begin{equation*}
\begin{split}
\frac{J_1}{J_2}=& \frac{\|a\|^2\|a_0\|^2\cos^2\theta-\frac{1}{2}\|a\|^4-\frac{1}{2}\|a_0\|\|a\|^3\cos\theta-\|a_0\|^2\|a\|^2}{\|a\|^2}\\
=&\|a_0\|^2(\cos^2\theta-1)-\frac{1}{2}\|a\|^2-\frac{1}{2}\|a\|\|a_0\|\cos\theta\\
=&\|a_0\|^2(\frac{9}{8}\cos^2\theta-1)-\frac{1}{2}\big(\|a\|+\frac{1}{2}\|a_0\|\cos\theta\big)^2\\
\leq &\frac{1}{8}\|a_0\|^2,
\end{split}
\end{equation*}
which finishes the proof.
\end{proof}
\subsection{Proof}
\begin{proof}[Proof of Theorem \ref{thm}]
Firstly, following Lemma \ref{lemma7} and Lemma \ref{lemma8}, we prove that condition \eqref{condition} implies both the convergence result \eqref{compare} and the functional inequality \eqref{inequality}.

Secondly, the generalized Talagrand inequality \eqref{Talagrand} follows directly from the gradient flow interpolation of inequality in Proposition $1$ of \cite{OV}. For completeness of this paper, we present it here.
Consider the real value function 
\begin{equation*}
\Psi(t)=\mathcal{W}_\gamma(\rho_0,\rho_t)+\sqrt{\frac{2\mathcal{D}_\gamma(\rho_t\|\mu)}{\kappa}},
\end{equation*}
where $\rho_t=\rho(t,\cdot)$ is the density function at time $t$. Notice that $\Psi(0)=\mathcal{W}(\rho_0, \mu)$ and $\lim_{t\rightarrow \infty}\Psi(t)=\sqrt{\frac{2\mathcal{D}_\gamma(\rho_t\|\mu)}{\kappa}}$, since $\mathcal{D}_\gamma(\rho_t\|\mu)\rightarrow 0$ following \eqref{compare}.

We next claim $\frac{d}{dt}\Psi(t)\leq 0$. If so, we finish the proof. To prove it, we show that \begin{equation*}
\frac{d}{dt}|^{+}\Psi(t)=\lim\sup_{h\rightarrow 0}\frac{1}{h}(\Psi(t+h)-\Psi(t))\leq 0.
\end{equation*}
Notice the fact that 
 \begin{equation*}
 |\mathcal{W}_\gamma(\rho_{t+h}, \rho)- \mathcal{W}_\gamma(\rho_{t}, \rho)|\leq \mathcal{W}_\gamma(\rho_{t+h}, \rho_t),
 \end{equation*} 
 and  along the gradient flow $\partial_t\rho=-\textrm{grad}_g\mathcal{D}_\gamma(\rho\|\mu)$, 
 \begin{equation*}
\begin{split}
 \lim\sup_{h\rightarrow 0} \frac{1}{h}\mathcal{W}_\gamma(\rho_{t+h}, \rho_t)=&g_\rho(\partial_t\rho_t, \partial_t\rho_t)\\
=&\sqrt{g_\rho(\textrm{grad}_g\mathcal{D}_\gamma(\rho_t\|\mu), \textrm{grad}_g\mathcal{D}_\gamma(\rho_t\|\mu))}\\
=&\sqrt{\mathcal{I}_\gamma(\rho_t)}.
\end{split}
\end{equation*}
In addition
\begin{equation*}
\begin{split}
\frac{d}{dt}\sqrt{\frac{2\mathcal{D}_\gamma(\rho_t\|\mu)}{\kappa}}=&\sqrt{\frac{2}{\kappa}\frac{1}{\mathcal{D}_\gamma(\rho_t\|\mu)}}\frac{d}{dt}\mathcal{D}_\gamma(\rho_t\|\mu)\\
=&\sqrt{\frac{2}{\kappa}\frac{1}{\mathcal{D}_\gamma(\rho_t\|\mu)}}\Big(-\mathcal{I}_\gamma(\rho_t\|\mu)\Big)\\
=&-\sqrt{\frac{2}{\kappa}\frac{\mathcal{I}_\gamma(\rho_t\|\mu)}{\mathcal{D}_\gamma(\rho_t\|\mu)}}\sqrt{\mathcal{I}_\gamma(\rho_t\|\mu)}\\
\leq& -\sqrt{\mathcal{I}_\gamma(\rho_t\|\mu)}.
\end{split}
\end{equation*}
Thus 
\begin{equation*}
\begin{split}
\frac{d}{dt}|^{+}\Psi(t)=&\lim\sup_{h\rightarrow 0}\frac{\mathcal{W}_\gamma(\rho_{t+h}, \rho_0)-\mathcal{W}_\gamma(\rho_t,\rho_0)}{h}+\frac{d}{dt}\mathcal{D}_\gamma(\rho_t\|\mu)|_{t=0}\\
\leq & \sqrt{\mathcal{I}_\gamma(\rho_t\|\mu)}- \sqrt{\mathcal{I}_\gamma(\rho_t\|\mu)}=0,
\end{split}
\end{equation*}
which finishes the proof.
\end{proof}

\begin{proof}[Proof of Theorem \ref{cor2}]
We prove the equality by using the Hessian operator of $\mathcal{D}_\gamma(\rho\|\mu)$ in $(\mathcal{P}, g)$ at the point $\rho=\mu$. Notice that for any $\sigma\in T_\rho\mathcal{P}$, then 
\begin{equation}\label{pth}
\int \sigma \delta\mathcal{D}_\gamma(\rho\|\mu) dx|_{\rho=\mu}=\int \frac{1}{1-\gamma}(\frac{\rho}{\mu})^{1-\gamma}\sigma dx|_{\rho=\mu}=\frac{1}{1-\gamma}\int \sigma dx=0.
\end{equation}
Following the Hessian operator formula in \eqref{Riem}, denote $\sigma=-\nabla\cdot(\rho^\gamma\nabla\Phi)$, then
\begin{equation*}
\begin{split}
\textrm{Hess}_g\mathcal{D}_\gamma(\rho\|\mu)(\sigma,\sigma)|_{\rho=\mu}=&\int \delta^2\mathcal{D}_\gamma(\rho\|\mu)\sigma^2 dx-\int \delta\mathcal{D}_\gamma(\rho\|\mu)\Gamma_\rho(\sigma,\sigma)dx|_{\rho=\mu}\\
=&\int \delta^2\mathcal{D}_\gamma(\rho\|\mu)\sigma^2 dx|_{\rho=\mu}\\
=&\int \frac{1}{\mu}\Big(\nabla\cdot(\mu^\gamma\nabla\Phi)\Big)^2 dx.
\end{split}
\end{equation*}
where the second equality uses the fact $\Gamma_\rho(\sigma, \sigma)\in T_\rho\mathcal{P}$ and \eqref{pth}. Comparing the above with formula at $\rho=\mu$ in Lemma \ref{lemma7}, we prove the equality. 
\end{proof}

\begin{proof}[Proof of Corollary \ref{GPI}]
We first prove the following claim.

\noindent\textbf{Claim}: 
\begin{equation}\label{claim}
\begin{split}
&\min_{\sigma\in T_\rho\mathcal{P}}\Big\{\textrm{Hess}_g\mathcal{D}_\gamma(\rho\|\mu)(\sigma, \sigma)|_{\rho=\mu}\colon g_\mu(\sigma,\sigma)=1\Big\}\\
=&\min_{\Phi\in C^{\infty}(M)} \Big\{\int \frac{1}{\mu}\Big(\nabla\cdot(\mu^\gamma\nabla\Phi)\Big)^2 dx\colon \int \|\nabla\Phi\|^2\mu^\gamma dx=1 \Big\} \\
=&\min_{f\in C^{\infty}(M)} \Big\{ \int \|\nabla f\|^2 \mu^\gamma dx \colon \int f^2\mu dx=1,\quad \int f\mu dx=0\Big\}.
\end{split}
\end{equation}
\begin{proof}[Proof of Claim]
The first equality holds by the definition of Hessian operator at $\rho=\mu$ as in the proof of Theorem \ref{cor2}. We next focus on the second equality. 
Denote $\sigma_1=-\nabla\cdot(\mu^\gamma\nabla\Phi)$. Then the minimization in the second equation of \eqref{claim} forms 
\begin{equation*}
\lambda_1:=\min_{\sigma_1\in T_\rho\mathcal{P}} \Big\{\int \frac{1}{\mu}\sigma_1^2 dx\colon \int (\sigma_1, -\Delta_{\mu^\gamma}^{-1}\sigma_1) dx=1 \Big\}.
\end{equation*}
The minimizer of above minimization satisfies the following eigenvalue problem
\begin{equation*}
\frac{1}{\mu}\sigma_1 =-\lambda_1\Delta_{\mu^\gamma}^{-1}\sigma_1, 
\end{equation*}
i.e. 
\begin{equation*}
-\nabla\cdot(\mu^\gamma \nabla\frac{\sigma_1}{\mu}) =\lambda_1 \sigma_1.
\end{equation*}
In other words, $\lambda_1=\lambda_\textrm{min}(-\Delta_{\mu^\gamma}\frac{1}{\mu})$, where $\lambda_{\min}$ represents the smallest non-zero eigenvalue.

On the other hand, denote $\sigma_2=f\mu$, then the minimizer of minimization \eqref{claim} in the third equality forms 
\begin{equation*}
\lambda_2:=\min_{\sigma_2\in T_\rho\mathcal{P}}\Big\{ \int \|\nabla\frac{\sigma_2}{\mu}\|^2\mu^\gamma dx\colon \int \frac{\sigma_2^2}{\mu} dx=1\Big\}
\end{equation*}
Similarly, the minimizer of above minimization satisfies the following eigenvalue problem
\begin{equation*}
-\frac{1}{\mu}\nabla\cdot(\mu^\gamma\nabla\frac{\sigma_2}{\mu})=\lambda_2 \frac{\sigma_2}{\mu}
\end{equation*}
i.e. 
\begin{equation*}
-\nabla\cdot(\mu^\gamma\nabla\frac{\sigma_2}{\mu})=\lambda_2\sigma_2. 
\end{equation*}
Thus $\lambda_2=\lambda_\textrm{min}(-\Delta_{\mu^\gamma}\frac{1}{\mu})$. From the above, we have $\lambda_1=\lambda_2$, which finishes the proof of claim.
\end{proof}
From the above claim, the smallest eigenvalue of Hessian operator of $\mathcal{D}_\gamma$ in $(\mathcal{P}, g)$ at $\rho=\mu$ is precisely the lower bound for the Poincar{\'e} inequality. 
Here from the generalized Yano's formula, we have
\begin{equation*}
\begin{split}
&\textrm{Hess}_g\mathcal{D}_\gamma(\rho\|\mu)(\sigma, \sigma)|_{\rho=\mu}\\
=& \int \frac{1}{\mu}(-\Delta_{\mu^\gamma}\Phi)^2 dx\\
=&\int \mu^\gamma \Big\{\big(\mu^{\gamma-1}\textrm{Ric}-\Delta\mu^{\gamma-1}-\frac{1}{\gamma-1}\textrm{Hess}\mu^{\gamma-1}\big)(\nabla\Phi, \nabla\Phi)+\mu^{\gamma-1}\|\textrm{Hess}\Phi\|^2\\
&\hspace{1.2cm}+\gamma(\gamma-1)\mu^{\gamma-1}J(\Phi,\Phi)|_{\rho=\mu}\Big\}dx,
\end{split}
\end{equation*}
where
\begin{equation*}
\begin{split}
J(\Phi,\Phi)|_{\rho=\mu}=&(\nabla\log\mu,\nabla\Phi)^2-\frac{1}{2}(\frac{\nabla\rho}{\rho}, \frac{\nabla\rho}{\rho}+\frac{\nabla\mu}{\mu})(\nabla\Phi, \nabla\Phi)|_{\rho=\mu}\\
=&(\nabla\log\mu, \nabla\Phi)^2-\|\nabla\log\mu\|^2\|\nabla\Phi\|^2.
\end{split}
\end{equation*}
Thus 
\begin{equation*}
-\|\nabla\log\mu\|^2\|\nabla\Phi\|^2\leq J(\Phi,\Phi)|_{\rho=\mu}\leq 0.
\end{equation*}
From the above, we can estimate the smallest eigenvalue of Hessian operator, which finishes the proof.
\end{proof}
\begin{proof}[Proof of Theorem \ref{cor3}]
We first prove the $PH^{-1}I$ inequality. Denote $\rho_t$ be a geodesic curve of least energy 
in $\mathcal{P}$, with $H^{-1}$ metric, where $\rho_0=\mu$ and $\rho_1=\rho$. Then from Proposition \ref{proposition}, $\partial_{tt}\rho_t=0$, i.e. $\rho_t=(1-t)\rho_0+t\rho_1$.
Thus 
\begin{equation*}
H^{-1}(\rho, \mu)=\sqrt{(\rho-\mu, \rho-\mu)_{H^{-1}}}=\sqrt{\int (\rho-\mu, (-\Delta)^{-1}(\rho-\mu))dx}.
\end{equation*}
By taking the Taylor expansion of $\mathcal{D}_0(\rho\|\mu)$ in $(\mathcal{P}, H^{-1})$ at $\rho=\mu$, we obtain  
\begin{equation}\label{f1}
\mathcal{D}_0(\rho\|\mu)=\mathcal{D}_0(\mu\|\mu)+(\textrm{grad}_g\mathcal{D}_0(\rho\|\mu), \rho-\mu)_{H^{-1}}+\int (1-t)\textrm{Hess}_{H^{-1}}\mathcal{F}(\rho_t)(\rho-\mu, \rho-\mu) dt,
\end{equation}
where $\mathcal{D}_0(\mu\|\mu)=0$. From the Cauchy-Schwarz inequality, we have
\begin{equation}\label{f2}
\begin{split}
(\textrm{grad}_g\mathcal{D}_0(\rho\|\mu), \rho-\mu)_{H^{-1}}
\geq&-\sqrt{(\textrm{grad}_g\mathcal{D}_0(\rho\|\mu), \textrm{grad}_g\mathcal{D}_0(\rho\|\mu))_{H^{-1}}}\sqrt{(\rho-\mu, \rho-\mu)_{H^{-1}}}\\
=&-\sqrt{\mathcal{I}_0(\rho\|\mu)} H^{-1}(\rho, \mu).
\end{split}
\end{equation}
In addition, the condition $\mu^{-1}\textrm{Ric}+\textrm{Hess}\mu^{-1}-\Delta \mu^{-1} \succeq \kappa$ implies $\textrm{Hess}_{H^{-1}}\mathcal{D}_0(\rho\|\mu)(\rho-\mu, \rho-\mu)\geq \kappa (\rho-\mu, \rho-\mu)_{H^{-1}}$, thus 
\begin{equation}\label{f3}
\begin{split}
\int (1-t)\textrm{Hess}_{H^{-1}}\mathcal{F}(\rho_t)(\rho-\mu, \rho-\mu) dt\geq &\int_0^1\kappa(1-t) (\rho-\mu, \rho-\mu)_{H^{-1}}dt \\
=&\frac{\kappa}{2}H^{-1}(\rho,\mu)^2.
\end{split}
\end{equation}
Substituting \eqref{f2} and \eqref{f3} into \eqref{f1}, we prove the $PH^{-1}I$ inequality. In addition, the $H^{-1}$-Talagrand inequality follows directly from Theorem \ref{thm}. 
\end{proof}
\begin{remark}\label{rmk9}
The current method fails when $\gamma>1$ or $\gamma<0$. In these cases, there is no finite lower bound for the bilinear form and squared gradient norm for any $\rho\in\mathcal{P}$. One can not obtain the finite ratio between $\frac{d}{dt}\mathcal{D}_\gamma(\rho_t\|\mu)$ and $\frac{d^2}{dt^2}\mathcal{D}_\gamma(\rho_t\|\mu)$. Thus we can not establish the exponential decay results in term of $\gamma$--divergence. 

However, the current method fails does not mean that we can not find the convergence guarantee condition of $\gamma$--diffusion processes when $\gamma>1$. In fact, we can always formulate $\gamma$--divergence as the gradient flow of 1-divergence (relative entropy) w.r.t. density manifold metric $\Big(-\nabla\cdot(\rho\mu^{\gamma-1}\nabla)\Big)^{-1}$. In this case, the study of diffusion hypercontractivity forms a classical Bakry--{\'E}mery method. In other words, one can always apply the entropy method or entropy-entropy production as in \cite{EMethod1} to find the associated diffusion hypercontractivity and convergence rate in 1-divergence. See related details in \cite{Gentil}. 
\end{remark}
\begin{remark}
We comment on the proof of different types of inequalities. (i) For Log-Sobolev and Talagrand inequalities, we only need the Hessian operator along the gradient flow to have a lower bound. 
(ii) For Poin{\'c}are inequality, we require the Hessian operator at the equilibrium measure $\mu$ to have a lower bound. (iii) For the divergence, metric and information inequality, such as HWI or P$H^{-1}$I inequality, we require the Hessian operator to have a lower bound for any tangent directions in density manifold. Interestingly, the above three conditions coincide in the case of $\gamma=0,1$.
\end{remark}

\section{Generalized Bakry--{\'E}mery Calculus}\label{section5}
In this section, we propose the generalized Bakry--{\'E}mery iterative calculus. This definition follows the connection of Hessian operator in density manifold with the generator (Kolmogorov backward operator) of $\gamma$--drift diffusion process.

We first define the generalized iterative Bakry--{\'E}mery Gamma operators. 
\begin{definition}[$\gamma$--Bakry--{\'E}mery calculus]
Denote the $\gamma$--Gamma one operator $\Gamma_{\gamma,1}\colon C^{\infty}(M)\times C^{\infty}(M)\times \mathcal{P}\rightarrow C^{\infty}(M)$ by
\begin{equation*}
\Gamma_{\gamma,1}(\Phi_1, \Phi_2,\rho)= (\nabla\Phi_1, \nabla\Phi_2)\rho^{\gamma-1},\end{equation*}
\textrm{where $\Phi_1$, $\Phi_2\in C^{\infty}(M)$.}

Denote the $\gamma$--Gamma two operator $\Gamma_{\gamma,2}\colon C^{\infty}(M)\times C^{\infty}(M)\times \mathcal{P}\rightarrow C^{\infty}(M)$ by
\begin{equation*}
\Gamma_{\gamma,2}(\Phi_1,\Phi_2,\rho)=\frac{\gamma}{2} L_\gamma \Gamma_{\gamma,1}(\Phi_1, \Phi_2,\rho)-\frac{1}{2}\Gamma_{\gamma,1}(\Phi_1, L_\gamma \Phi_2,\rho)-\frac{1}{2}\Gamma_{\gamma,1}(\Phi_2, L_\gamma \Phi_1,\rho),
\end{equation*}
\textrm{where $\Phi_1$, $\Phi_2\in C^{\infty}(M)$.}
\end{definition}
\begin{remark}
We note that when $\gamma=1$, we recover the classical iterative Bakry--{\'E}mery operators. Here $\Gamma_{1,1}$ and $\Gamma_{1,2}$ are independent of $\rho$ with 
\begin{equation*}
\Gamma_{1,1}(\Phi, \Phi)=(\nabla\Phi, \nabla\Phi)\quad\textrm{and}\quad \Gamma_{1,2}(\Phi, \Phi)=\frac{1}{2}L_1\Gamma_{1,1}(\Phi, \Phi)-\Gamma_{1,1}(\Phi, L_1\Phi),
\end{equation*} 
where $L_1=(\nabla \log\mu, \nabla\cdot)+\Delta$ is the generator of classical Langevin drift diffusion process. In addition, when $\gamma\neq 1$, the generalized Bakery--{\'E}mery Gamma one and Gamma two operators depend on the current density $\rho$. In other words, they are mean-field formulations of Gamma operators. 
\end{remark}

We next prove an equality to bridge generalized Bakry--{\'E}mery calculus and Hessian operator of $\gamma$--divergence in density manifold.
\begin{proposition}\label{prop10}
\begin{equation*}
\textrm{Hess}_g\mathcal{D}_\gamma(\rho\|\mu)(\sigma_1,\sigma_2)=\int \Gamma_{\gamma,2}(\Phi_1,\Phi_2,\rho)(x)\rho(x)dx,
\end{equation*}
where $\sigma_i=-\nabla\cdot(\rho^\gamma \nabla\Phi_i)\in T_\rho\mathcal{P},$
and $\Phi_i\in C^\infty(M)$ with $i=1,2$.
\end{proposition}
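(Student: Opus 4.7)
The plan is to verify the identity by reducing to the diagonal case $\Phi_1 = \Phi_2 = \Phi$ and matching the resulting scalar identity against the closed-form Hessian computed in Lemma \ref{lemma7}. Both sides are symmetric bilinear forms in $(\Phi_1, \Phi_2)$: the right-hand side is symmetric by construction of $\Gamma_{\gamma,2}$, and the left-hand side pulls back from the symmetric Riemannian Hessian via the isomorphism $\sigma_i \leftrightarrow \Phi_i$ (modulo constants) provided by $-\Delta_{\rho^\gamma}$. Hence the diagonal case determines the general case by polarization.

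On the diagonal, the definition gives
\begin{equation*}
\Gamma_{\gamma,2}(\Phi,\Phi,\rho) = \frac{\gamma}{2}\, L_\gamma\bigl(\|\nabla\Phi\|^2\, \rho^{\gamma-1}\bigr) - (\nabla\Phi,\nabla L_\gamma\Phi)\,\rho^{\gamma-1}.
\end{equation*}
I would expand using $L_\gamma\Psi = \bigl(\tfrac{\gamma}{\gamma-1}\nabla\mu^{\gamma-1}, \nabla\Psi\bigr) + \mu^{\gamma-1}\Delta\Psi$, multiply by $\rho$, and integrate over $M$. The two highest-order contributions are $\tfrac{\gamma}{2}\mu^{\gamma-1}\rho^\gamma \Delta\|\nabla\Phi\|^2$ and $-\mu^{\gamma-1}\rho^\gamma(\nabla\Phi, \nabla\Delta\Phi)$; these combine via Bochner's formula
\begin{equation*}
\tfrac{1}{2}\Delta\|\nabla\Phi\|^2 - (\nabla\Phi, \nabla\Delta\Phi) = \textrm{Ric}(\nabla\Phi,\nabla\Phi) + \|\textrm{Hess}\Phi\|^2,
\end{equation*}
delivering the $\mu^{\gamma-1}\rho^\gamma[\textrm{Ric}(\nabla\Phi,\nabla\Phi) + \|\textrm{Hess}\Phi\|^2]$ term in Lemma \ref{lemma7}. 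The remaining lower-order pieces, produced by the $\nabla\mu^{\gamma-1}$ part of $L_\gamma$ and by $\nabla\rho^{\gamma-1}$ coming from the product rule on $\|\nabla\Phi\|^2 \rho^{\gamma-1}$, are reorganized by repeated integration by parts against $\rho\, dx$, generating the $\textrm{Hess}\mu^{\gamma-1}$, $\Delta\mu^{\gamma-1}$, and $J(\Phi,\Phi)$ contributions demanded by Lemma \ref{lemma7}.

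The main obstacle is the bookkeeping forced by the mismatch between $\mu$ (appearing in $L_\gamma$) and $\rho$ (appearing in $\Gamma_{\gamma,1}$). Expanding $L_\gamma(\|\nabla\Phi\|^2\rho^{\gamma-1})$ by the product rule generates many cross-terms of the form $\mu^{\gamma-1}(\nabla\|\nabla\Phi\|^2, \nabla\rho^{\gamma-1})$, $(\nabla\mu^{\gamma-1}, \nabla\rho^{\gamma-1})\|\nabla\Phi\|^2$, and $\textrm{Hess}\Phi(\nabla\mu^{\gamma-1}, \nabla\Phi)\rho^\gamma$. Tracking how these reassemble into the precise combination $\gamma(\gamma-1)\rho^\gamma\mu^{\gamma-1}J(\Phi,\Phi)$ together with the pure $\mu$-dependent terms $-\Delta\mu^{\gamma-1} - \tfrac{1}{\gamma-1}\textrm{Hess}\mu^{\gamma-1}$ is where the bulk of the computation lies; it closely mirrors the grouping of terms $(e1)$--$(e4)$ in the proof of Lemma \ref{lemma7}, which suggests that a direct term-by-term comparison with that computation gives the cleanest verification.
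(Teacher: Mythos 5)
Your strategy is sound but genuinely different from the paper's. The paper never touches Lemma \ref{lemma7} or Bochner's formula here: it starts from the abstract Hessian formula \eqref{Riem}, rewrites the Christoffel contribution in terms of $\Gamma_{1,1}$ and $\Gamma_{\gamma,1}$, and then proves two integration-by-parts claims whose key point is that the specific first and second variations of $\mathcal{D}_\gamma$ (namely $\delta^2\mathcal{D}_\gamma=\rho^{-\gamma}\mu^{\gamma-1}$ and $\nabla\delta\mathcal{D}_\gamma=\rho^{-\gamma}\mu^{\gamma-1}\nabla\rho-\rho^{1-\gamma}\mu^{\gamma-2}\nabla\mu$) make the combination $\delta^2\mathcal{D}_\gamma\,\nabla\cdot(\rho^\gamma\nabla\Phi_1)-\gamma\,\Gamma_{\gamma,1}(\delta\mathcal{D}_\gamma,\Phi_1)$ collapse to $L_\gamma\Phi_1$, while the remaining Christoffel term is the adjoint of $L_\gamma$ acting on $\Gamma_{\gamma,1}(\Phi_1,\Phi_2)$. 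That structural cancellation is what produces the three terms of $\Gamma_{\gamma,2}$ directly, and it explains \emph{why} the Hessian has the Gamma-two form. Your route --- polarize to the diagonal (legitimate, since both sides are symmetric bilinear forms in $(\Phi_1,\Phi_2)$), then expand $\int\Gamma_{\gamma,2}(\Phi,\Phi,\rho)\rho\,dx$ and match it against the explicit curvature formula of Lemma \ref{lemma7} --- is valid in principle, but it uses Lemma \ref{lemma7} as an input and essentially repeats that lemma's long computation a second time on the $\Gamma_{\gamma,2}$ side; the paper's argument is shorter and independent of the Bochner identity. One imprecision to fix if you carry this out: the two leading terms you isolate are $\tfrac{\gamma}{2}\mu^{\gamma-1}\rho^\gamma\Delta\|\nabla\Phi\|^2$ and $-\mu^{\gamma-1}\rho^\gamma(\nabla\Phi,\nabla\Delta\Phi)$, and Bochner's formula consumes only $\tfrac{1}{2}\Delta\|\nabla\Phi\|^2$ of the first; the leftover $\tfrac{\gamma-1}{2}\mu^{\gamma-1}\rho^\gamma\Delta\|\nabla\Phi\|^2$ must be integrated by parts and absorbed into the $\Delta\mu^{\gamma-1}$, $\textrm{Hess}\mu^{\gamma-1}$ and $J(\Phi,\Phi)$ terms, so the ``delivering the Ricci term'' step is not as immediate as stated. (Also, the grouping you cite as $(e1)$--$(e4)$ does not exist in the paper; the relevant bookkeeping is in the terms $(b1)$--$(b4)$, $(c1)$--$(c4)$ and $(e)$, $(e1)$ of Lemma \ref{lemma7}'s proof.)
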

\begin{proof}
For simplicity of presentation, in the proof, we omit the notation of $\rho$ with the generalized Gamma operators, e.g. $\Gamma_{\gamma,1}(\Phi_1,\Phi_2):=\Gamma_{\gamma,1}(\Phi_1,\Phi_2,\rho)$.

Let us recalculate the Hessian operator of $\mathcal{D}_\gamma$ in $(\mathcal{P}, g)$ by using \eqref{Riem} directly. Using the generalized iterative operators, we reformulate \eqref{Riem} as follows:
\begin{equation}\label{Hessa}
\begin{split}
\textrm{Hess}_g\mathcal{D}_\gamma(\rho\|\mu)(\sigma_1, \sigma_2)=&\int \delta^2\mathcal{D}_\gamma\Big(\nabla\cdot(\rho^\gamma \nabla\Phi_1)\Big)\Big(\nabla\cdot(\rho^\gamma \nabla\Phi_2)\Big)dx\\
&+ \frac{\gamma}{2}\int\Big\{\Gamma_{1,1}(\Gamma_{\gamma,1}(\delta\mathcal{D}_\gamma,\Phi_1), \Phi_2)+\Gamma_{1,1}(\Gamma_{\gamma,1}(\delta\mathcal{D}_\gamma,\Phi_2), \Phi_1)\\
&\hspace{1cm}-\Gamma_{1,1}(\Gamma_{\gamma,1}(\Phi_1,\Phi_2), \delta\mathcal{D}_\gamma)\Big\} \rho^\gamma dx.
\end{split}
\end{equation}
We next rewrite \eqref{Hessa} in three terms. First, we prove the following claim. 

\noindent\textbf{Claim 1:}
\begin{equation}\label{step1}
\begin{split}
&\frac{1}{2}\int\Gamma_{\gamma,1}(\Phi_1, L_\gamma \Phi_2)\rho dx\\=&\frac{1}{2}\int \delta^2\mathcal{D}_\gamma\Big(\nabla\cdot(\rho^\gamma \nabla\Phi_1)\Big)\Big(\nabla\cdot(\rho^\gamma \nabla\Phi_2)\Big)+\gamma\Gamma_{1,1}(\Gamma_{\gamma,1}(\delta\mathcal{D}_\gamma,\Phi_1), \Phi_2)\rho^\gamma dx.
\end{split}
\end{equation}
\begin{proof}[Proof of Claim 1]
Notice
\begin{equation*}
\int \Gamma_{1,1}(\Gamma_{\gamma,1}(\delta\mathcal{D}_\gamma,\Phi_1), \Phi_2)\rho^\gamma dx=-\int \nabla\cdot(\rho^\gamma \nabla\Phi_2) \Gamma_{\gamma,1}(\delta\mathcal{D}_\gamma,\Phi_1) dx.
\end{equation*}
and 
\begin{equation*}
\nabla\cdot(\rho^\gamma \nabla\Phi_1)=(\nabla\rho^\gamma, \nabla\Phi_1)+\rho^\gamma\Delta\Phi_1.
\end{equation*}
The above two facts show that 
\begin{equation*}
\begin{split}
\textrm{R.H.S. of \eqref{step1}}=&\frac{1}{2}\int \nabla\cdot(\rho^\gamma \nabla\Phi_2)\Big\{\delta^2\mathcal{D}_\gamma \nabla\cdot(\rho^\gamma \nabla\Phi_1)-\gamma\Gamma_{\gamma,1}(\delta\mathcal{D}_\gamma,\Phi_1)\Big\} dx\\
=&\frac{1}{2}\int \nabla\cdot(\rho^\gamma \nabla\Phi_2)\Big\{\delta^2\mathcal{D}_\gamma (\nabla \rho^\gamma, \nabla\Phi_1)+ \delta^2\mathcal{D}_\gamma \rho^\gamma\Delta\Phi_1-\gamma(\nabla\delta\mathcal{D}_\gamma,\nabla\Phi_1)\rho^{\gamma-1}\Big\} dx.
\end{split}
\end{equation*}
Using the fact that $\delta^2\mathcal{D}_\gamma=\rho^{-\gamma}\mu^{\gamma-1}$ and $\nabla\delta\mathcal{D}_\gamma= \rho^{-\gamma}\mu^{\gamma-1}\nabla\rho-\rho^{1-\gamma}\mu^{\gamma-2}\nabla\mu$.
\begin{equation*}
\begin{split}
\textrm{R.H.S. of \eqref{step1}}=&\frac{1}{2}\int \nabla\cdot(\rho^\gamma \nabla\Phi_2)\Big\{\rho^{-\gamma}\mu^{\gamma-1}(\nabla \rho^\gamma, \nabla\Phi_1)+ \rho^{-\gamma}\mu^{\gamma-1} \rho^\gamma\Delta\Phi_1\\
&\hspace{2.8cm}-\gamma(\rho^{-\gamma}\mu^{\gamma-1}\nabla\rho-\rho^{1-\gamma}\mu^{\gamma-2}\nabla\mu,\nabla\Phi_1)\rho^{\gamma-1}\Big\} dx\\
=&\frac{1}{2}\int \nabla\cdot(\rho^\gamma \nabla\Phi_2)\Big\{\mu^{\gamma-1}\Delta\Phi_1+\gamma(\mu^{\gamma-2}\nabla\mu,\nabla\Phi_1)\Big\} dx\\
=&\frac{1}{2}\int \nabla\cdot(\rho^\gamma \nabla\Phi_2) L_\gamma\Phi_1 dx\\
=&-\frac{1}{2}\int (\nabla L_\gamma\Phi_1, \nabla\Phi_2)\rho^\gamma dx\\
=&-\frac{1}{2}\int \Gamma_{\gamma,1}(L_\gamma\Phi_1, \Phi_2)\rho dx,
\end{split}
\end{equation*}
where the second last equality holds by the integration by parts formula. 
\end{proof}

Secondly, by switching $\Phi_1$ and $\Phi_2$ in Claim 1, we have 
\begin{equation}\label{step2}
\begin{split}
&\frac{1}{2}\int\Gamma_{\gamma,1}(\Phi_2, L_\gamma \Phi_1) \rho dx\\=&\int \frac{1}{2}\delta^2\mathcal{D}_\gamma\Big(\nabla\cdot(\rho^\gamma \nabla\Phi_1)\Big)\Big(\nabla\cdot(\rho^\gamma \nabla\Phi_2)\Big)+\frac{\gamma}{2}\Gamma_{1,1}(\Gamma_{\gamma,1}(\delta\mathcal{D}_\gamma,\Phi_2), \Phi_1)\rho^\gamma dx.
\end{split}
\end{equation}

Thirdly, we show the following claim. 

\noindent\textbf{Claim 2:}
\begin{equation}\label{step3}
\begin{split}
\int\Gamma_{\gamma,1}(\Phi_1, L_\gamma \Phi_2) \rho dx=\int \Gamma_{1,1}(\Gamma_{\gamma,1}(\Phi_1,\Phi_2), \delta\mathcal{D}_\gamma) \rho^\gamma dx.
\end{split}
\end{equation}
\begin{proof}[Proof of Claim 2]
Here
\begin{equation*}
\begin{split}
\textrm{R.H.S. of \eqref{step3}}=&\int \Gamma_{1,1}(\Gamma_{\gamma,1}(\Phi_1,\Phi_2), \delta\mathcal{D}_\gamma) \rho^\gamma dx\\
=& -\int \nabla\cdot(\rho^\gamma \nabla\delta\mathcal{D}_\gamma) \Gamma_{\gamma,1}(\Phi_1,\Phi_2) dx\\
=& -\int  L^*_\gamma\rho \Gamma_{\gamma,1}(\Phi_1,\Phi_2) dx\\
=&\int L_\gamma\Gamma_{\gamma,1}(\Phi_1,\Phi_2)\rho dx,
\end{split}
\end{equation*}
where the second equality uses the fact that 
\begin{equation*}
\nabla\cdot(\rho^\gamma \nabla\delta\mathcal{D}_\gamma)=\nabla\cdot(\mu^\gamma \nabla\frac{\rho}{\mu})=L^*\rho,
\end{equation*}
and the last equality holds because $L^*_\gamma$ is the adjoint operator $L_\gamma$ in $L^2(\rho)$.
\end{proof}

By summing \eqref{step1}, \eqref{step2} and $\frac{\gamma}{2}$ times \eqref{step3} and using \eqref{Hessa}, we have
\begin{equation*}
\begin{split}
&\textrm{Hess}_g\mathcal{D}_\gamma(\rho\|\mu)(\sigma_1, \sigma_2)\\
=&\int \Big\{-\frac{1}{2}\Gamma_{\gamma,1}(\Phi_1, L_\gamma \Phi_2)-\frac{1}{2}\Gamma_{\gamma,1}(\Phi_2, L_\gamma \Phi_1)+\frac{\gamma}{2} L_\gamma \Gamma_{\gamma,1}(\Phi_1, \Phi_2) \Big\}\rho(x)dx\\
=& \int \Gamma_{\gamma,2}(\Phi_1,\Phi_2)(x)\rho(x)dx.
\end{split}
\end{equation*}
\end{proof}

We last point out that the generalized Bakry--{\'E}mery iterative calculus implies generalized hypercontractivity. 
\begin{proposition}[Generalized Bakry--{\'E}mery criterion]
If there exists a constant $\kappa>0$, such that
\begin{equation}\label{a}
 \int \Gamma_{\gamma, 2}(\Phi,\Phi,\rho)(x)\rho(x)dx \geq \kappa \int \Gamma_{\gamma,1}(\Phi,\Phi,\rho)(x)\rho(x)dx, 
\end{equation}
 for $\Phi=\frac{1}{1-\gamma}(\frac{\rho}{\mu})^{1-\gamma}$, with respect to any $\rho\in \mathcal{P}$.
Then the generalized hypercontractivity \eqref{hyper} and the generalized Log-Sobolev inequality \eqref{inequality} hold. 
\end{proposition}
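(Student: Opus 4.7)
The plan is to reduce the hypothesis \eqref{a} to the abstract Hessian--gradient inequality
\[
\textrm{Hess}_g\mathcal{D}_\gamma(\rho\|\mu)(\textrm{grad}_g\mathcal{D}_\gamma(\rho\|\mu), \textrm{grad}_g\mathcal{D}_\gamma(\rho\|\mu))\geq \kappa\, g_\rho(\textrm{grad}_g\mathcal{D}_\gamma(\rho\|\mu), \textrm{grad}_g\mathcal{D}_\gamma(\rho\|\mu))
\]
and then run the gradient-flow argument from the ``Sketch of proof'' subsection. The distinguished choice $\Phi=\frac{1}{1-\gamma}(\rho/\mu)^{1-\gamma}=\delta\mathcal{D}_\gamma(\rho\|\mu)$ is precisely the cotangent lift of the divergence's gradient: indeed $\textrm{grad}_g\mathcal{D}_\gamma(\rho\|\mu)=-\nabla\cdot(\rho^\gamma\nabla\Phi)$ by Proposition~\ref{prop6}(i), so this is the unique $\Phi$ for which the corresponding $\sigma=-\nabla\cdot(\rho^\gamma\nabla\Phi)$ equals $\textrm{grad}_g\mathcal{D}_\gamma(\rho\|\mu)$.

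First I would verify the two identifications of the sides of \eqref{a} with their density-manifold counterparts. For the right-hand side,
\[
\int \Gamma_{\gamma,1}(\Phi,\Phi,\rho)\rho\, dx=\int (\nabla\Phi,\nabla\Phi)\rho^\gamma\, dx=g_\rho(\textrm{grad}_g\mathcal{D}_\gamma(\rho\|\mu),\textrm{grad}_g\mathcal{D}_\gamma(\rho\|\mu))=\mathcal{I}_\gamma(\rho\|\mu),
\]
where the last equality is Lemma (ii) in Section~\ref{section3}. For the left-hand side, Proposition~\ref{prop10} applied with $\Phi_1=\Phi_2=\Phi$ gives
\[
\int \Gamma_{\gamma,2}(\Phi,\Phi,\rho)\rho\, dx=\textrm{Hess}_g\mathcal{D}_\gamma(\rho\|\mu)(\textrm{grad}_g\mathcal{D}_\gamma(\rho\|\mu),\textrm{grad}_g\mathcal{D}_\gamma(\rho\|\mu)).
\]
Hypothesis \eqref{a} therefore says exactly that $\textrm{Hess}_g\mathcal{D}_\gamma\succeq \kappa\, g$ when evaluated in the gradient direction, for every $\rho\in\mathcal{P}$.

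Next I would run the gradient-flow interpolation exactly as in Section~\ref{section4}. Let $\rho_t$ solve $\partial_t\rho_t=-\textrm{grad}_g\mathcal{D}_\gamma(\rho_t\|\mu)=L_\gamma^*\rho_t$; then
\[
\frac{d}{dt}\mathcal{D}_\gamma(\rho_t\|\mu)=-\mathcal{I}_\gamma(\rho_t\|\mu),\qquad \frac{d^2}{dt^2}\mathcal{D}_\gamma(\rho_t\|\mu)=2\,\textrm{Hess}_g\mathcal{D}_\gamma(\rho_t\|\mu)(\partial_t\rho_t,\partial_t\rho_t),
\]
so the established bound becomes $\frac{d^2}{dt^2}\mathcal{D}_\gamma(\rho_t\|\mu)\geq -2\kappa\frac{d}{dt}\mathcal{D}_\gamma(\rho_t\|\mu)$. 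Integrating from $t$ to $\infty$ (and using $\mathcal{D}_\gamma(\rho_t\|\mu)\to 0$, which is itself secured by the argument) yields $\mathcal{I}_\gamma(\rho_t\|\mu)\geq 2\kappa\,\mathcal{D}_\gamma(\rho_t\|\mu)$; evaluating at $t=0$ with arbitrary initial datum proves the generalized Log-Sobolev inequality \eqref{inequality}, and then Gr\"onwall applied to $\frac{d}{dt}\mathcal{D}_\gamma(\rho_t\|\mu)\leq -2\kappa\mathcal{D}_\gamma(\rho_t\|\mu)$ delivers the exponential decay \eqref{hyper}.

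The only genuinely nontrivial step is the first identification, i.e.\ recognizing that the mean-field Gamma-two operator contracted against $\rho$ reproduces the Hessian of $\mathcal{D}_\gamma$ along the gradient direction; this is exactly the content of Proposition~\ref{prop10}, which I may invoke. Everything else is a routine packaging: once \eqref{a} is rewritten as a Hessian inequality, the convergence and functional inequalities follow from the Otto-type calculus already set up, with no further estimation needed (in particular, no Bochner manipulations or bilinear-form bounds as in Lemmas~\ref{lemma7}--\ref{lemma8}, since those have been absorbed into the statement of \eqref{a}).
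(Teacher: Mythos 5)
Your proposal is correct and follows essentially the same route as the paper: the paper's own proof simply invokes Proposition~\ref{prop10} together with the gradient-flow formulation \eqref{compare} from the sketch in Section~\ref{section4}, which is exactly the identification and packaging you carry out (with the welcome extra detail of checking that $\Phi=\delta\mathcal{D}_\gamma(\rho\|\mu)$ makes $\sigma$ the gradient direction and that the Gamma-one side reproduces $\mathcal{I}_\gamma$).
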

\begin{remark}
Our generalized Bakry-{\'E}mery operators follow the proof in proposition 19 of \cite{LiG1}. In other words, when the divergence functional is the relative entropy, i.e. $\gamma=1$, we have the classical Bakry-{\'E}mery iterative calculus. 
For generalized divergence functional, we introduce the generalized iterative Bakry-{\'E}mery calculus.
\end{remark}
\begin{remark}
When $\gamma=1$ or $\gamma=0$, the ratio between generalized Gamma two operator and Gamma one operator gives the bound in \eqref{a}. This is not the case for $\gamma\neq 1,0$. In general, we need to apply the mean field (integral formula w.r.t $\rho$) of the Gamma two operator to bound the Gamma one operator, and then derive the related Log-Sobolev inequalities. 
\end{remark}
\begin{proof}
Here the proof applies Proposition \ref{prop10} and the gradient flow formulation \eqref{compare} to prove Theorem \ref{thm}.  
\end{proof}
As a summary, we show the generalized Bakry--{\'E}mery criterion \eqref{a} here, and estimate its precise bound in Theorem \ref{thm}. Besides, we comment on major differences and difficulties between generalized Bakry--{\'E}mery criterions and classical ones. The Hessian operator in generalized density manifold involves an additional quadratic form $J(\Phi, \Phi)$.  Thus the smallest eigenvalue of Hessian operator in density manifold is not enough to provide a lower bound for the convergence rate of generalized drift-diffusion processes. Here, we carefully derive the global behavior of dynamics. This is to control the additional quadratic form along with the gradient flow for any $\rho$. Besides, a local viewpoint is provided for establishing the Poincar{\'e} inequality, which follows local behavior of dynamics, i.e., the Hessian operator in density manifold at the minimizer $\mu$. This local property relates to the integral formula, known as the Yano's formula.

\section{Discussion}
In this paper, we study the diffusion hypercontractivity for $\gamma$--drift-diffusion process, and prove generalized Log-Sobolev, Poincar{\'e} and Talagrand inequalities. 
Firstly, using $\mathcal{D}_\gamma$ as the Lyapunov function, the global exponential convergence of $\gamma$--drift-diffusion process is presented for $\gamma\in [0,1]$. It is to estimate the smallest eigenvalue of Hessian operator in density manifold along with the gradient flow for any $\rho\in \mathcal{P}$. Secondly, the local behavior of $\gamma$-drift diffusion process is shown for any $\gamma\in \mathbb{R}$. It is to estimate the smallest eigenvalue of Hessian operator at the reference measure $\mu$. This local property allows us to obtain a class of generalized Poincar{\'e} inequalities. Besides, we identify the generalized Poincar{\'e} inequality and Yano's formula. Lastly, our approach can be formulated into the generalized Bakry--{\'E}mery iterative operators. Here, the Gamma one and Gamma two operators are mean-field based, which depend on the current density; see related studies in probability models \cite{LiG}. In future work, we will study more general mean-field Bakry--{\'E}mery conditions for related diffusion hypercontractivity and functional inequalities. 

\newpage\section*{Notations}
We apply the following notations in this paper.

\begin{tabular}{|l|c|}
\hline 
Base manifold & $M$\\
Metric & $(\cdot,\cdot)$\\
Norm & $\|\cdot\|$\\
Divergence operator & $\nabla\cdot$\\
Gradient operator & $\nabla$\\
Hessian operator & $\textrm{Hess}$\\
\hline
Density manifold & $\mathcal{P}$ \\
Probability density & $\rho$ \\
Reference density & $\mu$\\
Tangent space & $T_\rho\mathcal{P}$ \\
Cotangent space & $T_\rho^*\mathcal{P}$\\
Density manifold metric tensor & $g_\rho$\\
Weighted Laplacian operator & $\Delta_h=\nabla\cdot(h\nabla)$\\
First $L^2$ variation & $\delta$\\
Second $L^2$ variation &$\delta^2$ \\ 
Gradient operator & $\textrm{grad}_g$\\
\textrm{Hess}ian operator &$\textrm{\textrm{Hess}}_g$ \\
Christoffel symbol & $\Gamma_\rho(\cdot, \cdot)$\\
Tangent bundle & $(\rho, \sigma)\in T\mathcal{P}$\\
Cotangent bundle & $(\rho, \Phi)\in T^*\mathcal{P}$\\
\hline
$\gamma$--Divergence & $\mathcal{D}_\gamma$\\
$\gamma$--Fisher information &$\mathcal{I}_\gamma$\\
$\gamma$--Wasserstein distance & $\mathcal{W}_\gamma$\\
$\gamma$--Diffusion process generator & $L_\gamma$\\
$\gamma$--Gamma one operator & $\Gamma_{\gamma, 1}$\\
$\gamma$--Gamma two operator & $\Gamma_{\gamma, 2}$\\
\hline
Log Sobolev constant & $\kappa$\\
Poincar{\'e} constant & $\lambda$\\
\hline 
\end{tabular}

\begin{thebibliography}{10}

\bibitem{BE}
D.~Bakry and M.~{\'E}mery.
\newblock {Diffusions hypercontractives}.
\newblock {\em S{\'e}minaire de probabilit{\'e}s de Strasbourg}, 19:177--206,
  1985.

\bibitem{BGL}  
D. Bakry, I. Gentil, M. Ledoux.
\newblock{Logarithmic Sobolev Inequalities.}
\newblock{In: Analysis and Geometry of Markov Diffusion Operators. Grundlehren der mathematischen Wissenschaften (A Series of Comprehensive Studies in Mathematics), vol 348. Springer, Cham, 2014.}
  
 \bibitem{Gentil}
F. Bolley, I. Gentil.
\newblock{Phi-entropy inequalities for diffusion semigroups.}
\newblock{\em Journal de MathŽmatiques Pures et AppliquŽes}, 93(5): 449-473, 2010.
    
\bibitem{C1}
\newblock{J. A. Carrillo, S. Lisini, G. SavarŽ and D. Slepcev. }
\newblock{Nonlinear mobility continuity equations and generalized displacement convexity.}
\newblock{\em Journal of Functional Analysis} 258(4):1273-1309, 2009. 

\bibitem{ChowLiZhou2018_entropy}
S.-N. Chow, W.~Li, and H.~Zhou.
\newblock Entropy dissipation of {{Fokker}}-{{Planck}} equations on graphs.
\newblock {\em Discrete \& Continuous Dynamical Systems, series A}, 38(10): 4929-4950, 2018.

\bibitem{CoverThomas1991_elements}
T.~M. Cover and J.~A. Thomas.
\newblock {\em Elements of Information Theory}.
\newblock Wiley Series in Telecommunications. {Wiley}, New York, 1991.

\bibitem{CsiszarShields2004_information}
I.~Csisz\'ar and P.~C. Shields.
\newblock Information {{Theory}} and {{Statistics}}: {{A Tutorial}}.
\newblock {\em Foundations and Trends\texttrademark{} in Communications and
  Information Theory}, 1(4):417--528, 2004.

\bibitem{O1}
J. Dolbeault, B. Nazaret, and G. Savare
\newblock{A new class of transport distances.}
\newblock{\em Calculus of Variations and Partial Differential Equations}, (2):193--231, 2009. 




\bibitem{O2}
J Dolbeault, B Nazaret, G Savare 
\newblock{From PoincarŽ to logarithmic Sobolev inequalities: a gradient flow approach.}
\newblock{\em SIAM Journal on Mathematical Analysis}, 2012.

\bibitem{LSI}
L.~Gross.
\newblock Logarithmic Sobolev Inequalities. 
\newblock{\em American Journal of Mathematics}, 97 (4):1061-1083, 1975.



\bibitem{JKO}
R.~Jordan, D.~Kinderlehrer, and F.~Otto.
\newblock The variational formulation of the {F}okker-{P}lanck equation.
\newblock {\em SIAM J. Math. Anal.}, 29(1):1--17, 1998.

\bibitem{EMethod2}
Ansgar Juengel.
\newblock{Entropy Methods for Diffusive Partial Differential Equations.}
\newblock{\em Springer Briefs in Mathematics}, 2016.

\bibitem{Lafferty}
J.~D. Lafferty.
\newblock The density manifold and configuration space quantization.
\newblock {\em Transactions of the American Mathematical Society},
  305(2):699--741, 1988.


\bibitem{Li-thesis}
W.~Li.
\newblock A study on Stochastic differential equations and Fokker-Planck equations with applications. 
\newblock {\em phd thesis}, 2016.


\bibitem{LiG1}
W.~Li.
\newblock Geometry of probability simplex via optimal transport.
\newblock {\em arXiv:1803.06360 [math]}, 2018.

\bibitem{LiG}
W.~Li and G.~Montufar. 
\newblock Ricci curvature for parametric statistics via optimal transport.
\newblock {\em arXiv:1807.07095}, 2018.


\bibitem{Li}
W.~Li, and L.~Ying. 
\newblock Hessian transport gradient flows.
\newblock {\em arXiv:1905.04556 [math]}, 2019.


\bibitem{Lott_Villani}
J.~Lott and C.~Villani.
\newblock Ricci {{Curvature}} for {{Metric}}-{{Measure Spaces}} via {{Optimal
  Transport}}.
\newblock {\em Annals of Mathematics}, 169(3):903--991, 2009.


\bibitem{otto2001}
F.~Otto.
\newblock The geometry of dissipative evolution equations: the porous medium
  equation.
\newblock {\em Communications in Partial Differential Equations},
  26(1-2):101--174, 2001.
  
  \bibitem{OV}
F.~Otto and C.~Villani. 
\newblock{Generalization of an inequality by Talagrand and links with the logarithmic Sobolev inequality.}
\newblock{\em Journal of Functional Analysis}, 173 (2): 361--400, 2000.

\bibitem{strum}
K.-T. Sturm.
\newblock On the {{Geometry}} of {{Metric Measure Spaces}}.
\newblock {\em Acta Mathematica}, 196(1):65--131, 2006.


\bibitem{Villani2009_optimal}
C.~Villani.
\newblock {\em Optimal Transport: Old and New}.
\newblock Number 338 in Grundlehren Der Mathematischen {{Wissenschaften}}.
  {Springer}, Berlin, 2009.


\bibitem{yano1}
Kentaro Yano. 
\newblock{ Some remarks on tensor fields and curvature}.
\newblock {\em Annals of Mathematics}, pages 328--347, 1952.

\bibitem{EMethod1}
Nicola Zamponi.
\newblock{\em Entropy methods for diffusive PDEs}.
\newblock{\em Lecture notes, 2017.}


\bibitem{ZozorBrossier2015_debruijn}
S.~Zozor and J.-M. Brossier.
\newblock de{B}ruijn identities: From {S}hannon, {K}ullback-{L}eibler and
  {F}isher to generalized $\phi$-entropies, $\phi$-divergences and
  $\phi$-{F}isher informations.
\newblock {\em AIP Conference Proceedings}, 1641(1):522--529, 2015.

\end{thebibliography}
\end{document}